\keywords{Switched stochastic systems, Formal synthesis, Safe autonomy, Uncertain Markov decision processes, Wasserstein distance}
\renewcommand{\emph}[1]{\textit{#1}}
\theoremstyle{plain}
\newtheorem{thm}{Theorem}[section]
\newtheorem{lemma}[thm]{Lemma}
\newtheorem{prop}[thm]{Proposition}
\newtheorem*{theorem*}{Theorem}
\newtheorem{rem}[thm]{Remark}
\newtheorem{dfn}[thm]{Definition}
\newtheorem{problem}[thm]{Problem}
\newtheorem{assumption}[thm]{Assumption}
\numberwithin{equation}{section}
\newcommand{\IG}[1]{\textcolor{blue}{[IG: #1]}}
\newcommand{\reals}{\mathbb{R}}
\newcommand{\naturals}{\mathbb{N}}
\newcommand\indicator{\textnormal{\textbf{1}}}
\newcommand{\x}{\textbf{{x}}}   % \textnormal prevents being italicized in Thms etc.
\newcommand{\pu}{\textbf{u}}
\newcommand{\policy}{\sigma}
\newcommand{\optpolicy}{\sigma^*}
\newcommand{\Setofpolicies}{\Sigma}
\newcommand{\FullSet}{X}
\newcommand{\noise}{\textbf{v}}
\newcommand{\distribution}{p}
\newcommand{\mypath}{\omega}
\newcommand{\mypathX}{\mypath_{x}}
\newcommand{\PathFin}{\Omega_{x}^{\rm{fin}}}
\newcommand{\M}{\mathcal{M}}
\newcommand{\I}{\mathcal{I}}
\newcommand{\probDist}{\mathcal{D}}
\newcommand{\Amdp}{A}
\newcommand{\Qmdp}{Q}
\newcommand{\amdp}{a}
\newcommand{\qmdp}{q}
\newcommand{\Qimdp}{Q}
\newcommand{\Aimdp}{A}
\newcommand{\qimdp}{q}
\newcommand{\aimdp}{a}
\newcommand{\pathmdp}{\omega}
\newcommand{\pathmdpfin}{\pathmdp^{\mathrm{fin}}}
\newcommand{\Pathmdp}{\Omega}
\newcommand{\Pathmdpfin}{\Omega^{\mathrm{fin}}}
\newcommand{\PathInf}{\Omega_x^\infty}
\newcommand{\last}{\mathrm{last}}
\newcommand{\adversary}{\xi}
\newcommand{\Adversary}{\Xi}
\begin{document}

\AddToShipoutPictureBG*{%
  \AtPageUpperLeft{%
    \hspace{16.5cm}%
    \raisebox{-1.5cm}{%
      \makebox[0pt][r]{Submitted to \textit{Nonlinear Analysis: Hybrid Systems} (\textit{NAHS}), an IFAC journal, 2024.}}}}

\title[Efficient Strategy Synthesis for Switched Stochastic Systems]{Efficient Strategy Synthesis for Switched Stochastic Systems with Distributional Uncertainty}

\author[I.Gracia]{Ibon Gracia}
%\email{ibon.gracia@colorado.edu}

\author[D. Boskos]{Dimitris Boskos}
%\email{d.boskos@tudelft.nl}

\author[M. Lahijanian]{Morteza Lahijanian}

\author[L. Laurenti]{Luca Laurenti}
%\email{l.laurenti@tudelft.nl}

\author[M. Mazo Jr.]{Manuel Mazo Jr.}
%\email{m.mazo@tudelft.nl}

\thanks{The authors are  with the Smead
Department of Aerospace Engineering Sciences, CU Boulder, Boulder, CO, \tt{\{ibon.gracia, morteza.lahijanian\}@colorado.edu}, and 
the Delft Center for Systems and Control, 
Faculty of Mechanical, Maritime and Materials Engineering, Delft University of Technology \tt{\{d.boskos,l.laurenti,m.mazo\}@tudelft.nl}.}

\maketitle

%\cortext[cor1]{Corresponding author}

\begin{abstract}
%% Text of abstract
%We introduce a framework for the control of discrete-time switched stochastic systems with additive noise of uncertain distributions. We address the formal synthesis of controllers maximizing the probability of reaching a goal region in an arbitrary (not specified) future  while remaining safe (reach-avoid specification). 
% \ML{How about: The synthesis framework aims to maximize the probability of reaching a goal region while remaining safe (reach-avoid specification).} In particular, we consider additive noise with distribution lying in an ambiguity set of distributions that are $\varepsilon-$close, in the Wasserstein distance sense, to a nominal one. 
We introduce a framework for the control of discrete-time switched stochastic systems with uncertain distributions.
In particular, we consider stochastic dynamics with
additive noise whose 
distribution lies in an ambiguity set of distributions that are $\varepsilon-$close, in the Wasserstein distance sense, to a nominal one.
We propose algorithms for the efficient synthesis of distributionally robust control strategies that maximize the satisfaction probability of reach-avoid specifications with either a given or an arbitrary (not specified) time horizon, i.e., unbounded-time reachability.
The framework consists of two main steps: finite abstraction and control synthesis. First, we construct a finite abstraction of the switched stochastic system as a \emph{robust Markov decision process} (robust MDP) that encompasses both the stochasticity of the system and the uncertainty in the noise distribution. Then, we synthesize a strategy that is robust to the distributional uncertainty on the resulting robust MDP. We employ techniques from optimal transport and stochastic programming to 
reduce the strategy synthesis problem to a set of linear programs, and propose a tailored and efficient algorithm to solve them. The resulting strategies are correctly refined into switching strategies for the original stochastic system. We illustrate the efficacy of our framework on various case studies comprising both linear and non-linear switched stochastic systems.

\end{abstract}

%% \linenumbers

%% main text
\section{Introduction}
\label{sec:intro}

%\LL{We will finalize intro at the end, but ideally I see it organized in the following paragraphs: first paragraph about the general motivation on performing formal control for stochastic switched systems. Then, a paragraph comparing with the literature and explaining that how to do formal control in the presence of distributional uncertainty is an open problem. Then, a paragraph on on our approach, contributions, and experimental results. Finally, we can summarize our main contributions in three bullet points.}

Switched stochastic systems are a ubiquitous class of control systems due to their capability to capture digitally controlled physical systems affected by noise. In this class,
% \LL{In a switched stochastic system?}, 
a controller can switch among a finite set of modes to achieve the desired behavior \cite{yin2010hybrid}.  Systems that 
that be represented with
such models are found in various real-world applications, including robotics \cite{luna2015asymptotically} and cyber-physical systems \cite{cauchi2019efficiency}. These systems are often \emph{safety-critical}, requiring formal guarantees of correctness, and their noise characteristics are \emph{uncertain}, with statistical properties only available through the use of statistical estimation techniques and possibly subject to \emph{distributional shifts} \cite{rahimian2019distributionally}. %\IG{Sorry for being picky, but maybe saying it is uncertain could be easier to understand.}\LL{Those are two different problems, because a distribution may be fixed, but uncertain to us or it may change between different runs of the system.} 
However,  formal control synthesis and verification methods for switched stochastic systems typically assume exact and known noise characteristics. %\IG{This is not an assumption in some of Luca's papers, like the one that employs GP regression models, right?}\LL{Nope, this is in fact the key point we need to stress in the related works. With ML you may learn the dynamics, but you assume the noise is fixed and known.}.
We aim to address this fundamental gap in the literature and focus on the following question: \emph{how to derive formal guarantees for stochastic systems with uncertain noise characteristics?}

% We introduce a formal framework for synthesis of robust control strategies for discrete-time switched stochastic systems with uncertain 
% additive noise. More precisely, we consider system evolving 
% according to possibly non-linear dynamics (in each of the modes)  affected by additive noise 
% with uncertain probability distribution belonging to prescribed Wasserstein ambiguity set, i.e., the set of possible
% distributions is defined as a ball of radius $\varepsilon>0$ around a nominal distribution under the 
% Wasserstein distance
% \cite{mohajerin2018data,gao2022distributionally}. In practice, such a set 
% can be estimated using data-driven techniques, see e.g. \cite{mohajerin2018data}. 

In this work, we introduce a formal framework for synthesis of distributionally robust control strategies for discrete-time switched stochastic systems with uncertain noise distribution under reach-avoid specifications.
More precisely, we consider switched stochastic systems with 
general dynamics (possibly non-linear) with additive noise in each mode, and while the exact probability distribution of noise is unknown, we assume it belongs to a prescribed Wasserstein ambiguity set, i.e., the set of possible
distributions defined as a ball of radius $\varepsilon>0$ around a nominal distribution under the  Wasserstein distance~\cite{mohajerin2018data,gao2022distributionally}. 
% In practice 
For instance,
% \LL{I would say For instance instead of "In Practice", as that could also be a modeller choice for the distributional shifts case}, 
such a set can be estimated using data-driven techniques, e.g., see \cite{mohajerin2018data}. 
The probabilistic reach-avoid specification 
can include an arbitrary or predefined time horizon to reach a target set.  
% the time horizon of the problem while finite is not a-priori specified. 
Such specifications aim to establish a lower bound on the probability of reaching a goal region while avoiding undesired configurations, and
are the basis of logical properties such as syntactically-cosafe Linear Temporal Logic (sc-LTL)~\cite{kupferman:FMSD:2001}, LTL over finite traces (LTLf)~\cite{de2013linear},  and Probabilistic Computation Tree Logic (PCTL)~\cite{baier2003model}. 
% We provide a robust strategy synthesis methodology to maximize the probability of the specification's satisfaction for the worst-case choice of a distribution 
% from the given ambiguity set. 

% Our methodology relies on the construction of a finite abstraction of the stochastic system in the form of an uncertain Markov decision process (MDP) \cite{nilim2005robust}, also known as \emph{robust MDP}(r-MDP) \cite{nilim2005robust}. 
% In these r-MDPs, uncertainty in the transition probabilities accounts for the distributional ambiguity in the original system as well as for the discretization error. We leverage recent results from distributionally robust optimization \cite{rahimian2019distributionally} to reduce value iteration over the robust MDPs to the solution of a set of linear programs, thus guaranteeing efficiency. 
% The correctness of our framework is formally proved and we  illustrate our approach on four case studies including both linear and non-linear systems and for various ambiguity sets. Note that synthesis for probabilistic reach-avoid specifications provides the basic building block to address more general temporal logics, such as PCTL \cite{lahijanian2015formal,jackson2021formal} or LTL \cite{cauchi2019efficiency,jackson2021strategy}. 

The framework consists of two main steps: (i) construction of a finite abstraction of the stochastic system in the form of an uncertain Markov decision process (MDP) \cite{nilim2005robust,iyengar2005robust}, also known as \emph{robust MDP}, and (ii) optimal robust strategy synthesis on the robust MDP.
For the abstract construction, we first discretize the state space, and then use the nominal noise distribution (center of the ambiguity set) to compute bounds on the transition probabilities between the discrete regions.  This results in a nominal Interval MDP (IMDP)~\cite{givan2000bounded,lahijanian2015formal} that contains the discretization error but not the distributional uncertainty.  We then lift the ambiguity set from the noise space to the space of transition probabilities of the IMDP, and embed it as uncertainty, resulting in a robust MDP.  Hence, the obtained robust MDP accounts for the distributional ambiguity in the original system as well as the discretization error.  

Next, we synthesize an optimal strategy on the robust MDP that is robust with respect to the uncertain transition probabilities of this model. To that end, we first
introduce a value iteration method to maximize the robust (worst-case) probability of satisfying the reach-avoid specification on the robust MDP.  We prove that, for the unbounded-time specifications, the value iteration converges to the maximum worst-case probability of satisfaction.  
We then leverage recent results from distributionally robust optimization~\cite{rahimian2019distributionally} to reduce value iteration to the solution of a set of linear programs, improving efficiency. 
Further, 
for unbounded-time properties, we show that stationary strategies are sufficient to obtain the optimal worst-case probability, based on which we introduce an algorithm for the extraction of such an optimal and robust strategy.  
We also formally prove the correctness of our framework for the original switched stochastic system.  Finally, we illustrate the framework on four case studies including both linear and non-linear systems and for various ambiguity sets. 

In short, the contributions of this work is fivefold:
\begin{itemize}
    \item a framework for formal reasoning for distributionally-uncertain switched stochastic systems,
    \item an abstraction procedure to construct a finite robust MDP for nonlinear switched stochastic systems with additive uncertain noise,
    \item a robust strategy synthesis algorithm for robust MDPs with unbounded-time reachability specifications,
    \item an efficient robust optimization method via a dual formulation of linear programs, and
    \item a set of illustrative case studies and benchmarks for empirical evaluation of the framework. 
\end{itemize}

\subsection{Previous Work}

A preliminary version of this work appeared in \cite{gracia2023distributionally}, which the current manuscript extends in several aspects. First, in this paper, we present a new algorithmic approach to perform value iteration, which substantially improves the one in \cite{gracia2023distributionally} in terms of computational efficiency: an improvement of two orders of magnitude is observed empirically. Furthermore, work \cite{gracia2023distributionally} only considers finite-time reachability properties, whereas in this paper, we extend our results to unbounded-time reachability properties and prove convergence of value iteration and optimality of stationary strategies in this setting. We also introduce a method of extracting such a strategy, which is often ignored in the literature.
Finally, we extend the experimental evaluations with a system subject to distributional shifts and an example of a system under a complex specification given as a temporal logic formula. 
With the exception of Sections~\ref{sec:unbounded_horizon},\ref{sec:DedicatedAlgorithmValueITeration} and the corresponding Appendices (\ref{app:proofs_section_unbounded_horizon} and \ref{app:proof_dual}), all the theorems and definitions remain the same as in \cite{gracia2023distributionally}. Section \ref{sec:LinearProgrammingValueITeration} has been revised to account for the case of disturbances with a bounded support. The new case studies are presented in Sections~\ref{sec:switched} and \ref{sec:LTLf}. We also benchmark the new algorithm 
in Section~\ref{sec:DedicatedAlgorithmValueITeration} 
against the one in~\cite{gracia2023distributionally} to illustrate its computational superiority.
% same case studies that were already present in our previous work.

\subsection{Related Work}
Formal verification and synthesis algorithms have been studied for swit-\newline ched stochastic systems either employing stochastic barrier functions \cite{santoyo2021barrier} or abstractions in the form of \emph{finite} Markov models \cite{Lahijanian:CDC:2012,lavaei2022dissipativity,dutreix2022abstraction,wolff2012robust,skovbekk2023formal}. Within the latter, many works 
(e.g., \cite{lahijanian2015formal,cauchi2019efficiency, laurenti2020formal,jackson2020safety,jackson2021formal})
have employed \emph{interval Markov decision processes} (IMDPs) as abstract model of choice, a class of Markov decision processes in which the transition probabilities belong to intervals \cite{givan2000bounded,koutsoukos2006computational} and that admits efficient control synthesis algorithms \cite{lahijanian2015formal,cauchi2019efficiency}. 
In the literature it is a common assumption that both the dynamics and noise distribution of the system are assumed to be exactly known. This in practice is an unrealistic assumption and often violated in the presence of, e.g., unmodelled dynamics, distributional shifts, or data-driven components. 

More recently, researchers have proposed the use of machine learning algorithms, including neural networks and Gaussian processes, to synthesise formal control strategies when the dynamics are (partially) unknown or too complex to be modelled \cite{adams2022formal,jackson2021strategy, reed2023promises}. Yet, these works focus on the unknown dynamics and do not address the case of inexact knowledge of the probability distribution of stochastic variables in the dynamics. 
Another line of research tackles the opposite setting, in which the dynamics are known exactly but the probabilistic behavior of the system must be inferred from samples of the random terms \cite{badings2022sampling, mathiesen2023inner}. Our setting is related to the latter in the sense that we consider ambiguity regarding the distribution of the random terms. However, our approach is general with respect to the source of ambiguity, which encompasses the data-driven setting and those of distributionally robust control and robustness against distributional shifts.
%An exception is \cite{wolff2012robust} where the authors obtain an MDP abstraction of the system, assuming perfect knowledge about this one, and then allow for ambiguity in the transition probabilities of the abstraction. Their approach however does not come with formal guarantees of correctness for the underlying stochastic system.

%\LL{Ibon, I moved your discussion in here. Please, check. Also make sure to check they do not have indeed guarantees for the underlying system}
%\IG{Looks good! About the guarantees, they obtain the abstraction by sampling initial states and simulating trajectories. Then they point out to \cite{nilim2005robust}, which explains how to obtain likelyhood regions for a desired confidence level (which I'm not an expert on), but the results they state hold when the number of sampled trajectories is infinite, so I guess it's not formal? Let me know if this makes sense, because I'm not a big expert in statistics.}

%\LL{Next needs to be shortened and written similarly to the above. Also, remove information that are not related to this paper. In particular, I would report distributional robust optimization approaches that do syntesis with ambiguity sets and briefly describe differences. Then, maybe a sentence to describe why Wasserstein ambiguity sets (even tho this may just go in the main part of the intro)}

Ambiguity sets are typically employed in distributionally robust optimization (DRO) to describe sets of probability distributions with respect to which one aims to make robust (optimal) decisions \cite{AS:17}. A typical way of defining an ambiguity set is as a set of probability distributions that are close to some nominal one. Depending on the employed measure of closeness, ambiguity sets can be constructed based on: moment constraints \cite{ED-YY:10,IP:07}, statistical divergences \cite{GCC-LEG:06}, or optimal transport discrepancies \cite{gao2022distributionally,JB-KM:19,JB-KM-FZ:22} like the Wasserstein distance. Wasserstein ambiguity sets, as considered in this paper, constitute a particularly convenient description for ambiguous distributions in data-driven problems. 
The Wasserstein metric penalizes horizontal dislocations between distributions \cite{FS:15}, providing a way to describe ambiguity sets that have finite-sample guarantees of containing the true distribution~\cite{fournier2015rate}. Furthermore, Wasserstein ambiguity sets enable the formulation of tractable DRO problems~\cite{mohajerin2018data}. In \cite{boskos2023high} dynamic aspects of distributional uncertainty under optimal transport ambiguity are studied. In particular, \cite{boskos2023high} studies the evolution of Wasserstein ambiguity sets for systems with unknown state disturbance distribution. Also, \cite{AH-IY:21} developed a risk-aware robot control scheme avoiding dynamic obstacles whose location is dictated by an ambiguous probability distribution.

A class of Markov processes that is closely related to robust MDPs are distributionally robust Markov decision processes (DR-MDPs) \cite{xu2010distributionally,yang2017convex,JGC-CK:21}. These are MDPs with transition probabilities dependent on uncertain parameters that lie in some ambiguity set. 
%As a consequence, they are substantially different from the robust MDP considered in this paper. 
%
DR-MDPs and robust MDPs are substantially different as the latter do not consider any additional probabilistic structure over the ambiguous distributions to signify which uncertainty model is more likely to occur. 
Planning algorithms for complex specifications and diverse classes of robust Markov models have been already considered in the literature \cite{lahijanian2015formal,wolff2012robust,nilim2005robust,puggelli2013polynomial}.
However, none of them addresses the main contribution of this work, which is how to combine these algorithms with optimal transport techniques to formally abstract and synthesize strategies for continuous-space dynamical systems with uncertain distribution of the noise. 
Moreover, the robust MDPs over which we synthesize strategies in this paper generalize the ones considered in \cite{wolff2012robust, puggelli2013polynomial}. These works assume that the transition probabilities of their robust MDPs vanish at the exact same states for all transition matrices. Here, we alleviate this restriction, which provides us substantial modeling flexibility as we can handle for instance data-driven ambiguity sets whose reference distribution, obtained from samples, is bounded even when the data-generating distribution may not be. This leads to the final key contribution of our paper, which is the rigorous synthesis of optimal strategies for infinite horizon reachability specifications for a broader class of robust MDPs. In addition, unlike \cite{wolff2012robust, puggelli2013polynomial}, our approach does not require identifying and eliminating end components as a pre-processing step, which suffers from exponential complexity for general robust MDPs \cite{weininger2019satisfiability}.

\section{Basic Notation}
\label{sec:basic_notation}
The set of non-negative integers is denoted by $\mathbb{N}_0 := \mathbb{N}\cup\{0\}$. Given a set $A$,  $|A|$ denotes its cardinality. Given $\ell,m\in\mathbb{N}_0$ with $\ell\leq m$, we employ $[\ell:m]$ for the set $\{\ell,\ell+1,\ldots,m\}$. We also use this notation when $m = \infty$ to denote the set $\{\ell,\ell+1,\ldots\}$. For a separable metric space $X$, $\mathcal{B}(X)$ denotes its Borel $\sigma$-algebra and $\mathcal{D}(X)$ the set of probability distributions on $(X,\mathcal{B}(X))$. When $X$ is discrete and $\gamma\in\mathcal D(X)$, $\gamma(x):=\gamma(\{x\})$ is the probability of the event described by the singleton $\{x\}$. Let $c:X\times X\rightarrow \reals_{\ge 0}$ be a continuous cost function defined over the product space $X\times X$. The optimal transport discrepancy between two probability distributions $p,p'\in\mathcal{D}(X)$ is defined as 
\begin{align}
\label{eq:optimal_transport_discrepancy}
\mathcal{T}_c(p,p') := \inf_{\pi\in\Pi(p,p')} \int_{X\times X}c(x,y)d\pi(x,y),
\end{align}
where $\Pi(p,p')$ is the set of all transport plans between $p$ and $p'$, a.k.a. couplings,  i.e., probability distributions $\pi\in\mathcal{D}(X\times X)$ with marginals $p$ and $p'$, respectively. Since cost $c$ is non-negative, $\mathcal{T}_c$ provides a discrepancy measure between distributions in $\mathcal{D}(X)$. By continuity of $c$, there always exists a transport plan $\pi$ for which the infimum in \eqref{eq:optimal_transport_discrepancy} is attained~\cite[Theorem 1.3]{CV:03}. 

Assume that $X$ is equipped with a metric $d$.  Given $s\ge 1$, we denote by $\mathcal{D}_s(X)$ the set of probability distributions on $X$ with finite $s$-th moment, i.e., $$\mathcal{D}_s(X) = \left\{p\in\mathcal{D}(X) : \int_X d(x,y)^s dp(x) < \infty\:\text{for some}\:y\in X \right \}.$$ The discrepancy $\mathcal W_s:=(\mathcal T_{d^s})^{\frac{1}{s}}$ is then also a metric in the space $\mathcal{D}_s(X)$ coined as the $s$-Wasserstein distance~\cite{CV:03}. Additionally, we use bold symbols, e.g., $\noise$, to denote random variables and non-bold symbols to denote the respective realizations, i.e., $\text{v}$.

\section{Problem Formulation}
\label{sec:problem}

%\DB{I think we are inconsistent in labeling definitions, assumptions, remarks, etc. Better to label all or none, right?}  
%\MM{I don't think it matters much. If you can't find good names for other assumptions, etc. just keep it as is.}
%\DB{OK, great:) then I just delete the boxes}
Consider a discrete-time switched stochastic system given by:
\begin{equation}
\label{eq:system}
    \x_{k+1} = f_{u_k}(\x_k) + \noise_k, 
\end{equation}
where $k\in \naturals,$ $\x_k$ takes values in $\reals^n$,  $u_k \in U$, and $U=\{1,...,m \}$ is a finite set of \textit{modes} or \textit{actions}. For every $u\in U,$ $f_u:\mathbb{R}^n \to \mathbb{R}^n$ is a possibly non-linear continuous function.
%\IG{Can we forget about the ``u"s and only use ``a"s??}\LL{Logically they are distinguished because action set of IMDPs is different from those of the System, but if you want to use same notation, you can as long as we are clear that they are distinguished (in literature it is common to use different notation)}
The noise term $\noise_k$ is an independent random variable takes values in $\reals^n$ with a distribution $\distribution_v^{\text{true}}$ that is identically distributed at each time step. 
% We do not assume knowledge of the exact noise distribution.  Instead, we assume we do take the following assumption.
We assume the exact noise distribution is unknown, but it is $\varepsilon$-close to a given (nominal) distribution as detailed below.
\begin{assumption}
   \label{ass:1}
The distribution $\distribution_v^{\text{true}}$ is $\varepsilon$-close (in the $s$-Wasserstein sense) to a known distribution $\widehat{\distribution}_{v}\in\mathcal{D}_s(\reals^n)$, which we call \emph{nominal}, i.e., $\distribution_v^{\text{true}}\in\mathcal{P}_v := \{p\in \mathcal{D}_s(\reals^n) : \mathcal{W}_s(p,\widehat p_v)\le \varepsilon\}$, where $\mathcal{W}_s$ is determined by the metric $d(x,y) = \|x-y\|$, where $\|\cdot \|$ is the Euclidean norm. 
\end{assumption}
%In what follows we employ the $l_2$ norm and $s = 2$ in the definition of Wasserstein metric.

%\LL{Should we add a sentence somwhere explaining why the 2-Wasserstein is a particularly good choice? } \IG{In the previous version of the paper we kept it general and we only argued about choosing the 2-Wasserstein distance in the case studies, where we actually picked one. I would prefer to keep it like that}
%\DB{I also don't see why we need to use the 2-Wasserstein distance. This may facilitate computing it in closed form for Gaussian distributions and I understand this convenience. But this we can mention in the example. Also from a modeling perspective, I'd often prefer e.g., the 1-distance (why necessarily penalize transport quadratically). IMO it is better to have it general as in HSCC.}\LL{Makes sense if our methods work for any $s.$ I will change it}

Intuitively, $\x_k$ is a stochastic process driven by the additive noise
$\noise_k$, whose distribution is uncertain but close to a nominal known distribution in the $s$-Wasserstein metric. Consequently, System~\eqref{eq:system} represents a large class of controlled stochastic systems with additive and uncertain noise. 
Such systems arise, for instance, in data-driven settings, where measure concentration results \cite{fournier2015rate} can be employed to  build a Wasserstein ambiguity set from data of $\noise_k$ with high confidence 
\cite{mohajerin2018data}, or in settings in which one wants to synthesize strategies able to cope with distributional 
shifts of the noise.

%with possibly non-linear dynamics which are affected by any additive, random disturbance whose probability is contained in the ambiguity set $\mathcal{P}_v$.
%An instance in which we have a system described as such is in the data-driven setting. In this scenario,using measure concentration results from \cite{fournier2015rate}, one can build a Wasserstein ambiguity set from data of the process $\noise_k$ that contains the distribution of said process with some confidence \cite{mohajerin2018data}. %This allows us to perform formal control of System \eqref{eq:system}.
%Let us describe in more detail the process of obtaining a data-driven ambiguity set: when a finite number of samples $\{\noise^i\}_{i=1}^N$ from $\noise$ are available, we are able to construct the empirical distribution
%\begin{align}
%    \widehat\distribution_v^N = \frac{1}{N}\sum_{i=1}^N\delta_{\noise^i},
%\end{align}
%where $\delta_{\noise^i}$ is the Dirac delta function that concentrates unity mass on $\noise^i$. Then, using the measure concentration results in \cite{fournier2015rate}, we can find the radius $\varepsilon$ of the Wasserstein ball $\mathcal{P}_v$ centered on $\widehat\distribution_v^N$ such that it contains the unknown distribution $\distribution_v^{\text{true}}$ of $\noise$ with given confidence $(1-\beta)\in[0,1]$.
%Another setting in which we have a system described by \eqref{eq:system} and an ambiguity set $\mathcal{P}_v$ is that of robust control: we know the probability of the disturbance, but we want to be robust to small shifts from said disturbance.

Let $\mypathX =x_0 \xrightarrow{u_0} x_1 \xrightarrow{u_1}  \ldots $ be a  
path (trajectory realization) of System \eqref{eq:system}  and denote by $\mypathX(k)=x_{k}$ 
the visited state at time $k$ in the sequence $\mypathX$. 
% Given a path $\mypathX$, 
Further, we denote by 
$\mypathX^k$ the prefix of finite length $k+1$ of $\mypathX$, and by $\last(\mypathX^k)$ the last state of the finite path $\mypathX^k$, i.e., $x_k$.  
We also denote by $\PathInf$ the set of paths of infinite length and by $\PathFin$ the set of all paths with finite length, 
i.e, the set of prefixes  $\mypathX^{k}= x_0 \xrightarrow{u_0} x_1 
\xrightarrow{u_1}  \ldots \xrightarrow{u_{k-1}} x_k$ for all $k\in 
\naturals$.
Given a finite path, a switching strategy selects an action, corresponding to a mode of 
System~\eqref{eq:system}.

\begin{dfn}[Switching Strategy]
    \label{def:switching-strategy}
    %\LL{Should not a strategy takes as a domain $\Pathmdp_x$? Otherwise, for infinite traces, it is not defined how we assign a strategy. }
    A \emph{switching strategy} $\policy_x: \PathFin \to U$ is a function that maps 
    each finite path $\mypathX^{k}\in \PathFin$ to an action $u \in U$. %The set of all switching strategies is denoted by $\Setofpolicies_\x$.
    We say that a strategy is \emph{memoryless (Markovian)} if it only depends on $\last(\mypathX^k)$ and $k$. Furthermore, if a memoryless strategy does not depend on $k$, we call it \emph{stationary}.
\end{dfn}
\noindent

%\LL{Do we allow for the distribution of $\mathbf{v}_k$ to change at each time step? Or is it simply unknown but always the same? Former case is more general, but not sure if we handle it with our theory and not even sure it is that much of interest in practice. }

For any $\distribution_{v}\in \mathcal{P}_{v},$  $u\in U$, $\FullSet\in \mathcal B(\reals^n)$, and $x\in \reals^n$, let 
\begin{align}\label{transition:kernel}
T^{u}_{\distribution_{v}}(\FullSet \mid x)=\int \indicator_\FullSet(f_u(x)+\bar v)\distribution_{v}(\bar v)d \bar v
\end{align}
be the stochastic transition kernel induced by system \eqref{eq:system} with 
noise fixed to $\distribution_{v}$ in mode $u\in U$, where 
$\indicator_\FullSet$ is the indicator function with 
$\indicator_\FullSet(x)=1$ if $x\in \FullSet$, and $\indicator_\FullSet(x)=0$ 
otherwise.
From the definition of $T^{u}_{\distribution_{v}}(\FullSet\mid x)$  it follows 
that, given a strategy $\policy_x$, a noise distribution $\distribution_{v}$, and
an initial condition $x_0$, System 
\eqref{eq:system} defines a stochastic process on the canonical space 
$(\mathbb{R}^n)^{\mathbb N}$ with the Borel sigma-algebra 
$\mathcal{B}((\mathbb{R}^n)^{\mathbb N})$ %induced by the product topology 
\cite{bertsekas2004stochastic}.
%\LL{As in the problem formulation we consider both the case of infinite horizon and that when the time horizon is given, in here I would define the prob measure for both cases. That is, I would first consider a finite $K$ then extend to infinite horizon }
%
%and a time horizon $[0:K]$, system 
%\eqref{eq:system} defines a stochastic process on the canonical space 
%$\Omega=(\mathbb{R}^n)^{K+1}$  with the Borel sigma-algebra 
%$\mathcal{B}(\Omega)$ %induced by the product topology 
%\cite{bertsekas2004stochastic}.
%
In particular, by the Ionescu-Tulcea theorem, the kernel $T^{u}_{\distribution_{v}}$ generates a unique probability distribution $P^{x_0,\policy_x}_{\distribution_{v}}$ over the paths of system \eqref{eq:system} originating from $x_0$,
such that for each $k\in \mathbb N$ %$k\in \{1,...,K\}$
\begin{align*}
    &P^{x_0,\policy_x}_{\distribution_{v}}[\mypathX(0)\in \FullSet] = 
    \indicator_{\FullSet}(x_0),\\
    %&P^{x_0,\policy_x}_{\distribution_{v}}[\mypathX^K(k) \in \FullSet \mid 
    %\mypathX^K(k-1)=x] = 
    %T^{\policy_x(\mypathX^{k-1})}_{\distribution_{v}}(\FullSet\mid x). \\
     &P^{x_0,\policy_x}_{\distribution_{v}}[\mypathX(k) \in \FullSet \mid 
    \mypathX^{k-1}] = 
    T^{\policy_x(\mypathX^{k-1})}_{\distribution_{v}}(\FullSet \mid \mypathX(k-1)).     
\end{align*}
%\begin{align*}
%    &P^{x_0,\policy_x}_{\distribution_{v}}[\mypathX^K(0)\in \FullSet] = 
%    \indicator_{\FullSet}(x_0),\\
 %    &P^{x_0,\policy_x}_{\distribution_{v}}[\mypathX^K(k) \in \FullSet \mid 
 %   \mypathX^{k-1}] = 
 %   T^{\policy_x(\mypathX^{k-1})}_{\distribution_{v}}(\FullSet\mid \mypathX^K(k-1)).     
%\end{align*}
%
%\IG{Furthermore, the kernel \eqref{transition:kernel} also defines a unique probability distribution over paths of infinite length by the Ionescu-Tulcea extension theorem \cite{bertsekas2004stochastic}.}\LL{You are already mentioning the Ionescu-Tulcea extension theorem above, so I would remove this sentence and just make explicit above that k can be infinite}\DB{Rephrased a bit the text above to incorporate your suggestion}

% \subsection{Problem Formulation}

In this paper we consider both finite-time and infinite-time
probabilistic reach-avoid specifications for System \eqref{eq:system}. 
\begin{dfn}[Reach-avoid probability]
For a time horizon $K\in\mathbb{N}_0\cup\{\infty\}$, a bounded safe set $X$, a target region 
$X_{\text{tgt}}\subset X$ and an initial state $x_0\in X$, the reach-avoid probability $P_{\rm{reach}}( X, X_{\text{tgt}},K \mid x_0, \policy_x,\distribution_v)$ is defined as:
\begin{multline}
    P_{\rm{reach}}( X, X_{\text{tgt}},K \mid x_0, \policy_x,\distribution_v) := \\ 
     P_{\distribution_{v}}^{x_0,\policy_x} \left[\exists k \in [0:K] 
     \mid 
     \mypath_\x(k)\in X_{\text{tgt}} \; \land \; \forall \,k'<k, \; \mypath_\x(k')\in X \right]. 
    \label{eq:reachability_original_system}
\end{multline}
\end{dfn}

% In Problem \ref{prob:Syntesis} below, we seek to synthesize a strategy 
%$\policy_x$ such that $P_{\rm{reach}}( X, X_{\text{tgt}},K \mid x, 
%\policy_x,\distribution_v)$ is greater than a given threshold for all 
%distributions $\distribution_v\in\mathcal{P}_v$.
%
We can now precisely formulate our problem, which is to synthesize control strategies that are robust to all distributions 
in the set $\mathcal{P}_{v}$.

\begin{problem}[Switching Strategy Synthesis]
 \label{prob:Syntesis}
    Consider the switched stochastic system \eqref{eq:system}, its 
    corresponding ambiguity set $\mathcal{P}_{v}$, a bounded safe set $X$, and 
    a target region $X_{\text{tgt}}\subset X$. Given an initial state $x_0\in 
    X$, a probability threshold $p_{\text{th}}\in[0,1]$, and a horizon 
    $K\in\mathbb{N}_0\cup\{\infty\}$, synthesize a switching strategy $\policy_x$ 
    such that, for all $\distribution_v\in\mathcal{P}_v$,
\begin{align}
\label{eq:problem}
 & P_{\rm{reach}}( X, X_{\rm{tgt}},K \mid x_0, \policy_x,\distribution_v)  \geq 
 p_{\rm{th}}.
\end{align}
%\LL{A point for possible reflection: Our approach tries to find a strategy for all initial points, so we do something more than what stated in the prob formulation. I am fine with that, but let's clarify this in the approach}
\end{problem}

\begin{rem}
Note that, while our focus on reach-avoid specifications in Problem \ref{prob:Syntesis}, the proposed framework can be easily used for more complex logical specifications, such as scLTL, LTLf, and PCTL specification, which reduce to reachability problems.
%\cite{cauchi2019efficiency},\cite{jackson2021formal},\cite{adams2022formal}. 
We illustrate this in an example in Section~\ref{sec:LTLf}, where we synthesize a strategy for an LTLf specification.
\end{rem}

\subsection{Overview of the Approach}
To approach Problem \ref{prob:Syntesis}, we construct a finite-state abstraction of System \eqref{eq:system} in terms of a robust MDP as detailed in Section \ref{sec:robust_MDP_abstraction}. In Section \ref{sec:synthesis}, we show how to synthesize an optimal strategy on the resulting abstraction. In particular, we prove that, while in the case of finite-time reachability specifications the optimal strategy is Markovian, the optimal strategy for an unbounded-time (infinite horizon)  reachability property is stationary. We prove that in both cases an optimal strategy can be synthesized by solving a finite set of linear programs, thus guaranteeing efficiency. We then refine the resulting strategy into a switching strategy for System \eqref{eq:system} and show how the robust MDP abstraction provides upper and lower bounds on the probability that  System \eqref{eq:system} satisfies the specification under the refined strategy. 

%prove that, under that strategy, the original system satisfies the reach-avoid specification with probability in the range that we found for the abstraction.
%\LL{Need to extend this and refer to the various Sections for where to find the results.}

%(Introduce LTLf here. Add syntax, semantics.) Linear temporal logic on finite traces (\LTLf) is capable of expressing several properties we wish to enforce in a controlled dynamic system, including safety, liveness, response, and persistence properties.

%\textbf{Problem} Given a dataset generated by~\eqref{eq:system}, an \LTLf specification $\phi$ and an initial system state, synthesize a control policy that maximizes the lower-bound of the probability that the resulting system trajectory will satisfy $\phi$.

%(Summarize the solution approach here.)

%\subsection{Solution}
%The solution outline is as follows:
%\begin{enumerate}
%    \item Gaussian process regression to estimate $g$
%    \item Construct an IMDP abstraction of the system using probabilistic transitions
%    \item Take the product of the DFA representing the LTLf specification with the IMDP (PIMDP)
%    \item Use existing verification tool to synthesize a policy that satisfies the PCTL specification over PIMDP
%\end{enumerate}
\section{Preliminaries}

%\LL{Add the background on DRO we need}

\subsection{Robust Markov Decision Processes}
% We abstract System \eqref{eq:system} into a \textit{robust Markov decision process (robust MDP)} $\M$. 
Robust MDPs are a generalization of Markov decision 
processes in which the transition probability distributions between states are 
constrained to belong to an ambiguity set \cite{nilim2005robust}, 
\cite{wiesemann2013robust}. 

% \jj{Should we change all $\amdp$ to $\aimdp$? Fixed a couple inconsistencies in this section.}
% \ml{No... because we have not made the connection between IMDP and process 1 yet.}

% \jj{One comment on lack of restrictions on Plow and Phi. Could add the necc inequality. Added this below the definition.}

\begin{dfn}[Robust MDP]
\label{def:robust_mdp}
    A \emph{robust Markov decision process} (robust MDP) is a tuple $\M = (\Qimdp,\Aimdp,\Gamma)$, where
    \begin{itemize}
    	\setlength\itemsep{1mm}
    	\item $\Qimdp$ is a finite set of states,
    	\item $\Aimdp$ is a finite set of actions, and $\Aimdp(\qimdp)$ denotes 
    	the set of available actions at state $\qimdp \in \Qimdp$,
        \item $\Gamma = \{\Gamma_{\qimdp,a}\}_{\qimdp\in\Qimdp, a\in\Aimdp}$ 
        is the set of possible  transition probability distributions of $\M$, where 
        $\Gamma_{\qimdp,a}\subseteq\mathcal{D}(\Qimdp)$ is the set of transition probability distributions for state-action pair $(q,a) \in Q\times A(q)$.\footnote{Note that the sets of transition probability distributions of the robust MDP 
        are independent for each state and action. This is known as 
        \textit{rectangular} property of the set of transition probability 
        distributions \cite{nilim2005robust, wiesemann2013robust}. Furthermore, unlike \cite{wolff2012robust, puggelli2013polynomial}, this is the only assumption we impose regarding the structure of the robust MDP.}
    \end{itemize}
\end{dfn}

A path of a robust MDP is a sequence of states $\pathmdp = \qmdp_0 
\xrightarrow{\amdp_0} \qmdp_1 \xrightarrow{\amdp_1} \qmdp_2 
\xrightarrow{\amdp_2}  \ldots$ such that $\amdp_k \in \Amdp(\qmdp_k)$ and there 
exists $\gamma\in\Gamma_{\qmdp_k,a_k}$ with $ \gamma(\qmdp_{k+1})> 0$ for all 
$k \in \naturals$. We denote 
the $i$-th state of a path $\pathmdp$ by $\pathmdp(i)$, a finite path of length $k+1$ by $\pathmdp^k$ and the last state of a finite path $\pathmdpfin$ by $\last(\pathmdpfin)$. The set of all paths of finite and infinite length are denoted by $\Pathmdpfin$ and $\Pathmdp$, respectively. 

\begin{dfn}[IMDP]
\label{def:imdp}
    An \emph{interval MDP} (IMDP) \cite{cauchi2019efficiency}, \cite{laurenti2020formal}, also known as bounded-parameter MDP (BMDP) \cite{givan2000bounded}, \cite{koutsoukos2006computational}, is a class of robust MDP $\I = (\Qimdp, \Aimdp,\Gamma)$ where $\Gamma$ has the form
\begin{align}
\label{eq:set_transition_probabilities_IMDP}
    \Gamma_{q,a} = \{ \gamma\in\mathcal{D(\Qimdp)}: \underline P(\qimdp,a,\qimdp')\leq \gamma(\qimdp')\leq \overline P(\qimdp,a,\qimdp')\:\text{for all}\:\qimdp'\in\Qimdp\},
\end{align}
for every $\qmdp\in\Qmdp$, $a\in\Amdp(q)$. The bounds $\underline P, \overline P$ are called transition probability bounds, for which it must hold that, for every state $\qimdp\in\Qimdp$ and action $a\in \Amdp(q)$, $0\leq\underline{P}(\qmdp,a,\qmdp')\leq\overline{P}(\qmdp,a,\qmdp')\leq 1$ and
$
    \sum_{q'\in Q}\underline{P}(\qmdp,a,\qmdp') \leq 1 \leq \sum_{\qmdp'\in Q}\overline{P}(\qmdp,a,\qmdp').
$  
\end{dfn}
The actions of robust MDPs and IMDPs are chosen according to a strategy $\policy$ which is defined below. 
\begin{dfn}[Strategy]
\label{def:strategy}
    A strategy $\policy$ of a robust MDP model $\M$ is
    a function $\policy: \Pathmdpfin \rightarrow \Aimdp$ that maps a finite path $\pathmdp^k$, with $k\in\mathbb{N}$, of $\M$ onto an action in $\Aimdp(\last(\pathmdpfin))$. If a strategy depends only on $\last(\pathmdpfin)$ and $k$, it is called a memoryless (Markovian) strategy, and if it only depends on $\last(\pathmdpfin)$, it is called stationary. The set of all strategies is denoted by $\Setofpolicies$, and the set of all stationary strategies by $\Setofpolicies_s$.
\end{dfn}
% \jj{One comment asking for discussion on deterministic strategies}
Given an arbitrary strategy $\policy$, %\in\Setofpolicies$, 
we are restricted to the set of robust Markov chains defined by the set of 
transition probability distributions induced by $\policy$. To reduce 
this to a Markov chain, we define the adversary~\cite{givan2000bounded} 
(called ``nature" in \cite{nilim2005robust}), to be the function that assigns a transition 
probability distribution to each state-action pair.
\begin{dfn}[Adversary]
\label{def:adversary}
    For a robust MDP $\M$, an adversary is a function $\adversary: \Pathmdpfin \times \Aimdp \rightarrow \probDist(\Qimdp)$ that, for each finite path $\pathmdpfin \in \Pathmdpfin$, state $\qimdp=\last(\pathmdpfin)$, and action $a \in \Aimdp(\qimdp)$, assigns an admissible distribution $\gamma_{\qmdp,a}\in\Gamma_{\qmdp,a}$. The set of all adversaries is denoted by $\Adversary$.
    %A memoryless and stationary adversary only depends on $q$ and $a$. We denote the set of all such adversaries by $\Adversary_s$.
\end{dfn}

For an initial condition $q_0\in Q$, under a strategy and a valid adversary $\xi\in\Xi$, the robust MDP collapses to a Markov chain with a unique probability measure defined over its paths. With a small abuse of notation, we denote this measure by $P_\xi^{q_0,\policy}$.
%\IG{Do we also highlight the dependency of this probability on $q_0$ for symmetry?}\LL{YEs, I think it would be good, but we need to be careful to do it everywhere :) }
%We use this distribution as an optimization objective for synthesizing a desirable strategy. \LL{This is not strictly correct, as our optimization objective also depends on the particular reward/property. Maybe, I would delete it?}

% \ml{Define: IMDP, Paths of IMDP, Strategy, and probability measure under a strategy.}

\section{Robust MDP Abstraction}
\label{sec:robust_MDP_abstraction}
To approach Problem \ref{prob:Syntesis}, we start by abstracting System 
\eqref{eq:system} into the IMDP $\widehat\I = (\Qimdp,\Aimdp,\widehat\Gamma)$ 
with the noise distribution fixed to the nominal one, 
$\widehat\distribution_v$. In this way, we embed the error caused by the state 
discretization into $\widehat\I$. After that, we expand the set of transition 
probabilities $\widehat\Gamma$ of $\widehat\I$ to also capture the 
distributional ambiguity into the abstraction, obtaining the robust MDP $\M = 
(\Qimdp,\Aimdp,\Gamma)$. Note that the sets of states $\Qimdp$ and actions 
$\Aimdp$ are the same in $\widehat\I$ and $\M.$
 Below, we describe how we obtain $\Qimdp$ and $\Aimdp$, and in Section 
 \ref{Sec:set_transition_probabilities} we consider the set of transition 
 probability distributions $\Gamma$.

\subsection{States and Actions}
\label{sec:states_actions}
The state space $\Qimdp$ of $\M$ is constructed as follows: consider a set of 
non-overlapping regions $Q_{\rm{safe}} = 
\{\qimdp_1,\qimdp_2,\dots,q_{|Q_{\rm{safe}}|}\}$ partitioning the set $X$ so 
that either $q\cap X_{\rm tgt}=\emptyset$ or $q\cap(X\setminus X_{\rm 
tgt})=\emptyset$ for all $q\in Q_{\rm{safe}}$. We denote by $Q_{\rm tgt}$ the 
subset of $Q_{\rm{safe}}$ for which $q\cap X_{\rm{tgt}}= q $ and assume 
that it is a partition of $Q_{\rm tgt}$. The states of the abstraction comprise 
of  $Q_{\rm{safe}}$ and the unsafe region $\qmdp_u:=\reals^n\setminus X$, 
hence, $\Qimdp := Q_{\rm{safe}} \cup \{\qimdp_u\}$. We index $\Qimdp $ by 
$\mathcal N=\{1,\ldots,N\}$, where $N:=|Q|$ and denote the actions of the 
abstraction as $\Aimdp := U$.

\subsection{Transition Probability Distributions}
\label{Sec:set_transition_probabilities}
\subsubsection{Accounting for the Discretization Error}
\label{sec:nominal_IMDP}
To capture the state discretization error in the abstraction, we first 
consider an IMDP abstraction of System~\eqref{eq:system} for a fixed 
distribution, namely, the nominal probability distribution $\widehat\distribution_v$. 
Since this IMDP is constructed for the nominal distribution 
$\widehat\distribution_v$, we call it ``nominal" IMDP and use the notation 
$\widehat\I = (\Qmdp,\Amdp,\widehat\Gamma)$. We note that building an IMDP 
abstraction of a stochastic system with disturbances of a known distribution 
is widely studied in the literature~\cite{lahijanian2015formal,dutreix2020specification,cauchi2019efficiency,adams2022formal}, and we report the full procedure here below for completeness.  
Construction of the state $\Qmdp$ and action $\Amdp$ spaces of $\widehat\I$ are described in Section \ref{sec:states_actions}. 
We now describe the set $\widehat\Gamma$ of 
$\widehat\I$. 
According to Definition \ref{def:imdp}, the set $\widehat\Gamma$ 
is defined by the transition probability bounds $\underline P$ and $\overline 
P$ of $\widehat\I$. To formally account for the discretization error, the 
bounds must satisfy 
%the previous parameters must bound the probability of state $\x_{k+1}$ of System \eqref{eq:system} being inside region $\qimdp'$ when starting at an arbitrary state $\x_k$ inside region $\qimdp$, when implementing action $a$ and when the disturbance $\noise_k$ has probability $\widehat\distribution_v$. Formally,
for all $\qimdp \in Q_{\rm{safe}}$, $\qimdp'\in Q$ and $\aimdp \in \Aimdp = U$:
\begin{equation}
	\label{eqn:transition_probability_bounds1}
    % \begin{split}
        \underline P(\qimdp,\aimdp,\qimdp') \leq \min_{x\in \qimdp}  T_{\widehat\distribution_v}^{\aimdp}(\qimdp'\mid x)  
        \quad 
        \text{and}
        \quad
        \overline P(\qimdp,\aimdp,\qimdp')  \geq \max_{x\in \qimdp}  T_{\widehat\distribution_v}^{\aimdp}(\qimdp'\mid x).
    % \end{split}
    % \Plow(\qimdp,\aimdp,\qimdp') \leq \min_{x\in \qimdp}  T^{\aimdp}(\qimdp'\mid x)  \;\; \text{and} \;\; 
    % \Pup(\qimdp,\aimdp,\qimdp') \geq \max_{x\in \qimdp}  T^{\aimdp}(\qimdp'\mid x).
\end{equation}

%\begin{equation}
%	\label{eqn:transition_probability_bounds2}
%\begin{split}
%    \underline P(\qimdp,\aimdp,\qimdp_u) & \leq   1 - \max_{x\in \qimdp}T_{\widehat\distribution_v}^{\aimdp}(\FullSet\mid x)  \\
%    \overline P(\qimdp,\aimdp,\qimdp_u) & \geq   1 - \min_{x\in \qimdp}T_{\widehat\distribution_v}^{\aimdp}(\FullSet\mid x).
%\end{split}
%\end{equation}
%\DB{If we write 
%\begin{align*}
%1 - \max_{x\in %q}T_{p_v}^a(X\mid x) 
%& =1 - \max_{x\in %q}(1-T_{p_v}^a(\reals^n\se%tminus X\mid x) \\
%& =1 - \max_{x\in q}(1-T_{p_v}^a(q_u\mid x))=\min_{x\in q}T_{p_v}^a(q_u\mid x),
%\end{align*}
%(and it is correct) then why do we use both (7) and (8)? (analogously for the IMDP--cf. (14)). In that case we just need (7) for all $q\in Q_{\rm safe}$ and $q'\in Q$.   
%}
Since we are interested in the paths of system \eqref{eq:system} that do not 
exit set $X$, we make state $\qimdp_u$ absorbing, i.e.,
\begin{align}
\label{eq:transition_probability_bounds3}
    \underline P(\qimdp_u,a,\qimdp_u) = \overline P(\qimdp_u,a,\qimdp_u) = 1
\end{align}
for all $a\in\Aimdp$. In this way, we include the ``avoid" part of the specification into the definition of the abstraction: the paths that reach $\qmdp_u$, will remain there forever and, therefore, will never reach the target set $Q_{\rm{tgt}}$.
%As we are not interested either in the subsequent behaviour of paths that reach $Q_{\rm{tgt}}$, we also make absorbing the target state $Q_{\rm{tgt}}$, i.e., $\underline P(\qimdp_{\text{tgt}},a,\qimdp_{\text{tgt}}) = \overline P(\qimdp_{\text{tgt}},a,\qimdp_{\text{tgt}}) = 1$ for all $a\in\Aimdp$.
Consequently, for each $\qmdp\in\Qmdp$ and $a\in\Amdp$, we obtain % the set $\widehat\Gamma_{\qmdp,a}$ is defined as
\begin{align}
\label{eq:set_transition_probabilities_nominal_IMDP}
    \widehat\Gamma_{q,a} = \{ \gamma\in\mathcal{D}(\Qimdp): \underline P(\qimdp,a,\qimdp')\leq \gamma(\qimdp')\leq \overline P(\qimdp,a,\qimdp')\:\text{for all}\:\qimdp'\in\Qimdp\}.
\end{align}

\subsubsection{Accounting for the Distributional Uncertainty}
Now, we expand the sets $\{\widehat\Gamma_{\qimdp,a}\}_{\qimdp\in\Qimdp,a\in\Aimdp}$ of transition probabilities of $\widehat\I$ to also embed the distributional uncertainty into the abstraction. With this objective, we first define the cost 
\begin{align}
    \label{eq:cost}
    c(q,q') :=\inf\{\|x - y\|^s:x\in q, y\in q'\},
\end{align}
between any two states $\qmdp,\qmdp'\in Q$, where $\|\cdot\|$ and $s$ are the same as for $\mathcal{W}_s$ in Assumption~\ref{ass:1}.
%those in which the $s$-Wasserstein distance $\mathcal{W}_s$ used to define $\mathcal{P}_v$ is based. 
The cost $c(q,q')^{\frac{1}{s}}$ is the minimum distance, in the sense of norm $\|\cdot\|$, between any pair of points in the regions $\qimdp$ and $\qimdp'$, respectively. Using this cost and the exponent  $s$ in $\mathcal{W}_s$, we define the optimal transport discrepancy $\mathcal{T}_c$ between distributions over $\Qimdp$ as in \eqref{eq:optimal_transport_discrepancy}.\footnote{Notice that, since $c$ is also not a metric, the resulting optimal transport discrepancy $\mathcal{T}_c$ is not a distance.} Given  a probability distribution $\gamma\in\mathcal{D}(Q)$ and $\epsilon\ge 0$, we denote by $\mathcal T_c^{\epsilon}(\gamma)$ %the transport cost ``sublevel set" of cost at most $\epsilon$ from distribution $\gamma$. This one contains 
the set of all distributions to which mass can be transported from $\gamma$ incurring a $c$-transport cost lower than $\epsilon$. 
Using the previous elements, we are finally able to define $\Gamma$.

\begin{dfn}
The discrete uncertainty set $\Gamma$ is defined for every state $\qimdp\in Q_{\rm{safe}}$ and action $a\in\Aimdp$ as
\begin{align}
    \label{eq:set_transition_probabilities_robust_MDP}
    \Gamma_{q,a}:= & \bigcup_{\gamma\in\widehat\Gamma_{q,a}}\mathcal T_c^{\epsilon}(\gamma),
\end{align}
with $\epsilon = \varepsilon^s$.
For state $\qmdp_u$ and action $a\in\Aimdp$, let $\Gamma_{q,a} := \widehat\Gamma_{q,a}$ (preserving the absorbing property of the unsafe state).
\end{dfn}

Each $\Gamma_{\qimdp,a}$ is the set of probability distributions over $\Qimdp$ that are $\epsilon$-close to $\widehat\Gamma_{\qimdp,a}$, in the sense of the optimal transport discrepancy $\mathcal{T}_c$. Note that the way in which we have defined $\Gamma$ makes it possible that there exist two probability distributions $\gamma,\gamma' \in \Gamma_{q,a}$, for some $q\in Q$, $a\in A$, which have different support. Therefore, the robust MDPs we consider belong to a more general class than the ones in \cite{wolff2012robust, puggelli2013polynomial}.
%Intuitively, this means that the choice of the adversary can change the underlying graph of the robust MDP and this should be accounted for when synthesizing an optimal strategy.
%

Once we have obtained the sets of transition probabilities $\Gamma$, along with 
the state $\Qimdp$ and action $\Aimdp$ spaces, our robust MDP abstraction $\M$ 
is fully defined. The following proposition ensures that the abstraction 
captures all possible transition probabilities of System~\eqref{eq:system} to 
regions in the partition.
\begin{prop}[Consistency of the Robust MDP Abstraction]
\label{prop:consistency}
Consider the robust MDP abstraction $\M = (\Qmdp,\Amdp,\Gamma)$ of System~\eqref{eq:system}. Let $\qmdp\in Q_{\rm{safe}}$, $a\in \Amdp$,  $x\in q$,  and
$\distribution_{v}\in\mathcal{P}_v$, and define $\gamma_{x,a}\in\mathcal{D}(Q)$ as
%and the set of probability distributions $\Gamma_{\qmdp,a}$ of $\M$. 
%Moreover, consider the . Now, denote by$\gamma_{x,a}$ the (discrete) transition probability from $x$ under $a$ and for distribution $\distribution_{v}$ of $\noise$:
\begin{align*}
    \gamma_{x,a}(q') &:= T^{a}_{\distribution_{v}}(q'|x)
\end{align*}
for all $\qmdp'\in \Qmdp$. Then  $\gamma_{x,a}\in\Gamma_{\qmdp,a}$.
\end{prop}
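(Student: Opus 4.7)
The plan is to exhibit a specific distribution $\widehat\gamma_{x,a}\in\widehat\Gamma_{q,a}$ that is within optimal transport discrepancy $\epsilon$ of $\gamma_{x,a}$, so that $\gamma_{x,a}\in\mathcal T_c^{\epsilon}(\widehat\gamma_{x,a})\subseteq\Gamma_{q,a}$ by \eqref{eq:set_transition_probabilities_robust_MDP}. The natural candidate is the transition distribution induced by the \emph{nominal} noise: $\widehat\gamma_{x,a}(q'):=T^{a}_{\widehat p_v}(q'\mid x)$ for every $q'\in Q$.

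\textbf{Step 1: membership in $\widehat\Gamma_{q,a}$.} From the discretization bounds \eqref{eqn:transition_probability_bounds1}, for $q\in Q_{\rm safe}$ and every $q'\in Q$ one has $\underline P(q,a,q')\leq \min_{y\in q}T^{a}_{\widehat p_v}(q'\mid y)\leq T^{a}_{\widehat p_v}(q'\mid x)\leq \max_{y\in q}T^{a}_{\widehat p_v}(q'\mid y)\leq \overline P(q,a,q')$, so $\widehat\gamma_{x,a}\in\widehat\Gamma_{q,a}$ by \eqref{eq:set_transition_probabilities_nominal_IMDP}.

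\textbf{Step 2: construct a coupling on $Q\times Q$ from an optimal coupling on $\reals^n\times\reals^n$.} Let $\pi^\star\in\Pi(\widehat p_v,p_v)$ be an optimal transport plan for $\mathcal W_s$, which exists by the continuity of $\|\cdot\|^s$. Since $Q_{\rm safe}\cup\{q_u\}$ induces a Borel partition of $\reals^n$ (identifying $q_u$ with $\reals^n\setminus X$), the map $\Phi:\reals^n\to Q$ assigning to each point $y$ the unique region containing $y$ is Borel measurable. Define $\pi'$ on $Q\times Q$ as the pushforward of $\pi^\star$ under $(v,v')\mapsto\bigl(\Phi(f_a(x)+v),\Phi(f_a(x)+v')\bigr)$. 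A direct marginal computation using \eqref{transition:kernel} shows that the first marginal of $\pi'$ equals $\widehat\gamma_{x,a}$ and the second equals $\gamma_{x,a}$, so $\pi'\in\Pi(\widehat\gamma_{x,a},\gamma_{x,a})$.

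\textbf{Step 3: bound the transport cost.} By definition \eqref{eq:cost}, whenever $f_a(x)+v\in q'$ and $f_a(x)+v'\in q''$ we have $c(q',q'')\leq\|(f_a(x)+v)-(f_a(x)+v')\|^s=\|v-v'\|^s$. Therefore
\begin{equation*}
\mathcal T_c(\widehat\gamma_{x,a},\gamma_{x,a})\leq\int_{Q\times Q}c(q',q'')\,d\pi'(q',q'')\leq\int_{\reals^n\times\reals^n}\|v-v'\|^s\,d\pi^\star(v,v')=\mathcal W_s(\widehat p_v,p_v)^s\leq\varepsilon^s=\epsilon,
\end{equation*}
the last inequality using $p_v\in\mathcal P_v$ (Assumption~\ref{ass:1}). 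Hence $\gamma_{x,a}\in\mathcal T_c^{\epsilon}(\widehat\gamma_{x,a})$, and combining with Step 1 yields $\gamma_{x,a}\in\Gamma_{q,a}$ as desired.

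The main subtlety is Step 2, namely verifying that the pushforward construction genuinely produces a coupling whose marginals are the discrete transition distributions and that the map $\Phi$ is well defined as a Borel measurable map (which relies on $Q_{\rm safe}$ being a finite Borel partition of $X$ and $q_u$ covering its complement). Once this bookkeeping is in place, the cost bound in Step 3 is immediate from the definition of $c$ and the optimality of $\pi^\star$.
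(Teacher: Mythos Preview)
Your proof is correct and follows essentially the same approach as the paper: define $\widehat\gamma_{x,a}$ via the nominal kernel, verify it lies in $\widehat\Gamma_{q,a}$, push an optimal Wasserstein coupling forward to $Q\times Q$, and bound the discrete transport cost using $c(q',q'')\le\|v-v'\|^s$. The only organizational difference is that the paper factors the pushforward-coupling and cost-bound steps into two standalone lemmas (applied to the translated distributions $T^a_{p_v}(\cdot\mid x)$ and $T^a_{\widehat p_v}(\cdot\mid x)$), whereas you compose the translation and discretization maps in one step; the underlying argument is the same.
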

%
%\begin{proof}
\noindent The proof of Proposition \ref{prop:consistency} is given in \ref{app:proof_consistency}.
%\end{proof}
\noindent The intuition behind Proposition \ref{prop:consistency} is that set $\Gamma_{q,a}$ contains the transition probabilities  $\gamma_{x,a}$ obtained by starting from any $x\in \qmdp$, with $\qmdp\in Q_{\rm{safe}}$ under $a\in \Amdp$ and for every $\distribution_{v}\in\mathcal{P}_v$.
\begin{rem}[Model Choice for the Abstraction]\label{rem:model:choice}
An alternative way to include the distributional ambiguity into the abstraction is to use an IMDP abstraction $\I = (\Qimdp,\Aimdp,\Gamma^{\rm IMDP})$, which has the same state $\Qimdp$ and action $\Aimdp$ spaces as $\M$, and in which $\Gamma^{\rm IMDP}$ is defined by transition probability bounds that satisfy
\begin{equation}
	\label{eqn:transition_probability_bounds_alternative1}
    \begin{split}
        \underline P^{\rm IMDP}(\qimdp,\aimdp,\qimdp') & \leq \min_{\distribution_v\in\mathcal{P}_v}\min_{x\in \qimdp}  T_{\distribution_v}^{\aimdp}(\qimdp'\mid x),  \\
        \overline P^{\rm IMDP}(\qimdp,\aimdp,\qimdp') & \geq \max_{\distribution_v\in\mathcal{P}_v}\max_{x\in \qimdp}  T_{\distribution_v}^{\aimdp}(\qimdp'\mid x)
    \end{split}
    % \Plow(\qimdp,\aimdp,\qimdp') \leq \min_{x\in \qimdp}  T^{\aimdp}(\qimdp'\mid x)  \;\; \text{and} \;\; 
    % \Pup(\qimdp,\aimdp,\qimdp') \geq \max_{x\in \qimdp}  T^{\aimdp}(\qimdp'\mid x).
\end{equation}
%
%\begin{equation}
%	\label{eqn:transition_probability_bounds_alternative2}
%\begin{split}
%    \underline P^{\rm IMDP}(\qimdp,\aimdp,\qimdp_u) & \leq   1 - \max_{\distribution_v\in\mathcal{P}_v}\max_{x\in \qimdp}T_{\distribution_v}^{\aimdp}(\FullSet\mid x)  \\
%    \overline P^{\rm IMDP}(\qimdp,\aimdp,\qimdp_u) & \geq   1 - \min_{\distribution_v\in\mathcal{P}_v}\min_{x\in \qimdp}T_{\distribution_v}^{\aimdp}(\FullSet\mid x),
%\end{split}
%\end{equation}
for all $\qmdp\in Q_{\rm{safe}}$, $\qmdp'\in\Qmdp$ $a\in\Aimdp$, and for which $\qmdp_u$ is again absorbing. Therefore, for fixed $\qmdp\in\Qmdp$ and $a\in\Amdp$, set $\Gamma_{\qmdp,a}^{\rm IMDP}$ of $\I$ is defined as
\begin{equation}
\label{eq:set_transition_probabilities_alternative_IMDP}
\begin{split}
    \Gamma_{q,a}^{\rm IMDP} = \{ &\gamma\in\mathcal{D(\Qimdp)}: \underline P^{\rm IMDP}(\qimdp,a,\qimdp')\leq \gamma(\qimdp')\leq \overline P^{\rm IMDP}(\qimdp,a,\qimdp')\:\text{for all}\:\qimdp'\in\Qimdp\}.
\end{split}
\end{equation}
This choice of the abstraction model allows to use efficient synthesis 
algorithms for IMDPs \cite{givan2000bounded}, 
\cite{koutsoukos2006computational} to solve Problem \ref{prob:Syntesis}. By 
the definition of the transition probability bounds of $\I$ in 
\eqref{eqn:transition_probability_bounds_alternative1}, $\Gamma^{\rm IMDP}$ 
satisfies Proposition \ref{prop:consistency}, effectively capturing all 
possible transition probabilities of System~\eqref{eq:system}. However, IMDP 
$\I$ describes the uncertainty in a very loose way, i.e., set 
$\Gamma^{\rm IMDP}$ of $\I$ can be excessively large for many ambiguity sets $\mathcal{P}_v$. Intuitively, this is caused by $\Gamma_{\qimdp,a}^{\rm IMDP}$ being only defined through decoupled interval constraints for every successor state $\qimdp'\in\Qimdp$,
%, and the constraint enforcing that any $\gamma\in\Gamma_{\qimdp,a}^{\rm IMDP}$ is a probability distribution. 
This is likely to result in more conservative solutions to the reach-avoid problem as we show in Section~\ref{sec:case-studies}.
%, which yields more conservative satisfaction guarantees. On the other hand, the set of transition probabilities of our robust MDP abstraction defined in \eqref{eq:set_transition_probabilities_robust_MDP} constraints more the transition probabilities, which leads to tighter results.
%Furthermore, algorithms to solve the outer problem in \eqref{eqn:transition_probability_bounds_alternative1} are available when the nominal distribution $\widehat \distribution_v$ is an empirical distribution, i.e., built from samples of $\noise$, whereas our approach based in a robust MDP abstraction is more general.
\end{rem}

%\DB{We may have already discussed in the past to also include the IMDP constraints in the linear optimization problem of Section 6. This can be done without loosing any guarantees and resolves the issue of the robust MDP transition distributions being contained in those of the IMDP. The only drawbacks I currently see with this are: 1) the computation of the extra (mim min/max max) IMDP bounds 2) The addition of 2$N$ extras constraints in the optimization problem; still it is not a huge addition compared to the $N^2$ constraints for the $\pi_{ij}$s.}  
%%%%%%%%%%%%%%%%%%%%%%%%%%%%%%%%%%%%%%%%%%%%%%%%%%%%%%%%%%%%%%%%
%%%%%%%%%%%%%%%%%%%%%%%%%%%%%%%%%%%%%%%%%%%%%%%%%%%%%%%%%%%%%%%%
\section{Robust Strategy Synthesis}
\label{sec:synthesis}

Our goal is to synthesize a switching strategy $\optpolicy_{\x}$ for System 
\eqref{eq:system} that maximizes \eqref{eq:problem}. 
%under which the probability of the paths of said system reaching $X_{\tt{tgt}}$ within $K$ steps while remaining in $X$ is maximized.  %The \imdp abstraction $\widehat\I$, as constructed above, conservatively captures the evolution of System \eqref{eq:system}, and a strategy on $\widehat\I$ can be easily translated to a switching strategy for System \eqref{eq:system}.  
%However, this problem is difficult to solve and there is also the additional complication of $\noise$ being ambiguous. For this reason 
To capture the distributional uncertainty and the effect of quantization,
we consider the proposed robust MDP abstraction $\M$. % that account for the uncertainty in System \eqref{eq:system}. 
%For this one 
 We synthesize the robustly maximizing strategy $\optpolicy$ for the abstraction $\M$, which we then refine on the original system retaining formal guarantees of correctness.
 In Sections \ref{sec:robust_value_iteration}, \ref{sec:LinearProgrammingValueITeration}, and \ref{sec:DedicatedAlgorithmValueITeration}, we show how an optimal strategy $\optpolicy$ for the abstraction $\M$ can be efficiently computed via linear programming. Then, in Section \ref{sec:correctness}, we prove the correctness of our strategy synthesis approach.
%which maximizes the probability of satisfying the reach-avoid specification while being robust with respect to the uncertainty in the abstraction. Then we map strategy $\optpolicy$ to System \eqref{eq:system} in such a way that the probabilistic guarantees of the abstraction satisfying the reach-avoid specification are preserved.

\subsection{Robust Dynamic Programming}
\label{sec:robust_value_iteration}

Recall that, in our robust MDP abstraction $\mathcal{M} = (Q,A,\Gamma)$, the uncertainties of $\mathcal{M}$ are characterized by an adversary $\xi$, which at each time step and given a path of $\mathcal{M}$ and an action, selects a feasible distribution from $\Gamma$ (see Definition \ref{def:adversary}). Therefore, to be robust against all uncertainties, as common in the literature \cite{iyengar2005robust,nilim2005robust}, we aim to synthesize a strategy $\optpolicy$ such that, given a horizon $K\in\naturals_0\cup\{\infty\}$,
\begin{align}
    \label{eq:optimal_strategy_robust_MDPs}
    \optpolicy \in \arg\max_{\policy\in\Sigma}\inf_{\xi\in\Xi}P_{\rm{reach}}(Q_{\rm safe},Q_{\rm{tgt}},K \mid q, \policy,\xi),
\end{align}
for all $\qmdp\in \Qmdp$, where $P_{\rm{reach}}(Q_{\rm safe},Q_{\rm{tgt}},K \mid q, \policy,\xi)$ is defined as in \eqref{eq:reachability_original_system} for System \eqref{eq:system}. We call $\sigma^*$ the \emph{optimal robust} or simply \emph{optimal} strategy.
%\IG{Technically both definitions are not symmetrical, since in \eqref{eq:reachability_original_system} we have a fixed distribution and in \eqref{eq:optimal_strategy_robust_MDPs} we have an adversary}\LL{Main difference is that 3.3 is defined for the continuous system, while 6.1 is on the abstraction. So, while I agree this is not formally defined, I think it is clear, no?}\IG{Definitely. Let's keep it this way}
%\iffalse

%Once we have obtained the robust MDP abstraction $\mathcal{M} = (Q,A,\Gamma)$ of System \eqref{eq:system} we are able to solve finite-time probabilistic reach-avoid specifications for the former. Given a target $Q_{\rm{tgt}}$ set, safe set $Q_{\rm safe}$, initial state $q\in Q$ and time horizon $K\in\mathbb{N}_0$, probabilistic reach-avoid for a robust MDP $\M$ is formally defined as
%\begin{align*}
%    P_{\rm{reach}}(Q_{\rm safe},Q_{\rm{tgt}},K \mid q, \policy,\xi) := Prob_\xi^\policy[\exists k\in [0:K]\,s.t.\,\\
%    \mypath^k(k)\in Q_{\rm{tgt}} \wedge \mypath^k(k')\in Q_{\rm safe}\: \forall k'< k\mid \mypath^k(0)=q].
%\end{align*}
% 
%We define strategy $\optpolicy$ as the one that maximizes the worst-case probability of the paths of $\M$ satisfying the reach-avoid specification:
%\begin{align}
%    \label{eq:optimal_strategy_robust_MDPs}
%    \optpolicy := \arg\max_{\policy\in\Sigma}\min_{\xi\in\Xi}P_{\rm{reach}}(Q_{\rm safe},Q_{\rm{tgt}},K \mid q, \policy,\xi),
%\end{align}
%for all $\qmdp\in \Qmdp$.
%\fi

We denote by $\underline p^K$ and $\overline p^K$, respectively, the worst and best-case probabilities of the paths of $\mathcal{M}$ satisfying the reach-avoid specification under optimal strategy $\optpolicy$, i.e., 
\begin{subequations}
    \begin{align}
        \underline p^K(\qmdp) &:= \inf_{\xi\in\Xi}P_{\rm{reach}}( Q_{\rm safe},Q_{\rm{tgt}},K \mid q, \optpolicy,\xi) \label{eq:def_bounds_reachability_robust_MDPs_lower},\\
        \overline p^K(\qmdp) &:=\sup_{\xi\in\Xi}P_{\rm{reach}}(Q_{\rm safe},Q_{\rm{tgt}},K \mid q, \optpolicy,\xi) \label{eq:def_bounds_reachability_robust_MDPs_upper}
    \end{align}
\end{subequations}
for all $\qmdp\in \Qmdp$. %\IG{How about using $\sup,\inf$ instead of $\max,\min$? We don't prove that the bounds are attained for some adversaries.} \DB{Yes, definitely:) They can only be reduced to max and min (if that is desirable) in the internal LPs associated with the ambiguity sets.}
%\DB{Better to split into (6.2a) and (6.2b). Currently, it's uncommon math numbering.} 
In the following subsections, we show how to compute the previous quantities via robust dynamic programming. We distinguish the cases of finite ($K<\infty)$ and infinite ($K = \infty$) horizon, since they require different treatment.

\subsubsection{Finite Horizon}%\IG{Finite Horizon}}

The following proposition shows that the bounds in \eqref{eq:def_bounds_reachability_robust_MDPs_lower} and \eqref{eq:def_bounds_reachability_robust_MDPs_upper} and the optimal strategy in 
 \eqref{eq:optimal_strategy_robust_MDPs} can be obtained via dynamic programming.
\begin{prop}
\label{prop:robust_value_iteration}
Let $\underline p^{K}$ be as defined in \eqref{eq:def_bounds_reachability_robust_MDPs_lower} and $k\in [0 : K-1]$. Then, it holds that
\begin{align}
\label{eq:robust_value_iteration_lower_bound}
    \underline p^{k+1}(\qimdp) = 
        \begin{cases}
             1 &\text{if }\:\qimdp \in Q_{\rm{tgt}}\\ \max\limits_{a\in 
             \Aimdp}\min\limits_{\gamma\in\Gamma_{\qimdp,a}} \sum\limits_{\qimdp'\in 
             Q}\gamma(\qimdp')\underline p^{k}(\qimdp') &\text{otherwise},
             \end{cases}
\end{align}
with initial condition $\underline p^{0}(\qmdp) = 1$ for all $q\in Q_{\rm{tgt}}$ and $0$ otherwise.
Furthermore, for each path $\pathmdp^k$ with $k\in [0 : K-1]$, it holds that
\begin{align}
\label{eq:robust_dynamic_programming_optimal_strategy_robust_MDP}
    \optpolicy(\pathmdp^k) \in \arg\max_{a\in \Aimdp}\left\{ 
    \min_{\gamma\in\Gamma_{\last(\pathmdp^k),a}} \sum_{\qimdp'\in 
    Q}\gamma(\qimdp')\underline p^{K-k-1}(\qimdp') \right\}. 
\end{align}
\end{prop}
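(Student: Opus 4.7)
The plan is to proceed by induction on the horizon. For the base case, taking $K=0$ in the reach-avoid event $\{\exists k \in [0:K] : \pathmdp(k) \in Q_{\rm tgt} \wedge \forall k'<k,\ \pathmdp(k') \in Q_{\rm safe}\}$ reduces it to $\pathmdp(0) \in Q_{\rm tgt}$, yielding $\underline p^0(q) = \mathbf{1}_{q \in Q_{\rm tgt}}$, which matches the stated initial condition. For the inductive step, assume the recursion holds at horizon $k$. Fix $q \in Q$: if $q \in Q_{\rm tgt}$, the event is realized at step $0$ for every $\policy,\xi$ so $\underline p^{k+1}(q)=1$; if $q=q_u$, the absorbing property \eqref{eq:transition_probability_bounds3} forces every $\gamma\in\Gamma_{q_u,a}$ to place unit mass on $q_u$, so the right-hand side of \eqref{eq:robust_value_iteration_lower_bound} collapses to $\underline p^k(q_u)=0$, in agreement with the definition. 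The substantive case is $q \in Q_{\rm safe}\setminus Q_{\rm tgt}$.

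For such $q$, conditioning on the first action $a$ chosen by $\policy$ and the first-step distribution $\gamma \in \Gamma_{q,a}$ selected by $\xi$, the Markov property and the tower rule give
\[
P_{\rm{reach}}(Q_{\rm safe},Q_{\rm{tgt}},k+1 \mid q, \policy,\xi) = \sum_{q' \in Q} \gamma(q') \, P_{\rm{reach}}(Q_{\rm safe},Q_{\rm{tgt}},k \mid q', \policy_{q'},\xi_{q'}),
\]
where $\policy_{q'}, \xi_{q'}$ denote the strategy and adversary restricted to continuation paths from $q'$. The key structural step is then to establish the minimax identity
\[
\sup_{\policy}\inf_{\xi} P_{\rm{reach}}(Q_{\rm safe},Q_{\rm{tgt}},k+1 \mid q, \policy,\xi) = \max_{a \in A}\min_{\gamma \in \Gamma_{q,a}} \sum_{q' \in Q} \gamma(q') \, \underline p^{k}(q').
\]
This will be obtained by exploiting rectangularity of $\Gamma$ (see the footnote in Definition \ref{def:robust_mdp}): the feasible distributions at distinct state-action pairs are independently constrained, and strategies/adversaries on continuation paths can be selected independently of the first-step choice. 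Hence the outer $\sup_\policy \inf_\xi$ factorizes into a first-step $\sup_a \inf_{\gamma \in \Gamma_{q,a}}$ wrapped around the continuation values $\sup_{\policy_{q'}}\inf_{\xi_{q'}} P_{\rm{reach}}(\cdot \mid q',\policy_{q'},\xi_{q'})$, which by the inductive hypothesis equal $\underline p^k(q')$. The $\sup$ and $\inf$ are attained as $\max$ and $\min$ because $A$ is finite and each $\Gamma_{q,a}$ is a compact subset of the probability simplex on $Q$ (compactness follows because $\widehat\Gamma_{q,a}$ is a compact polytope by \eqref{eq:set_transition_probabilities_nominal_IMDP} and the $\mathcal{T}_c$-ball of radius $\epsilon$ around a compact set of distributions on a finite space is itself compact).

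Granted the above Bellman identity, the characterization \eqref{eq:robust_dynamic_programming_optimal_strategy_robust_MDP} of $\optpolicy$ is an immediate consequence of the principle of dynamic programming: at a finite path $\pathmdp^k$ of length $k+1$ encountered under an optimal strategy with remaining horizon $K-k-1$, selecting any action achieving the $\max$ in \eqref{eq:robust_value_iteration_lower_bound} with $\underline p^{K-k-1}$ in place of $\underline p^k$ is optimal for the continuation problem, and pasting this choice at every stage produces a strategy achieving $\underline p^K(q)$ at every initial state $q$. The main obstacle I foresee is the rigorous justification of the factorization of $\sup_\policy \inf_\xi$ into the first-step and tail components; this is the classical robust-DP step and uses rectangularity in an essential way, but care is needed because $\policy$ and $\xi$ are in general history-dependent, so I would formalize the factorization by constructing, for any $\varepsilon'>0$, a pair $(\policy,\xi)$ attaining the right-hand side of the identity up to $\varepsilon'$ by concatenating first-step near-optimizers with inductively given near-optimal continuations, and conversely bounding the left-hand side by the right-hand side through pointwise worst-case choice of $\gamma$ at the first step.
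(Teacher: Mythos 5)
Your argument is correct, but it takes a genuinely different route from the paper: the paper's entire proof is a one-line appeal to \cite[Theorem 1]{el2005robust}, which already establishes the finite-horizon robust dynamic programming recursion and the optimality of the associated Markovian greedy strategy for rectangular uncertainty sets, specialized here to the reachability reward. You instead reprove that result from scratch by induction on the horizon, with the base case $\underline p^0=\indicator_{Q_{\rm tgt}}$, the first-step conditioning, and the minimax factorization $\sup_\policy\inf_\adversary = \max_a\min_{\gamma\in\Gamma_{q,a}}$ justified via rectangularity and an $\varepsilon'$-optimal concatenation argument. The step you flag as the ``main obstacle'' --- rigorously factorizing $\sup_\policy\inf_\adversary$ over history-dependent strategies and adversaries into a first-step $\max_a\min_\gamma$ wrapped around the tail values --- is precisely the content of the cited theorem, and your proposed two-sided bound (upper-bounding via a pointwise worst-case first-step $\gamma$ combined with $\inf_{\adversary_{q'}}P(\cdot)\le\underline p^k(q')$, lower-bounding by pasting the argmax action onto inductively optimal continuations) is the standard and correct way to fill it in. Your justification that the $\sup$/$\inf$ are attained is also sound: each $\Gamma_{q,a}$ is compact, which in fact follows even more directly from Theorem~\ref{th:RVI_LP}, where it is exhibited as the projection of the polytope \eqref{eq:feasible set_robust_MDP_cost}. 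What your approach buys is a self-contained proof that makes visible exactly where rectangularity and the two-sided structure of the game enter; what the paper's approach buys is brevity and delegation of the measure-theoretic bookkeeping over history-dependent policies to an established reference. The only point worth tightening is that the existence of a \emph{single} strategy $\optpolicy$ attaining the supremum simultaneously for all initial states $q$ (as required by \eqref{eq:optimal_strategy_robust_MDPs}) is itself a conclusion of the DP construction rather than an assumption, and your closing remark about pasting the stagewise argmax choices is the right way to obtain it, but it deserves to be stated as part of the induction rather than as an afterthought.
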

\begin{proof}
    The proof follows directly by applying \cite[Theorem 1]{el2005robust} to the setting of reachability.
\end{proof}
A consequence of Proposition \ref{prop:robust_value_iteration} is that, in the finite horizon ($K < \infty$) case, there exists an optimal policy that is Markovian (note that the resulting strategy is time-dependent in general). 
%Hence, we can restrict to this class of strategies. 
Furthermore, once an optimal strategy $\optpolicy$ is fixed, 
%$\overline p^{k+1}$, an upper bound of $P_{\rm reach}$,
the upper bound $\overline p^{K}$ of $P_{\rm reach}$ in \eqref{eq:def_bounds_reachability_robust_MDPs_upper} can be readily computed via the robust dynamic programming iterations 
%\DB{Rephrased to "the" upper bound as it is a specific one}
\begin{align}
\label{eq:robust_value_iteration_upper_bound}
    \overline p^{k+1}(\qimdp) = 
        \begin{cases}
             1 &\text{if}\:\qimdp\in Q_{\rm{tgt}}\\ 
             \max\limits_{\gamma\in\Gamma_{\qimdp,\optpolicy}} \sum\limits_{\qimdp'\in 
             Q}\gamma(\qimdp')\overline p^{k}(\qimdp') &\text{otherwise},
             \end{cases}
\end{align}
for all $k\in [0 : K-1]$, which is analogous to that in \eqref{eq:robust_value_iteration_lower_bound}
and has initial condition $\overline p^{0}(\qmdp) = 1$ for all $q\in 
Q_{\rm{tgt}}$ and $\overline p^{0}(\qimdp) = 0$ otherwise.

In Section~\ref{sec:LinearProgrammingValueITeration} we show that the robust dynamic iterations in \eqref{eq:robust_value_iteration_lower_bound} and \eqref{eq:robust_value_iteration_upper_bound} boil down to solving a finite number of linear programs, but first, in the next subsection, we consider the infinite horizon case.
%\LL{Refer in here to the Section where you will show how to efficiently compute upper and lower bounds and optimal strategies}

\subsubsection{Infinite Horizon}
\label{sec:unbounded_horizon}
%\LL{Tried to modify according to Dimitris'comments. Please check}
In the case of an infinite time horizon ($K = \infty$), we seek to bound the probability of the paths of $\M$ that with arbitrary lengths reach $Q_{\rm{tgt}}$ while remaining in $Q_{\rm{safe}}$. 
%\IG{Remember that the robust MDPs we consider in this research belong to a wider class than the ones in \cite{wolff2012robust, puggelli2013polynomial}. Because of that, in this subsection we state novel theoretical results that hold for general rectangular robust MDPs.}\LL{To me, considering that this is something we also say at the end of this paragraph, this is more something to stress in related works rather than in here, where it would be a repetition.}
To also synthesize an optimal strategy,  we first show that the robust dynamic programming iterations in \eqref{eq:robust_value_iteration_lower_bound} converge to a fixed point. Then, using this result, we show that for the infinite horizon case, there exists an optimal stationary strategy, and we propose an approach to obtain it.
%\DB{Just checked the paper of Wolf et al and I see that they handle rectangular MDPs. Given that, what is our added value? From a first read, I feel doubtful about their technical justifications as they seem to neglect the lack of a contraction and an appropriate way to extract a proper policy, etc. This is how I could justify our contribution (which may need some caution). However, I don't see why our problem setting is more general. We consider rectangular robust MDPs (irrespectively of whether we use Wasserstein etc) and so do they. Or am I missing smth?} Furthermore, unlike \cite{wolff2012robust, puggelli2013polynomial}, our approach does not require identifying and eliminating end components as a pre-processing step, which suffers from exponential complexity for general robust MDPs \IG{(CITE)}. Instead, we directly perform robust dynamic programming to obtain the lower bound in the probability of reachability. Then, we extract an optimal strategy by performing a reachability analysis on the underlying graph with negligible computational cost.
All the proofs of the theorems in this section can be found in~\ref{app:proofs_section_unbounded_horizon}.

We first %with Theorem \ref{thm:robust_dynamic_programming_infinite_horizon_lower_bound} below where 
show that the lower bound $\underline p^\infty$
%and $\overline p^\infty$, the quantities
on the infinite-horizon reachability probability obtained by  
\eqref{eq:def_bounds_reachability_robust_MDPs_lower}
%and \eqref{eq:def_bounds_reachability_robust_MDPs_upper}
for $K = \infty$ is a fixed-point of the robust dynamic programming operator in \eqref{eq:robust_value_iteration_lower_bound}. 
%Remember that the robust MDPs we consider in this research belong to a wider class than the ones in \cite{wolff2012robust, puggelli2013polynomial}. Because of that, in this subsection, we state novel theoretical results that hold for our choice of robust MDPs\footnote{In fact, the theory we present in this subsection applies to general robust MDPs for which the rectangular property holds.}, and propose an approach to perform strategy synthesis on these. Furthermore, unlike \cite{wolff2012robust, puggelli2013polynomial}, our approach does not require identifying and eliminating end components as a pre-processing step, which suffers from exponential complexity for general robust MDPs \IG{(CITE)}. Instead, we directly perform robust dynamic programming to obtain the lower bound in the probability of reachability. Then, we extract an optimal strategy by performing a reachability analysis on the underlying graph with negligible computational cost. All the proofs of the theorems in this section can be found in~\ref{app:proofs_section_unbounded_horizon}. Let $\underline p^\infty$ and $\overline p^\infty$ denote the bounds in \eqref{eq:def_bounds_reachability_robust_MDPs_lower} and \eqref{eq:def_bounds_reachability_robust_MDPs_upper} for $K = \infty$. In the following theorem we show how to compute the lower bound in this setting:
%
\begin{thm}
\label{thm:robust_dynamic_programming_infinite_horizon_lower_bound}
    Let $\{\underline p^k\}_{k\in\naturals_0}$ be the infinite sequence defined recursively by \eqref{eq:robust_value_iteration_lower_bound} with initial condition  $\underline p^0(q) = 1$ for all $q\in Q_{\rm{tgt}}$ and $0$ otherwise. Then $\{\underline p^k\}$ converges to $\underline p^\infty$, which is the least fixed-point of \eqref{eq:robust_value_iteration_lower_bound} on $\mathbb{R}_{\ge 0}^{|Q|}$.
\end{thm}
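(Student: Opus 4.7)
Let $T:\mathbb R_{\ge 0}^{|Q|}\to\mathbb R_{\ge 0}^{|Q|}$ be the robust Bellman operator defined by the right-hand side of \eqref{eq:robust_value_iteration_lower_bound}, so that $\underline p^{k+1}=T(\underline p^k)$ with $\underline p^0=\indicator_{Q_{\rm{tgt}}}$. The plan is to establish, in this order, that (i) the sequence $\{\underline p^k\}$ is monotone non-decreasing and bounded, hence converges pointwise to some limit $v^\star$; (ii) $v^\star$ is a fixed point of $T$; (iii) $v^\star$ coincides with the reachability value $\underline p^\infty$ from \eqref{eq:def_bounds_reachability_robust_MDPs_lower}; (iv) $v^\star$ is the least nonnegative fixed point of $T$.

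For (i), I would first verify monotonicity of $T$: if $v\le v'$ componentwise, then $\sum_{q'}\gamma(q')v(q')\le\sum_{q'}\gamma(q')v'(q')$ for every probability vector $\gamma$, and both the inner $\min_{\gamma\in\Gamma_{q,a}}$ and the outer $\max_{a\in A}$ preserve the ordering. The base inequality $\underline p^1\ge\underline p^0$ is immediate, since $\underline p^0$ is $1$ on $Q_{\rm{tgt}}$ and $0$ elsewhere, $T$ fixes the target to $1$ by definition, and the other entries of $T(\underline p^0)$ are nonnegative. Induction with monotonicity then gives $\underline p^{k+1}\ge\underline p^k$ for all $k$. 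A parallel induction, using that $\sum_{q'}\gamma(q')\cdot 1=1$ for any $\gamma\in\mathcal D(Q)$, shows $\underline p^k\le \indicator$ componentwise. Hence a pointwise limit $v^\star\in[0,1]^{|Q|}$ exists by monotone convergence.

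For (ii), I need continuity of $T$ at $v^\star$. Each set $\Gamma_{q,a}$ is compact: by \eqref{eq:set_transition_probabilities_robust_MDP} it is the image under the continuous optimal-transport constraint of the compact polytope $\widehat\Gamma_{q,a}$, and $\Gamma_{q,a}$ is a closed subset of the probability simplex on the finite set $Q$. The map $\gamma\mapsto\sum_{q'}\gamma(q')v(q')$ is linear (hence continuous) and the inner minimum is attained; by a standard Berge/maximum-theorem argument, $v\mapsto\min_{\gamma\in\Gamma_{q,a}}\sum_{q'}\gamma(q')v(q')$ is continuous, and the subsequent finite $\max_{a\in A}$ preserves continuity. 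Therefore $T(v^\star)=\lim_k T(\underline p^k)=\lim_k \underline p^{k+1}=v^\star$.

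For (iii), the identification with $\underline p^\infty$ follows from Proposition \ref{prop:robust_value_iteration} and monotone continuity of probability: for any adversary $\xi$ and any strategy $\sigma$, the events ``$Q_{\rm{tgt}}$ reached within $k$ steps while staying in $Q_{\rm{safe}}$'' are nested and increase to the event ``$Q_{\rm{tgt}}$ eventually reached while staying in $Q_{\rm{safe}}$'', so $P_{\rm{reach}}(Q_{\rm{safe}},Q_{\rm{tgt}},k\mid q,\sigma,\xi)\uparrow P_{\rm{reach}}(Q_{\rm{safe}},Q_{\rm{tgt}},\infty\mid q,\sigma,\xi)$. Combining this with the finite-horizon characterization $\underline p^k(q)=\max_\sigma\inf_\xi P_{\rm{reach}}(\cdot,k\mid q,\sigma,\xi)$ from Proposition \ref{prop:robust_value_iteration} and elementary $\sup$/$\inf$ exchange inequalities yields $v^\star=\underline p^\infty$. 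Finally, for (iv), if $v'\in\mathbb R_{\ge 0}^{|Q|}$ is any fixed point of $T$, then evaluating $T$ on $Q_{\rm{tgt}}$ forces $v'=1$ there, while $v'\ge 0$ elsewhere, so $v'\ge\underline p^0$; applying $T^k$ to both sides and using monotonicity gives $v'=T^k(v')\ge\underline p^k$, and passing to the limit shows $v'\ge v^\star=\underline p^\infty$.

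The main obstacle I anticipate is step (iii): rigorously swapping the monotone limit in $k$ with the $\max$--$\inf$ game. The upper bound $v^\star\le\underline p^\infty$ is routine from $P_{\rm reach}(\cdot,k\mid\cdot)\le P_{\rm reach}(\cdot,\infty\mid\cdot)$, but the matching lower bound requires care because the maximizing (memoryless) strategy at horizon $k$ need not coincide with the one attaining $\underline p^\infty$; here I would lean on the fact that $T$ is a contraction-like operator on the set where reachability values are strict, or invoke a compactness argument on stationary strategies together with the fixed-point property established in (ii) to conclude $v^\star\ge\underline p^\infty$.
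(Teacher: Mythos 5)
Your steps (i), (ii), and (iv) are sound and essentially coincide with the paper's argument: monotonicity and boundedness of the iterates give convergence to some $v^\star$, continuity of the operator (the paper uses nonexpansiveness in the sup-norm via \cite[Theorem 3.2a]{iyengar2005robust} rather than Berge's theorem, but either works since each $\Gamma_{q,a}$ is a compact polytope) gives the fixed-point property, and the induction $v'\ge T^k(\underline p^0)=\underline p^k$ gives minimality. The genuine gap is exactly the step you flag yourself: the inequality $v^\star\ge\underline p^\infty$. Neither of your two proposed remedies closes it. The operator $T$ is \emph{not} a contraction in the undiscounted setting — the constant function $\indicator$ is also a fixed point of \eqref{eq:robust_value_iteration_lower_bound} (since $\sum_{q'}\gamma(q')\cdot 1=1$ for every $\gamma\in\Gamma_{q,a}$), so fixed points are non-unique and no contraction argument can apply globally; restricting to "states where reachability values are strict" is precisely the end-component-type surgery the paper deliberately avoids, and it is unclear how to carry it out for this more general class of robust MDPs. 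Likewise, "compactness of stationary strategies together with the fixed-point property of $v^\star$" cannot by itself distinguish $v^\star$ from $\indicator$ or identify it with the game value: what you would actually need is that $\underline p^\infty$ itself satisfies the Bellman optimality equation, which is a statement of essentially the same depth as the theorem.

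The paper closes this gap with a Blackwell-style discounted approximation. It introduces an auxiliary reward MDP with discount $\beta\in(0,1)$ and operator $T_\beta$, which \emph{is} a contraction with a unique fixed point $V_\beta^\infty=\sup_\sigma\inf_\xi V_{\sigma,\xi,\beta}$ by \cite[Theorem 3.2]{iyengar2005robust}. Since $T_\beta\le T$, one gets $V_\beta^\infty\le v^\star$ for every $\beta<1$; then Lemma~\ref{lemma:continuity_beta} (continuity of $\beta\mapsto V_{\sigma,\xi,\beta}$ on $(0,1]$, proved via the Weierstrass M-test) lets one pass to the limit $\beta\to 1$ and conclude $\underline p^\infty=\sup_\sigma\inf_\xi V_{\sigma,\xi,1}\le v^\star$. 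The reverse inequality is then obtained, as you note, from monotone convergence of the finite-horizon reachability probabilities. If you want to complete your proof without copying the paper, you must either reproduce this discounted-limit argument or supply an independent proof that $\underline p^\infty$ is a fixed point of $T$ lying above every $\underline p^k$; as written, the identification $v^\star=\underline p^\infty$ is asserted rather than proved.
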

%
%\textcolor{red}{DB: Since there are finitely many states, convergence should actually be uniform. I'd also say instead that the sequence is monotonic and as a result (since it is also bounded), it convergence to some $\underline p$.}\IG{Actually we prove that it converges to the fixed point we are looking for, i.e., $\underline p^\infty$}
%
Theorem~\ref{thm:robust_dynamic_programming_infinite_horizon_lower_bound} guarantees that, to obtain the optimal reachability probabilities,
we can simply run the robust dynamic programming iteration in \eqref{eq:robust_value_iteration_lower_bound} until convergence. We next address the problem of finding an optimal strategy. In particular, we are interested in determining a stationary optimal strategy that has the convenient feature of using the same decision rule at every time step. The following result establishes that stationary optimal strategies indeed exist.
%That is, there exists a stationary strategy such we can run the robust value iteration in \LL{ADD} with actions given by this strategy and we will still converge to the same fixed points $\underline p^\infty$ and $\overline p^\infty$.
%\cite{baier2008principles}. 
%\LL{Is such a policy always guaranteed to exist? If so, I would say it explicitly in the Lemma, otherwise modify the text above}
%
\begin{prop}\label{prop:existence_stationary optimal_strategy_unbounded_horizon}
    %Let $\underline p^\infty$ be the lower bound in \eqref{eq:def_bounds_reachability_robust_MDPs} for $K = \infty$ and define 
    The infinite horizon reachability problem \eqref{eq:def_bounds_reachability_robust_MDPs_lower} admits a stationary strategy $\sigma'\in \Sigma_s$.
    \end{prop}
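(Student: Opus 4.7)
The natural first attempt is to set $\sigma'(q) \in \arg\max_{a \in A} \min_{\gamma \in \Gamma_{q,a}} \sum_{q' \in Q} \gamma(q') \underline{p}^\infty(q')$ for each $q \in Q \setminus Q_{\rm{tgt}}$; this is well defined because $A$ is finite and each $\Gamma_{q,a}$ is a compact subset of the probability simplex, so the inner minimum is attained. By Theorem~\ref{thm:robust_dynamic_programming_infinite_horizon_lower_bound}, $\underline{p}^\infty$ is a fixed point of the robust Bellman operator, so $\sigma'$ satisfies $\min_{\gamma \in \Gamma_{q,\sigma'(q)}} \sum_{q'} \gamma(q') \underline{p}^\infty(q') = \underline{p}^\infty(q)$. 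However, this alone does not imply optimality: the Bellman operator associated with the fixed stationary strategy $\sigma'$ may admit multiple fixed points on $[0,1]^{|Q|}$, in particular in the presence of end components (paths staying forever in $Q_{\rm{safe}} \setminus Q_{\rm{tgt}}$ under $\sigma'$), so the worst-case reachability under $\sigma'$, $v^{\sigma'}(q) := \inf_{\xi \in \Xi} P_{\rm{reach}}(Q_{\rm{safe}}, Q_{\rm{tgt}}, \infty \mid q, \sigma', \xi)$, need not coincide with $\underline{p}^\infty(q)$.

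To sidestep this difficulty I would employ a Blackwell-type discounted approximation. For $\alpha \in (0,1)$, define the discounted robust reachability $\underline{P}_\alpha(q) := \sup_{\sigma \in \Sigma} \inf_{\xi \in \Xi} \mathbb{E}_{\sigma,\xi}[\alpha^\tau]$, where $\tau$ is the first hitting time of $Q_{\rm{tgt}}$ with the convention $\alpha^\infty := 0$. For each fixed $\alpha < 1$ the associated robust Bellman operator is an $\alpha$-contraction on $\mathbb{R}^{|Q|}$, so by the classical theory of discounted rectangular robust MDPs~\cite{iyengar2005robust,nilim2005robust}, $\underline{P}_\alpha$ is its unique fixed point and is attained by a stationary deterministic strategy $\sigma_\alpha \in \Sigma_s$. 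Since $\Sigma_s$ is finite (because $Q$ and $A$ are), any sequence $\alpha_n \uparrow 1$ admits, by pigeonhole, a subsequence (still denoted $\alpha_n$) along which $\sigma_{\alpha_n}$ equals a fixed $\sigma' \in \Sigma_s$; this is my candidate stationary optimal strategy.

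I claim this $\sigma'$ satisfies $v^{\sigma'}(q) = \underline{p}^\infty(q)$. The bound $v^{\sigma'} \leq \underline{p}^\infty$ is immediate from the definition of $\underline{p}^\infty$ as a supremum over all strategies. For the reverse, I first establish $\underline{P}_\alpha \uparrow \underline{p}^\infty$ as $\alpha \uparrow 1$ by means of the pointwise inequality $\alpha^\tau \geq \alpha^K \mathbf{1}_{\{\tau \leq K\}}$, which after taking infima over $\xi$ and suprema over $\sigma$ yields $\underline{P}_\alpha(q) \geq \alpha^K \underline{p}^K(q)$ for every $K$; letting first $\alpha \uparrow 1$ and then $K \to \infty$ and using Theorem~\ref{thm:robust_dynamic_programming_infinite_horizon_lower_bound} gives $\lim_{\alpha \uparrow 1} \underline{P}_\alpha(q) \geq \underline{p}^\infty(q)$, and the trivial reverse inequality $\underline{P}_\alpha \leq \underline{p}^\infty$ then forces $\underline{P}_{\alpha_n}(q) \to \underline{p}^\infty(q)$. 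Finally, since $P_{\rm{reach}} \geq \mathbb{E}[\alpha^\tau]$ pointwise in $\xi$ and $\sigma'$ attains the discounted optimum at every $\alpha_n$, we conclude $v^{\sigma'}(q) \geq \inf_\xi \mathbb{E}_{\sigma',\xi}[\alpha_n^\tau] = \underline{P}_{\alpha_n}(q) \to \underline{p}^\infty(q)$, as required.

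The main obstacle I anticipate is the rigorous justification of the discounted surrogate, namely verifying that the $\alpha$-discounted robust Bellman operator is a genuine $\alpha$-contraction on $\mathbb{R}^{|Q|}$ and that the classical theory of discounted rectangular robust MDPs yields a stationary deterministic optimal strategy under our rectangularity assumption on $\Gamma$. Once these ingredients are in place, the finiteness of $\Sigma_s$ together with the clean monotone inequality $\underline{P}_\alpha \geq \alpha^K \underline{p}^K$ makes the remainder of the argument elementary and, crucially, avoids any explicit end-component analysis that would be required if one attempted to prove optimality of the naive argmax strategy directly via a submartingale argument on $\{\underline{p}^\infty(X_t)\}$.
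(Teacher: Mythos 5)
Your proof is correct, and its scaffold is the same as the paper's: pass to a discounted surrogate, invoke the theory of discounted rectangular robust MDPs to get a stationary deterministic optimizer $\sigma_\alpha$ for each discount factor, and use finiteness of $\Sigma_s$ to extract, by pigeonhole, one $\sigma'$ that is optimal along a sequence $\alpha_n\uparrow 1$. (Your quantity $\mathbb{E}[\alpha^\tau]$ with $\alpha^\infty:=0$ is exactly the paper's discounted total reward in the auxiliary model $\widetilde{\mathcal M}$, where target states collect reward $1$ and then jump to the absorbing unsafe state.) Where the two arguments genuinely differ is the final passage to the limit. The paper concludes by asserting, via Lemma~\ref{lemma:continuity_beta}, continuity at $\beta=1$ of $\beta\mapsto\inf_\xi V_{\sigma,\xi,\beta}$ and $\beta\mapsto\sup_\sigma\inf_\xi V_{\sigma,\xi,\beta}$; strictly speaking that lemma only delivers continuity for each \emph{fixed} pair $(\sigma,\xi)$, and an infimum over an infinite family of continuous nondecreasing functions need not be left-continuous at $1$ (consider $\inf_n\beta^n$), so that step deserves an extra word. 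Your sandwich $\alpha^K\underline p^K\le\underline P_\alpha\le\underline p^\infty$, combined with the pointwise domination $P_{\rm reach}\ge\mathbb{E}[\alpha^\tau]$ and the fact that $\sigma'$ attains the discounted optimum at every $\alpha_n$, reaches the same conclusion without ever interchanging a limit with an infimum over adversaries, and is therefore the tighter route. The one ingredient you flag as outstanding --- contraction of the $\alpha$-discounted robust Bellman operator and existence of a stationary deterministic optimizer under rectangularity --- is precisely what the paper imports by citation from \cite{nilim2005robust} and \cite{iyengar2005robust}, so you may discharge it the same way. Your opening remark that the naive argmax strategy can fail because of end components is also correct; it is irrelevant to the existence statement itself but is exactly the issue the paper addresses separately in Proposition~\ref{prop:sufficient_conditions_optimal_strategy_unbounded_horizon} and Theorem~\ref{thm:optimal_strategy_unbounded_horizon}.
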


Since the mere existence of an optimal strategy is not sufficient for synthesis purposes, we seek conditions under which a stationary strategy that we will later explicitly construct is also optimal. The following result provides such optimality conditions, which generalize the corresponding requirements in the case of standard MDPs (see \cite[Theorem 7.1.7]{puterman2014markov}).

    \begin{prop}  \label{prop:sufficient_conditions_optimal_strategy_unbounded_horizon}
    Consider the set $Q_{\rm{reach}} := \{q\in Q : \underline p^\infty(q) > 0\}$ of states that have positive probability of reaching $Q_{\rm tgt}$ for some strategy, and let $\sigma^*\in\Sigma_s$ be a stationary strategy that, for each state $q \in Q$, satisfies
    \begin{subequations}
    \begin{align}
    & \sigma^*(q) \in A^*(q) := \arg\max\limits_{a\in \Aimdp}\Big\{ 
    \min\limits_{\gamma\in\Gamma_{q,a}} \sum\limits_{\qimdp'\in 
    Q}\gamma(\qimdp')\underline p^\infty(\qimdp') \Big\} \label{set:A:star} \\
    & \hspace{1.5em}\lim\limits_{k\to\infty} P_\xi^{q,\policy^*}[\pathmdp(k)\in Q_{\rm{reach}}] = 0\quad\textup{for all}\;\xi\in\Xi. 
    \label{proper:strategy}
    \end{align}
    \end{subequations}
    Then $\sigma^*$ is optimal, i.e., it satisfies  \eqref{eq:optimal_strategy_robust_MDPs}. Furthermore, condition \eqref{set:A:star} is also necessary for every $\sigma\in\Sigma_s$ to be optimal.
\end{prop}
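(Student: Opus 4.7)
The plan is to establish sufficiency and necessity separately, both exploiting the Bellman fixed-point characterization of $V := \underline{p}^{\infty}$ furnished by Theorem~\ref{thm:robust_dynamic_programming_infinite_horizon_lower_bound}. Sufficiency will follow from a bounded-submartingale argument applied to the value function along trajectories generated by $\sigma^*$, while necessity will follow by matching the Bellman identity satisfied by $V$ against the fixed-point equation satisfied by the worst-case reachability value under any optimal stationary strategy.

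For sufficiency, I would fix $q\in Q$ and an arbitrary adversary $\xi\in\Xi$, introduce the stopping time $\tau := \inf\{k\in\naturals_0 : \pathmdp(k)\in Q_{\rm tgt}\cup\{q_u\}\}$, and consider the stopped process $M_k := V(\pathmdp(k\wedge\tau))$. Since $V(q')=1$ on $Q_{\rm tgt}$ and $V(q_u)=0$, the process $(M_k)$ is $[0,1]$-valued. Condition \eqref{set:A:star} combined with the Bellman identity for $V$ implies that, for every $q' = \pathmdp(k)\notin Q_{\rm tgt}\cup\{q_u\}$,
\begin{equation*}
V(q') \;=\; \min_{\gamma\in\Gamma_{q',\sigma^*(q')}}\sum_{q''\in Q}\gamma(q'')V(q'') \;\le\; E_{\xi}^{q,\sigma^*}\bigl[\,V(\pathmdp(k+1))\mid\pathmdp^{k}\,\bigr],
\end{equation*}
so $(M_k)$ is a bounded submartingale under $P_{\xi}^{q,\sigma^*}$. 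Bounded martingale convergence yields $M_k\to M_\infty$ a.s.\ and in $L^1$, whence $V(q)=E[M_0]\le E[M_\infty]$; on $\{\tau<\infty\}$ a direct inspection gives $M_\infty = \indicator_{\pathmdp(\tau)\in Q_{\rm tgt}}$, which is exactly the reach-avoid indicator.

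The main obstacle is to show that $M_\infty = 0$ a.s.\ on $\{\tau=\infty\}$; without this, the submartingale bound alone would only dominate the reach-avoid probability plus a potentially positive remainder contributed by trajectories that stay forever in $Q_{\rm reach}\setminus Q_{\rm tgt}$. This is precisely where the properness condition \eqref{proper:strategy} is essential. From $P_{\xi}^{q,\sigma^*}[\pathmdp(k)\in Q_{\rm reach}]\to 0$ one extracts a subsequence $(k_n)$ along which the probabilities are summable, and Borel-Cantelli then forces $\pathmdp(k_n)\notin Q_{\rm reach}$ eventually, a.s. Since $V$ vanishes outside $Q_{\rm reach}$ this gives $M_{k_n}=V(\pathmdp(k_n))=0$ eventually a.s.\ on $\{\tau=\infty\}$, and combined with the a.s.\ limit $M_{k_n}\to M_\infty$ yields $M_\infty=0$ a.s.\ on this event. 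Therefore $V(q)\le E[M_\infty] = P_{\xi}^{q,\sigma^*}[\pathmdp(\tau)\in Q_{\rm tgt},\tau<\infty] = P_{\rm reach}(Q_{\rm safe},Q_{\rm tgt},\infty\mid q,\sigma^*,\xi)$; taking the infimum over $\xi$ together with the reverse inequality $\sup_\sigma\inf_\xi P_{\rm reach}=V(q)$ establishes optimality of $\sigma^*$.

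For necessity, let $\sigma\in\Sigma_s$ be optimal and set $V^\sigma(q):=\inf_\xi P_{\rm reach}(Q_{\rm safe},Q_{\rm tgt},\infty\mid q,\sigma,\xi)$, so that $V^\sigma\equiv V$. Because $\sigma$ is stationary, the robust MDP under $\sigma$ collapses to a robust Markov chain, and the analogue of Theorem~\ref{thm:robust_dynamic_programming_infinite_horizon_lower_bound} (applied with a trivial, singleton action set) delivers the fixed-point identity $V^\sigma(q) = \min_{\gamma\in\Gamma_{q,\sigma(q)}}\sum_{q'\in Q}\gamma(q')V^\sigma(q')$ for every $q\notin Q_{\rm tgt}$. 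Substituting $V^\sigma=V$ and contrasting with the Bellman identity $V(q) = \max_{a\in A}\min_{\gamma\in\Gamma_{q,a}}\sum_{q'}\gamma(q')V(q')$ forces $\sigma(q)\in A^*(q)$; for $q\in Q_{\rm tgt}$ the condition is vacuous since $V(q)=1$ irrespective of the chosen action.
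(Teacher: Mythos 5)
Your proposal is correct, and the necessity half coincides with the paper's own argument: both derive $\sigma(q)\in A^*(q)$ by playing the fixed-point identity $\underline p^\infty=T_{\sigma}[\underline p^\infty]$ for the fixed stationary strategy against the Bellman identity $\underline p^\infty=T[\underline p^\infty]$ (and both lean, without detailed proof, on the singleton-action analogue of Theorem~\ref{thm:robust_dynamic_programming_infinite_horizon_lower_bound} to assert that $\inf_{\xi}V_{\sigma,\xi}$ is a fixed point of $T_\sigma$ — an acceptable shared gap). The sufficiency half reaches the same conclusion by a different technical route. The paper unrolls the one-step inequality $\underline p^\infty\le r+\widetilde M_\xi^1\underline p^\infty$ into the finite-horizon bound $\underline p^\infty\le r+\bigl(\sum_{k=1}^{K-1}\widetilde M_\xi^{k}\bigr)r+\bigl(\prod_{k=1}^{K}\widetilde M_\xi^{k}\bigr)\underline p^\infty$ and observes that the remainder is bounded by $P_\xi^{q,\sigma^*}[\pathmdp(K)\in Q_{\rm reach}]$, so condition \eqref{proper:strategy} applies \emph{directly in expectation} and no almost-sure convergence machinery is needed. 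You instead run the stopped process $M_k=V(\pathmdp(k\wedge\tau))$ as a bounded submartingale, invoke martingale convergence, and then need the Borel--Cantelli subsequence extraction precisely to identify $M_\infty=0$ on $\{\tau=\infty\}$ — a pathwise statement that the paper's expectation-level argument never requires. Both treatments hinge on the same two ingredients (the sub-fixed-point inequality under any action in $A^*$ and the vanishing of the mass retained in $Q_{\rm reach}$), so the difference is one of packaging: yours is the standard optional-stopping formulation, transparent to readers fluent in martingale arguments but importing the convergence theorem and a summable-subsequence trick; the paper's is more elementary and self-contained, at the cost of an explicit matrix-product bookkeeping step. Your handling of $q\in Q_{\rm tgt}$ in the necessity part (where the argmax condition is effectively vacuous) matches the paper's convention of routing target states deterministically to $q_u$ in the auxiliary model.
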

The first condition of Proposition~\ref{prop:sufficient_conditions_optimal_strategy_unbounded_horizon} %requires in analogy to~\eqref{eq:robust_dynamic_programming_optimal_strategy_robust_MDP} 
imposes the requirement that $\sigma^*$ can only pick actions that attain the maximum in the dynamic programming recursion in~\eqref{eq:robust_value_iteration_lower_bound}.The second requirement of the proposition is that under strategy $\sigma^*$, all paths should eventually exit $Q_{\rm{reach}}$ and remain outside that set forever, for every choice of the adversary and all initial conditions.  A strategy that satisfies both these conditions is called \emph{proper optimal}. We stress that, as we prove in Theorem~\ref{thm:optimal_strategy_unbounded_horizon}, a proper optimal strategy always exists.
\begin{rem}
Restricting ourselves to stationary strategies comes at the price of enforcing $\sigma^*$ to be proper optimal.
In particular, the second condition required by a strategy $\sigma^*$ to be optimal guarantees that, under $\sigma^*$, all states need to eventually exit $Q_{\rm{reach}}$ and remain outside forever. This additional condition is needed because there may be multiple strategies maximizing \eqref{set:A:star} and some of them may introduce loops (cycles) with probability $1$ in the graph of the resulting  Markov chain. As a consequence, states that have nonzero probability of reaching $Q_{\rm tgt}$ for an optimal strategy may never achieve this under a strategy that only satisfies~\eqref{set:A:star}.
%. This condition is needed because there are in general multiple strategies  and we need to discard those that there are examples where a strategy only satisfies \eqref{set:A:star} and not \eqref{proper:strategy}, leading to a probability of reaching $Q_{\rm{tgt}}$ of zero, while it would have been one if the strategy were proper.
\end{rem}
%
\iffalse
\begin{rem}
Restricting ourselves to stationary strategies comes at the price of enforcing $\sigma^*$ to be proper optimal. In particular, the second condition required by a strategy $\sigma^*$ to be proper optimal guarantees that, under $\sigma^*$, all states need to eventually reach $Q\setminus Q_{\rm{reach}}$. This additional condition is needed because there may be multiple strategies satisfying the first condition and one needs to ensure that starting from the initial state with $\sigma^*$ one does not create loops

As a result, we only extract such a strategy after the robust dynamic programming iterations have converged, to avoid storing intermediate strategies that may fail to satisfy this condition. In particular, there are examples where a strategy only satisfies \eqref{set:A:star} and not \eqref{proper:strategy}, and therefore the corresponding probability of reaching $Q_{\rm{tgt}}$ is zero, while it would have been one if the strategy were proper.
\end{rem}
\fi
%

%\LL{If what follows follows what they do in Katoen, do say it explicitly in here that our approach to find a proper strategy follows that employed in [] for MDPs.}
We next extract an optimal proper stationary strategy by generalizing the approach employed in the proof of \cite[Theorem 10.102]{baier2008principles} for MDPs. %\IG{We begin by stating the following lemma, which guarantees that we can always restrict ourselves to the set of strategies that satisfy \eqref{set:A:star} without changing the reachability probabilities:
%
%\begin{lemma}
%\label{lemma:probability_reach_unchanged}
%    Let $A^*$ be as in \eqref{set:A:star} and define $\Sigma^*$ as the set of (possibly time-varying) strategies that only take actions from $A^*$. Then
    %
 %   \begin{align*}
        %\sup_{\sigma\in\Sigma}\inf_{\xi\in\Xi}P_{\rm{reach}}(Q_{\rm safe},Q_{\rm{tgt}},\infty \mid q, \policy,\xi) = \underline p^\infty(q)
    %\end{align*}
    %
    %for all $q\in Q$.
%\end{lemma}
%
%Lemma~\ref{lemma:probability_reach_unchanged} states that restricting ourselves to actions in $A^*$ does not change the reachability probabilities.
%An important implication of Lemma~\ref{lemma:probability_reach_unchanged} is that all states $q\in Q_{\rm{reach}}$ have positive probability of reaching $Q_{\rm{tgt}}$ under every adversary and the appropriate strategy, even if the latter only actions in $A^*$.}
To this end, we recursively define the sets 
\begin{align} \label{dfn:nested_sets}
 Q^m := Q^{m-1}\cup\Big\{q\in Q\setminus Q^{m-1} : \exists\: a\in A^*(q) \:\: \text{s.t.}\: \min_{\gamma\in\Gamma_{q,a}} \sum_{\qimdp'\in Q^{m-1}}\gamma(\qimdp')  > 0 \Big\},
 \end{align}
 for each $m\in\{0,1,\dots,m_{\max}\}$, where $Q^0 := Q_{\rm{tgt}}$ and 
 \begin{align*}
 m_{\max}:=\min\{m\in\mathbb N_0:Q^{m}=Q^{m-1}\}.
 \end{align*}
Each $Q^m$ is the set of backward reachable states from $Q_{\rm{tgt}}$ in $m$ steps by taking actions in $A^*$. Here reachability is interpreted in the worst-case sense where the underlying graph of the robust MDP depends on the worst-case choice of the adversary. Since $Q$ is finite, the maximum number of backward steps to reach a state in $Q$ from $Q_{\rm{tgt}}$ is bounded.
%at most $|Q|-|Q_{\rm{tgt}}| - 1$.
Thus, $m_{\max}$ is always well-defined. %Furthermore,  Proposition~\ref{prop:existence_stationary optimal_strategy_unbounded_horizon} implies that $Q^{m_{\max}} = Q_{\rm reach}$, since any state from $ Q_{\rm reach}$ has positive probability of reaching $Q_{\rm{tgt}}$ under an optimal stationary strategy that only takes actions from $A^*$.
The following result delineates how we can use the sets $Q^m$ to extract a proper and optimal stationary strategy.
\begin{thm}
\label{thm:optimal_strategy_unbounded_horizon}
 %   Let $A^*(q) := \arg\max_{a\in \Aimdp}\Big\{\min_{\gamma\in\Gamma_{q,a}} \sum_{\qimdp'\in Q}\gamma(\qimdp')\underline p^\infty(\qimdp') \Big\}$ for all $q\in Q$. Define, at each $k\in[1:m]$, the strategy $\sigma^*\in\Sigma_s$ such that
   %
  %  \begin{align*}
  %      \sigma^*(q) \in \Big\{ a\in A^*(q) : 
  %  \min_{\gamma\in\Gamma_{q,a}} \sum_{\qimdp'\in 
  %  Q^{k-1}}\gamma(\qimdp') > 0 \Big\} \quad \text{for all}\: q\in Q^k\setminus Q^{k-1},
  %  \end{align*}
    %
  %  and which picks arbitrary actions for all $q\in Q\setminus Q_{\rm{reach}}$. Then the strategy $\sigma^*$ optimal.
  Let $A^*(q)$ be as given in \eqref{set:A:star}. Define a strategy $\sigma^*\in\Sigma_s$ with   
    \begin{align*}
   \sigma^*(q) \in \left\{ a\in A^*(q) : 
    \min_{\gamma\in\Gamma_{q,a}} \sum_{\qimdp'\in 
    Q^{m-1}}\gamma(\qimdp') > 0 \right\} 
    \end{align*}
    for all $q\in Q^m\setminus Q^{m-1}$ and $m\in\{1,\dots,m_{\max}\}$, and with arbitrary actions when $q\in Q_{\rm tgt} \cup (Q \setminus Q_{\rm{reach}})$. Then $\sigma^*$ is proper and optimal.
\end{thm}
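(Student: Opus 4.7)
The plan is to reduce the theorem to Proposition~\ref{prop:sufficient_conditions_optimal_strategy_unbounded_horizon}: since $\sigma^*(q)\in A^*(q)$ is built into the construction, it suffices to verify (i) that $\sigma^*$ is well-defined on all of $Q_{\rm{reach}}\setminus Q_{\rm{tgt}}$, which reduces to showing $Q^{m_{\max}}=Q_{\rm{reach}}$, and (ii) the properness condition~\eqref{proper:strategy}.

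For (i), the inclusion $Q^{m_{\max}}\subseteq Q_{\rm{reach}}$ follows by a direct induction on $m$, where any $q\in Q^m\setminus Q^{m-1}$ inherits positive worst-case reach probability from $Q^{m-1}$ through the $A^*$-action guaranteed by the recursion~\eqref{dfn:nested_sets}. The reverse inclusion is the crux and I would prove it by contradiction: assume $q\in Q_{\rm{reach}}\setminus Q^{m_{\max}}$, invoke Proposition~\ref{prop:existence_stationary optimal_strategy_unbounded_horizon} to obtain a stationary optimal strategy $\sigma'$, and use the necessary part of Proposition~\ref{prop:sufficient_conditions_optimal_strategy_unbounded_horizon} to conclude $\sigma'(q')\in A^*(q')$ for every $q'\in Q$. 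Stability of $Q^{m_{\max}}$ under the recursion~\eqref{dfn:nested_sets} then implies that for every $q'\notin Q^{m_{\max}}$ some feasible distribution $\gamma^{q'}\in\Gamma_{q',\sigma'(q')}$ is supported in $Q\setminus Q^{m_{\max}}$. Picking these distributions defines an adversary $\xi^*$ under which all paths starting at $q$ remain in $Q\setminus Q^{m_{\max}}$ forever and hence avoid $Q_{\rm{tgt}}\subseteq Q^0\subseteq Q^{m_{\max}}$, yielding $P_{\xi^*}^{q,\sigma'}[\text{reach}]=0$ and contradicting $\underline p^\infty(q)>0$.

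For (ii), the selection rule for $\sigma^*(q)$ on $Q^m\setminus Q^{m-1}$ guarantees a strictly positive worst-case one-step probability of descending to $Q^{m-1}$. Finiteness of $Q$ yields a uniform lower bound $\delta^*>0$ on these quantities, and a routine induction shows that from any $q\in Q^m$ the worst-case probability of reaching $Q_{\rm{tgt}}$ within $m$ steps is at least $(\delta^*)^m>0$. Iterating the Markov property in blocks of length $m_{\max}$ then gives
\begin{align*}
P_\xi^{q,\sigma^*}\bigl[\pathmdp(k\cdot m_{\max})\in Q_{\rm{reach}}\setminus Q_{\rm{tgt}}\bigr]\le \bigl(1-(\delta^*)^{m_{\max}}\bigr)^{k}\xrightarrow[k\to\infty]{}0,
\end{align*}
which, under the standard reachability convention that $Q_{\rm{tgt}}$ is absorbed into a success sink outside $Q_{\rm{reach}}$, establishes~\eqref{proper:strategy}. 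With (i) and (ii) in place, Proposition~\ref{prop:sufficient_conditions_optimal_strategy_unbounded_horizon} delivers both properness and optimality of $\sigma^*$.

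The main obstacle is step (i), specifically $Q_{\rm{reach}}\subseteq Q^{m_{\max}}$: the fixed-point equation by itself does not rule out ``stalling'' optimal actions at states outside $Q^{m_{\max}}$, so the argument must pass through the existence of a stationary optimal strategy and the necessary condition on $A^*$ to construct the sabotaging adversary. A minor subtlety is the interpretation of~\eqref{proper:strategy} when $Q_{\rm{tgt}}\subseteq Q_{\rm{reach}}$, which is handled by the standard convention of treating $Q_{\rm{tgt}}$ as absorbing for the purpose of reachability accounting, consistent with the rest of the framework.
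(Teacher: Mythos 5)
Your overall architecture matches the paper's: reduce to Proposition~\ref{prop:sufficient_conditions_optimal_strategy_unbounded_horizon}, establish well-definedness via $Q^{m_{\max}}=Q_{\rm{reach}}$, and then verify properness. Your step (i) is a legitimate variant: where the paper proves $Q_{\rm{reach}}\subseteq Q^{m_{\max}}$ by an induction showing that the value iterates of a fixed stationary optimal strategy $\sigma'$ satisfy $V^k(q)>0\Rightarrow q\in Q^k$, you instead build a sabotaging stationary adversary supported outside the stable set $Q^{m_{\max}}$ and derive a contradiction with $\underline p^\infty(q)>0$. Both arguments rest on the same two ingredients (existence of a stationary optimal strategy and the necessity of \eqref{set:A:star}), and yours is arguably more transparent; it does implicitly use that the minimum over $\Gamma_{q,a}$ is attained, which is consistent with the paper's use of $\min$ in \eqref{dfn:nested_sets}.

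The gap is in step (ii). The inequality $P_\xi^{q,\sigma^*}[\pathmdp(k\,m_{\max})\in Q_{\rm{reach}}\setminus Q_{\rm{tgt}}]\le(1-(\delta^*)^{m_{\max}})^k$ is not justified and is false in general, because paths can \emph{re-enter} $Q_{\rm{reach}}$ from $Q\setminus Q_{\rm{reach}}$. A state $q$ with $\underline p^\infty(q)=0$ only guarantees that for every action \emph{some} distribution in $\Gamma_{q,\sigma^*(q)}$ is supported on the zero set; the adversary may instead pick a distribution placing mass back in $Q_{\rm{reach}}$. Concretely, with $m_{\max}=1$, take $q_1\in Q_{\rm{reach}}$ from which the adversary sends mass $1-\delta^*$ to some $q_2\notin Q_{\rm{reach}}$, and from $q_2$ sends the path deterministically back to $q_1$: then $P[\pathmdp(k)\in Q_{\rm{reach}}\setminus Q_{\rm{tgt}}]$ decays like $(1-\delta^*)^{k/2}$, violating your bound. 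Your recursion only closes on the event that the path sits in $Q_{\rm{reach}}$ at \emph{every} block boundary, which is not implied by being in $Q_{\rm{reach}}$ at time $k\,m_{\max}$. The paper avoids this by counting \emph{visits} to $Q_{\rm{reach}}$ rather than elapsed blocks: each visit incurs a uniform probability $p>0$ of following a finite path fragment into the absorbing state $q_u$, so the probability of at least $k$ visits without absorption is at most $(1-p)^k$, whence $Q_{\rm{reach}}$ is visited only finitely often almost surely and \eqref{proper:strategy} follows. Your argument is repairable by switching to this visit-counting bookkeeping, but as written the key estimate does not hold. Your handling of the $Q_{\rm{tgt}}\subseteq Q_{\rm{reach}}$ subtlety via absorption is fine and is exactly what the auxiliary model $\widetilde{\mathcal{M}}$ in the appendix implements.
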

Theorem~\ref{thm:optimal_strategy_unbounded_horizon} 
shows how to obtain an optimal stationary strategy via a backward reachability analysis.
%similar to robust value iteration. \DB{It is a backward reachability analysis. Can't see such a strong resemblance to value iteration.}
Note that, although this algorithm has to be employed once $\underline p^{\infty}$ has been obtained via robust dynamic programming, it converges within $|Q_{\rm{reach}}| - |Q_{\rm{tgt}}|$ iterations. Consequently, it is generally much faster than performing robust dynamic programming.

Having computed $\sigma^*$, we obtain the upper bound $\overline p^\infty$ in the reachability probability by iterating on
\eqref{eq:robust_value_iteration_upper_bound} until we achieve convergence, to $\overline p^\infty$, starting from $\overline p^0(q) = 1$ for all $q\in Q_{\rm{tgt}}$ and $0$ otherwise. The proof is analogous to that of Theorem~\ref{thm:robust_dynamic_programming_infinite_horizon_lower_bound}, and is omitted for simplicity.

\subsection{Computation of Robust Dynamic Programming via Linear Programming}
\label{sec:LinearProgrammingValueITeration}

We now show how for each state $\qimdp \in \Qimdp$ and time $k\in [0:K-1],$ dynamic programming in \eqref{eq:robust_value_iteration_lower_bound} reduces to solving $|A|$ linear programs. 
%Let us discuss now the computational features of the robust dynamic programming algorithm introduced in \ref{sec:robust_value_iteration}. First of all, we highlight that, since iteration \eqref{eq:robust_value_iteration_lower_bound} is independent of $\overline p^K$, we can compute this recursion and obtain the strategy $\optpolicy$ first. Then we compute the second recursion \eqref{eq:robust_dynamic_programming_optimal_strategy_robust_MDP} and obtain the upper bound in probability. While $\underline p^K$ represents the worst-case probability of satisfying the specification, $\overline p^K$ is useful to analyze the error of the solution \cite{cauchi2019efficiency}. Secondly, we prove that the inner problem in recursion \eqref{eq:robust_value_iteration_lower_bound} is equivalent to a linear program.
In particular, Theorem \ref{th:RVI_LP} below guarantees that the inner problem in recursion \eqref{eq:robust_value_iteration_lower_bound} can be solved via linear programming. In what follows we explicitly consider $ \underline p^k$, the upper bound $ \overline p^k$ follows similarly.  
Before formally stating our result, we need to introduce some notations. Let $W\subseteq\mathbb{R}^n$ be a set containing the support of $\distribution_v^{\text{true}}$, and for each $q\in Q$ and $a\in A$ denote
\begin{subequations}
\begin{align} 
\overline{\mathcal N}^{q,a} :=  & \{i\in \mathcal N:\overline P(q,a,q_i)>0\}, \label{index:set:hatNqa} \\
\overline{\mathcal N}_W^{q,a} := & \{i\in \mathcal N : (f_a(q)+W)\cap  q_i\ne\emptyset\},  \label{index:set:Nqa}
\end{align}
\end{subequations}
where  $\overline P(q,a,q_i)$ is the upper transition probability bound of the nominal IMDP as defined in \eqref{eqn:transition_probability_bounds1} and \eqref{eq:transition_probability_bounds3}.
$\overline{\mathcal N}^{q,a}$ and $\overline{\mathcal N}^{q,a}_W$ represent respectively the set of indices of the discrete states that are reachable from $q$ under action $a$ for the nominal noise distribution, and for every distribution supported on $W$.
Note that as the nominal distribution is also supported on $W$, it holds that  $\overline{\mathcal N}^{q,a} \subseteq \overline{\mathcal N}_W^{q,a}$, and 
if $W = \mathbb{R}^n$, then $\overline{\mathcal N}_W^{q,a}=\mathcal N$. 
Intuitively, in Theorem \ref{th:RVI_LP}, from each starting region $q$, we only consider transitions to states indexed by $\overline{\mathcal N}_W^{q,a}$ as those are the only states reachable by some distribution with support in $W$.
%in Theorem \ref{th:RVI_LP} for each starting region $q$ we can constrain the transport of mass to only states in $\overline{\mathcal N}_W^{q,a}$ as those are the only states reachable by any distribution with support in $W$.

%when $W$ is bounded and smaller than the safe set $X$, constraining the transport of mass to the set $\overline{\mathcal N}_W^{q,a}$ can be employed to reduce the size of $\Gamma$, thus yielding tighter results.}

\begin{thm}[Robust Dynamic Programming as a Linear Program]
\label{th:RVI_LP}
Consider the robust dynamic programming \eqref{eq:robust_value_iteration_lower_bound} for the robust MDP $\mathcal{M} = (\Qmdp,\Amdp,\Gamma)$ and assume that the support of $P_v^{\rm true}$ is contained in the set $W\subseteq\mathbb R^n$. Then, for every $k\in [0:K-1]$,  $\qmdp\in Q,$ and $a\in A$, the inner minimization problem in \eqref{eq:robust_value_iteration_lower_bound} is equivalent to the following linear program: 
\begin{align}
\label{eq:LP_cost}
\min_{\gamma_i,\widehat\gamma_j,\pi_{ij}} \;
\sum_{i\in \overline{\mathcal N}_W^{q,a}}\gamma_i\underline p^{k}(\qmdp_i), \hspace{12em}
\end{align}\vspace{-1em}
\begin{subequations} \label{eq:feasible set_robust_MDP_cost} 
\begin{align}
{\rm s.t.} \;  \underline P(\qmdp,a,\qmdp_j) \le \widehat\gamma_j &\le \overline P(\qmdp,a,\qmdp_j)  \qquad  &j\in\overline{\mathcal N}^{q,a}  \label{constraint:a} \\
 \sum_{j\in \overline{\mathcal N}^{q,a}}\widehat\gamma_j &= 1 & \label{constraint:b} \\
 \pi_{ij}  &\geq 0, &i\in\overline{\mathcal N}_W^{q,a},j\in \overline{\mathcal N}^{q,a} \label{constraint:c} \\
 \sum_{i\in \overline{\mathcal N}_W^{q,a}}\pi_{ij}  &=\widehat \gamma_j, \qquad &j\in \overline{\mathcal N}^{q,a} \label{constraint:d} \\
 \sum_{j\in \overline{\mathcal N}^{q,a}}\pi_{ij} &=\gamma_i,  &i\in \overline{\mathcal N}_W^{q,a} \label{constraint:e} \\
 \sum_{i\in\overline{\mathcal N}_W^{q,a},j\in\overline{\mathcal N}^{q,a}}\pi_{ij}c(q_i,q_j) &\le \varepsilon^s,& \label{constraint:f}
\end{align}
\end{subequations}
where $\underline P$, $\overline P$, and $c$ are defined in \eqref{eqn:transition_probability_bounds1},  \eqref{eq:transition_probability_bounds3}, and \eqref{eq:cost}, respectively, and $s$ is given in Assumption \ref{ass:1}.
\end{thm}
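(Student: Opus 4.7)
My plan is to carry out a two-stage reformulation: first unfold the defining union in \eqref{eq:set_transition_probabilities_robust_MDP}, then substitute the Kantorovich linear programming representation of the discrete optimal transport cost $\mathcal{T}_c$. A final bookkeeping step handles the support restrictions to $\overline{\mathcal N}^{q,a}$ and $\overline{\mathcal N}_W^{q,a}$.

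\textbf{Step 1 (Unfolding the uncertainty set).} Since by \eqref{eq:set_transition_probabilities_robust_MDP} every $\gamma\in\Gamma_{q,a}$ lies within $\mathcal{T}_c$-distance $\varepsilon^s$ of some $\widehat\gamma\in\widehat\Gamma_{q,a}$, I would rewrite the inner minimization in \eqref{eq:robust_value_iteration_lower_bound} as a joint minimization over pairs $(\widehat\gamma,\gamma)$:
$$
\min_{\gamma\in\Gamma_{q,a}}\sum_{q'\in Q}\gamma(q')\underline p^{k}(q')
\;=\;\min_{\widehat\gamma\in\widehat\Gamma_{q,a}}\;\;\min_{\gamma:\,\mathcal{T}_c(\gamma,\widehat\gamma)\le \varepsilon^s}\sum_{q'\in Q}\gamma(q')\underline p^{k}(q').
$$
The feasibility of $\widehat\gamma$ is described by the interval constraints of \eqref{eq:set_transition_probabilities_nominal_IMDP}, and because $\overline P(q,a,q_j)=0$ for $j\notin\overline{\mathcal N}^{q,a}$ by definition \eqref{index:set:hatNqa}, the indices $j\notin\overline{\mathcal N}^{q,a}$ are forced to satisfy $\widehat\gamma_j=0$ and can be dropped from the program. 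This yields precisely the block of constraints \eqref{constraint:a}--\eqref{constraint:b}.

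\textbf{Step 2 (Kantorovich LP for $\mathcal{T}_c$).} Because $Q$ is finite, the discrete optimal transport cost $\mathcal{T}_c(\gamma,\widehat\gamma)$ equals the minimum of $\sum_{i,j}\pi_{ij}c(q_i,q_j)$ over nonnegative couplings $\pi$ with marginals $\gamma$ and $\widehat\gamma$. Hence $\mathcal{T}_c(\gamma,\widehat\gamma)\le\varepsilon^s$ is equivalent to the existence of $\pi\ge 0$ satisfying the marginal conditions together with $\sum_{i,j}\pi_{ij}c(q_i,q_j)\le\varepsilon^s$. Introducing $\pi$ as a decision variable and combining with the objective $\sum_i\gamma_i\underline p^k(q_i)$ turns the inner minimization into a single linear program in $(\gamma_i,\widehat\gamma_j,\pi_{ij})$ that already contains \eqref{constraint:c}--\eqref{constraint:f}.

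\textbf{Step 3 and main obstacle.} The remaining subtlety is justifying the restriction of the $\gamma$-variables to $i\in\overline{\mathcal N}_W^{q,a}$, and this is where I expect the only real obstacle: in principle $\Gamma_{q,a}$, as given by \eqref{eq:set_transition_probabilities_robust_MDP}, may contain distributions supported outside $\overline{\mathcal N}_W^{q,a}$. The cleanest way around this is to argue that, since $\mathrm{supp}(P_v^{\rm true})\subseteq W$, Proposition~\ref{prop:consistency} remains valid when $\Gamma_{q,a}$ is intersected with the set of distributions supported on $\overline{\mathcal N}_W^{q,a}$: indeed, for every $x\in q$ and every $p_v$ supported in $W$, the image $\gamma_{x,a}$ is supported on the states $q_i$ with $(f_a(q)+W)\cap q_i\ne\emptyset$, i.e.\ on $\overline{\mathcal N}_W^{q,a}$ by \eqref{index:set:Nqa}. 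This tightening therefore preserves soundness of the abstraction and legitimizes removing the variables $\gamma_i$ and couplings $\pi_{ij}$ with $i\notin\overline{\mathcal N}_W^{q,a}$. Once this support reduction is in place, collecting the constraints from Steps 1 and 2 and substituting the objective gives exactly the linear program \eqref{eq:LP_cost}--\eqref{eq:feasible set_robust_MDP_cost}.
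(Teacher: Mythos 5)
Your proposal is correct, and its core --- unfolding the union in \eqref{eq:set_transition_probabilities_robust_MDP} into a joint minimization over pairs $(\widehat\gamma,\gamma)$ and then substituting the finite-dimensional Kantorovich LP characterization of $\mathcal{T}_c(\gamma,\widehat\gamma)\le\varepsilon^s$ --- is exactly the route the paper takes. Where you genuinely diverge is Step 3. The paper first proves the equivalence with all index sets equal to $\mathcal N$ and then handles general $W$ by asserting that $\gamma$ and $\pi$ vanish outside $\overline{\mathcal N}_W^{q,a}$ and $\overline{\mathcal N}_W^{q,a}\times\overline{\mathcal N}^{q,a}$; read literally for an arbitrary $\gamma\in\Gamma_{q,a}$ as defined in \eqref{eq:set_transition_probabilities_robust_MDP}, that claim is not true, since a transport ball around $\widehat\gamma$ can push mass onto states outside $\overline{\mathcal N}_W^{q,a}$. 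You correctly identify this as the one real obstacle and resolve it differently: you treat the restriction to $i\in\overline{\mathcal N}_W^{q,a}$ as a deliberate tightening of $\Gamma_{q,a}$ and justify its soundness by observing that Proposition~\ref{prop:consistency} survives intersecting $\Gamma_{q,a}$ with the distributions supported on $\overline{\mathcal N}_W^{q,a}$, because every $T^{a}_{\distribution_v}(\cdot\mid x)$ with $\mathrm{supp}(\distribution_v)\subseteq W$ is so supported by \eqref{index:set:Nqa}. This buys a more honest account of what the LP computes --- the minimum over a tightened, still consistent ambiguity set rather than over \eqref{eq:set_transition_probabilities_robust_MDP} verbatim --- which is what the paper intends when it speaks of ``embedding'' the support constraint into the abstraction but does not spell out; the paper's version is shorter but leans on an unproved support claim.
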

\begin{proof}
We show that, for a fixed state $\qmdp\in \Qmdp$ and action $a\in \Amdp$, the set of transition probabilities $\Gamma_{q,a}$ defined in \eqref{eq:set_transition_probabilities_robust_MDP} is the polytope described by the linear equations~\eqref{eq:feasible set_robust_MDP_cost}. First, assume that $W\equiv\mathbb R^n$ and  consider the expressions in \eqref{eq:feasible set_robust_MDP_cost} for $\overline{\mathcal N}_W^{q,a} = \overline{\mathcal N}^{q,a} = \mathcal{N}$.  Let $\gamma\equiv(\gamma_1,\ldots,\gamma_N)\in \Gamma_{\qmdp,a}$. From the definition of $\Gamma_{\qmdp,a}$ in \eqref{eq:set_transition_probabilities_robust_MDP}, there exist $\widehat\gamma\equiv(\widehat\gamma_1,\ldots,\widehat\gamma_N)\in\widehat\Gamma_{\qmdp,a}$ and an optimal transport plan $\pi\equiv(\pi_{ij})_{i,j=1,\ldots,N}$ that transports mass from $\widehat\gamma$ to $\gamma$ with a cost $\mathcal{T}_c(\gamma,\widehat\gamma)$ smaller than $\varepsilon^s$. Consider now the set $\widehat{\Gamma}_{q,a}$ as defined in \eqref{eq:set_transition_probabilities_nominal_IMDP} with the transition probability bounds $\underline P$ and $\overline P$ given by \eqref{eqn:transition_probability_bounds1} and \eqref{eq:transition_probability_bounds3}. Then $\widehat\gamma$ satisfies the constraints \eqref{constraint:a} and \eqref{constraint:b}. Since the optimal transport cost $\mathcal{T}_c(\gamma,\widehat\gamma)$ is attained by the transport plan $\pi$, it follows from  \eqref{eq:optimal_transport_discrepancy} with $X\equiv Q$ and $c$ as in \eqref{eq:cost} that 
\begin{align*}
\mathcal{T}_c(\gamma,\widehat\gamma)=\sum_{i,j\in \mathcal N}\pi_{ij}c(q_i,q_j). 
\end{align*}
Thus, since $\mathcal{T}_c(\gamma,\widehat\gamma)$ is less than $\varepsilon^s$, we deduce that $\gamma$, $\widehat\gamma$, and $\pi$ satisfy the linear constraints \eqref{constraint:c}-\eqref{constraint:f}. Conversely, one can check along the same lines that for any $\gamma\equiv(\gamma_1,\ldots,\gamma_N)$, $\widehat\gamma\equiv(\widehat\gamma_1,\ldots,\widehat\gamma_N)$, and $\pi\equiv(\pi_{ij})_{i,j=1,\ldots,N}$ satisfying \eqref{eq:feasible set_robust_MDP_cost}, it also holds that $\gamma\in\Gamma_{q,a}$.

In the general case where the support of $p_v^{\rm true}$ belongs to some known set $W\subseteq\mathbb R^n$, we can embed this constraint into the ambiguity sets of transition probabilities and their projections on the abstraction. In particular,
we can describe the set of transition probabilities $\Gamma_{q,a}$ via transport plans that couple the regions reachable through the nominal distribution with the regions reachable by some distribution in the ambiguity set.  From \eqref{index:set:Nqa} and \eqref{index:set:hatNqa}, these regions are indexed respectively by $\overline{\mathcal N}^{q,a}$ and $\overline{\mathcal N}_W^{q,a}$. We can therefore just substitute the set $\mathcal{N}$ by the previous ones, since $\widehat\gamma$, $\gamma$, and $\pi$ are zero for indices outside $\overline{\mathcal N}^{q,a}$, $\overline{\mathcal N}_W^{q,a}$, and $\overline{\mathcal N}_W^{q,a}\times\overline{\mathcal N}^{q,a}$, respectively. The proof is now complete.
\end{proof}

Intuitively, for each state $\qmdp\in \Qmdp$ and action $a\in \Amdp$, 
the Constraints \eqref{constraint:a}-\eqref{constraint:f} capture the union of 
the $c$-transport cost ambiguity balls in $\mathcal{D}(Q)$ that have radius 
$\varepsilon^s$ and centers all possible distributions of the nominal IMDP. 
Specifically, Constraints \eqref{constraint:a} and \eqref{constraint:b} 
represent the distributions $\widehat\gamma$ of the nominal IMDP, i.e., the 
set $\widehat\Gamma_{q,a}$. Constraints 
\eqref{constraint:c}-\eqref{constraint:e} describe a transport plan $\pi$, 
i.e., a nonnegative measure on $Q\times Q$ (cf. \eqref{constraint:c}), which 
has as its marginals the distribution $\widehat\gamma$ of the nominal IMDP (cf. 
\eqref{constraint:d}), and the target distribution $\gamma$ (cf. 
\eqref{constraint:e}), respectively. Finally, Constraint \eqref{constraint:f} 
implies that transport cost to reach the target distribution $\gamma$ is 
bounded by $\varepsilon^s$, namely, that $\gamma$ belongs to the $c$-transport 
cost ambiguity ball of radius $\varepsilon^s$.

\begin{rem}
Theorem \ref{th:RVI_LP} guarantees that, similarly to \emph{IMDP}s without distributional uncertainty \cite{givan2000bounded,lahijanian2015formal}, optimal policies can be computed by solving a set of linear programs. In particular, for every $k\in [0:K-1]$ and $\qmdp\in Q,$ we can solve the linear program in Theorem \ref{th:RVI_LP} for each $a\in A$ and take the action that maximizes the resulting value function. However, in the IMDP case the resulting linear program has substantially fewer variables and constraints compared to the problem in Theorem \ref{th:RVI_LP} (order of $N$ for IMDPs, against order of $N^2$ for Theorem \ref{th:RVI_LP}). Nevertheless, as we detail in Subsection \ref{sec:DedicatedAlgorithmValueITeration}, we can reformulate the LP in Theorem \ref{th:RVI_LP} to obtain a problem of substantially lower complexity, which can be solved more efficiently.

%Nevertheless, as we detail in Appendix \ref{app:reduction_complexity_LP}, the number of variables and constraints in our approach can often be substantially reduced when the support of the noise is bounded.

\end{rem}

\subsection{Dual Reformulation of the Robust Dynamic Programming Algorithm}
\label{sec:DedicatedAlgorithmValueITeration}

In this section, we make use of linear programming duality to obtain a computationally efficient  solution to the linear program \eqref{eq:LP_cost}-\eqref{eq:feasible set_robust_MDP_cost}. The approach consists of formulating the dual of program \eqref{eq:LP_cost}-\eqref{eq:feasible set_robust_MDP_cost}, as common in the literature of robust MDPs\cite{ramani2022robust,el2005robust,iyengar2005robust}, and then introducing a tailored algorithm that takes into account its structure to solve it efficiently. 
The following theorem provides the dual  reformulation of the linear program \eqref{eq:LP_cost}-\eqref{eq:feasible set_robust_MDP_cost}.

%\LL{What is $\mathcal N_{q,a}^{nom}?$ I do not see it defined in 6.7}
%\DB{Fixed:)}
%
\begin{thm}
\label{thm:dual_problem}
Consider the optimization problem
\begin{align} \label{eq:dual_hscc}
\max_{\mu\ge 0,\lambda} G(\lambda,\mu),
\end{align}
with 
\begin{subequations} \label{eq:dual_hscc:fnc}
\begin{align}
G(\lambda,\mu) \; & := \sum_{j \in \overline{\mathcal N}^{q,a}}\min\{\underline P(q,a,q_j)( h_j(\mu)-\lambda),\overline P(q,a,q_j)(h_j(\mu)-\lambda)\} -\mu\epsilon +\lambda \label{eq:dual_hscc1} \\
h_j(\mu) \; & := \min_{i \in \overline{\mathcal N}_W^{q,a}}\{\underline p^k(q_i) + \mu c(q_i,q_j)\}. 
\label{eq:dual_hscc2}
\end{align}
\end{subequations}
%\begin{align}
%G(\lambda,\mu):= \sum_{j \in \overline{\mathcal N}^{q,a}}\min\{\underline P(q,a,q_j)( h_j(\mu)-\lambda),\overline P(q,a,q_j)(h_j(\mu)-\lambda)\} -\mu\epsilon +\lambda
%\end{align}
%
%with 
%
%\begin{align}
%\label{eq:dual_hscc1}
%    h_j(\mu):= \min_{i \in \overline{\mathcal N}_W^{q,a}}\{\underline p^k(q_i) + \mu c(q_i,q_j)\}
%\end{align}
%
%for each $j \in \overline{\mathcal N}^{q,a}$, and the optimization problem
%
%\begin{equation}
%\label{eq:dual_hscc}
%\begin{split}
%    \max_{\mu\ge 0, \lambda} G(\lambda,\mu).
%\end{split}
%\end{equation}
%
Then, the function $G$ is concave, and the optimal value of problem \eqref{eq:dual_hscc}-\eqref{eq:dual_hscc:fnc} is the same as that of problem \eqref{eq:LP_cost}-\eqref{eq:feasible set_robust_MDP_cost}.
Furthermore, for every $\mu\ge 0$, the maximization of $G$ with respect to $\lambda$ can be carried out by only evaluating it as $\lambda$ ranges over the finite set $\{h_j(\mu)\}_{j \in \overline{\mathcal N}^{q,a}}$, i.e.,
%the optimal value of $\lambda$ belongs to the finite set $\{h_j(\mu)\}_{j \in \overline{\mathcal N}^{q,a}}$, i.e.,
%
\begin{align*}
    %\arg
    \max_\lambda G(\lambda,\mu) = \max_{j\in \overline{\mathcal N}^{q,a}}\{G(h_j(\mu),\mu)\}.
\end{align*}
\end{thm}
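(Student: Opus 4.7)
My plan is to derive \eqref{eq:dual_hscc}--\eqref{eq:dual_hscc:fnc} by taking the LP dual of \eqref{eq:LP_cost}--\eqref{eq:feasible set_robust_MDP_cost}, and then to exploit the piecewise-linear structure of the resulting objective in $\lambda$ for the finite-max characterization. Strong LP duality will supply the equality of primal and dual optima, and concavity of $G$ will follow for free from its representation as a partial infimum of a linear Lagrangian.

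First I would eliminate $\gamma_i$ via \eqref{constraint:e} to write the primal objective as $\sum_{i,j}\pi_{ij}\underline p^k(q_i)$, and introduce multipliers $\tau_j\in\mathbb R$ for \eqref{constraint:d}, $\lambda\in\mathbb R$ for the normalization \eqref{constraint:b}, and $\mu\ge 0$ for the transport-cost constraint \eqref{constraint:f}, keeping the box constraints \eqref{constraint:a} and the sign constraints \eqref{constraint:c} explicit. Collecting terms yields
\begin{align*}
L=\sum_{i,j}\pi_{ij}\bigl(\underline p^k(q_i)-\tau_j+\mu c(q_i,q_j)\bigr)+\sum_{j}\widehat\gamma_j(\tau_j-\lambda)+\lambda-\mu\epsilon.
\end{align*}
Minimization over $\pi_{ij}\ge 0$ is finite iff $\tau_j\le \underline p^k(q_i)+\mu c(q_i,q_j)$ for every admissible $(i,j)$, i.e., $\tau_j\le h_j(\mu)$ as in \eqref{eq:dual_hscc2}. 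Independently minimizing $\widehat\gamma_j(\tau_j-\lambda)$ over $\widehat\gamma_j\in[\underline P(q,a,q_j),\overline P(q,a,q_j)]$ gives $\min\{\underline P(q,a,q_j)(\tau_j-\lambda),\overline P(q,a,q_j)(\tau_j-\lambda)\}$. Because $0\le\underline P(q,a,q_j)\le\overline P(q,a,q_j)$, this expression is nondecreasing in $\tau_j$, so the dual max over $\tau_j\le h_j(\mu)$ is attained at $\tau_j=h_j(\mu)$, collapsing the dual to exactly \eqref{eq:dual_hscc}--\eqref{eq:dual_hscc:fnc}. Strong duality then follows from standard LP duality after verifying primal feasibility, which is immediate since the diagonal coupling of the nominal distribution $\widehat p_v$ lies in $\widehat\Gamma_{q,a}$ and incurs zero transport cost.

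Concavity of $G$ is immediate from its construction as the infimum over the primal variables of an expression affine in $(\lambda,\mu)$. For the reduction of $\max_\lambda G(\lambda,\mu)$ to a finite maximum, fix $\mu\ge 0$ and observe that each summand $g_j(\lambda):=\min\{\underline P(q,a,q_j)(h_j(\mu)-\lambda),\overline P(q,a,q_j)(h_j(\mu)-\lambda)\}$ is piecewise linear in $\lambda$ with a single breakpoint at $\lambda=h_j(\mu)$: its slope is $-\underline P(q,a,q_j)$ on $\{\lambda<h_j(\mu)\}$ and $-\overline P(q,a,q_j)$ on $\{\lambda>h_j(\mu)\}$. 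Hence the right derivative $\partial_\lambda^+ G(\lambda,\mu)$ is a nonincreasing step function with jumps only at the finite set $\{h_j(\mu)\}_{j\in\overline{\mathcal N}^{q,a}}$. A concave piecewise-linear function on $\mathbb R$ attains its maximum at one of its kinks, so the supremum is realized at some $\lambda=h_j(\mu)$, yielding the claimed identity. The main subtlety I anticipate is precisely the monotonicity step that collapses $\tau_j$ to $h_j(\mu)$: this uses the nonnegativity of the transition probability bounds, and would demand a case split if sign-indefinite coefficients were allowed.
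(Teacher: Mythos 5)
Your proposal is correct and rests on the same two pillars as the paper's proof: LP strong duality to equate the optimal values, and the piecewise-linear/concave structure of $\lambda\mapsto G(\lambda,\mu)$ to reduce the inner maximization to the finite set of breakpoints $\{h_j(\mu)\}_{j\in\overline{\mathcal N}^{q,a}}$ (your step for the latter is essentially verbatim the paper's). Where you differ is in the bookkeeping of the dualization. The paper first eliminates $\gamma$ and $\widehat\gamma$ entirely, splits the box constraint \eqref{constraint:a} into two one-sided inequalities with separate nonnegative multipliers $\alpha_j,\beta_j$, writes the full LP dual, and then eliminates $\beta_j$ (monotone decrease of the objective) and $\alpha_j$ (explicit maximization of a two-piece affine function over $\alpha_j\ge 0$) by hand. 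You instead keep $\widehat\gamma_j$ as a primal variable constrained to its box, dualize only the coupling, normalization, and cost constraints with a single free multiplier $\tau_j$ per $j$, read off the partial minimization over $\widehat\gamma_j$ directly as $\min\{\underline P_j(\tau_j-\lambda),\overline P_j(\tau_j-\lambda)\}$, and then push $\tau_j$ up to $h_j(\mu)$ by the monotonicity that follows from $\underline P_j,\overline P_j\ge 0$. Your route is arguably cleaner — one multiplier and one monotonicity observation replace the paper's two-stage elimination — and it makes explicit where nonnegativity of the transition bounds enters, which the paper leaves implicit in its case analysis for $\alpha_j$. One minor imprecision: concavity of $G$ is not literally ``immediate from its construction as the infimum over the primal variables of an expression affine in $(\lambda,\mu)$,'' because your $G$ is obtained from the (jointly concave) dual function of $(\tau,\lambda,\mu)$ by a further partial maximization over $\tau$ subject to the convex constraint $\tau_j\le h_j(\mu)$; you need the standard fact that partial maximization of a jointly concave function over a convex set preserves concavity, or else verify concavity directly from the final formula as the paper does ($h_j$ concave as a pointwise minimum of affine functions, each summand concave as a minimum of concave functions). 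This is easily patched and does not affect the validity of the argument.
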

The proof of Theorem~\ref{thm:dual_problem} is given in \ref{app:proof_dual}. 
Theorem \ref{thm:dual_problem} allows one to formulate Problem \eqref{eq:LP_cost}-\eqref{eq:feasible set_robust_MDP_cost} as a concave maximization problem in two scalar variables, $\lambda$ and $\mu$. Furthermore, as Problem~\eqref{eq:dual_hscc} is derived using duality, if we solve it with an iterative algorithm, then the algorithm can be stopped at any time to obtain a valid lower bound; thus, guaranteeing correctness. Note that $G(\lambda,\mu)$ is bivariate, and therefore its minimization does not fall within the class of dual optimization problems in \cite{ramani2022robust}, which also considers Wasserstein ambiguity sets. Nevertheless, our problem essentially reduces to the maximization of the univariate concave function $g(\mu) := \max_\lambda G(\lambda, \mu)$, whose 
values can be found by simply performing a finite search over the set $\{G(h_j(\mu),\mu)\}_{j\in \overline{\mathcal N}^{q,a}}$ by the last part of Theorem \ref{thm:dual_problem}. Therefore, the maximum can be found efficiently by using standard tools from scalar convex optimization.

\subsection{Correctness}
\label{sec:correctness}

In this section, we prove the correctness of our abstraction for System 
\eqref{eq:system}. We begin by refining the strategy $\policy^*$  to System 
\eqref{eq:system}. Let $J:\mathbb{R}^n\rightarrow \Qmdp$ map the continuous 
state $x\in\mathbb{R}^n$ to the corresponding discrete state $\qmdp\in \Qmdp$ 
of $\mathcal{M}$, i.e., for any $x\in\mathbb{R}^n$, 
\begin{align}\label{J:dfn}
J(x) = \qmdp \iff x\in \qmdp.
\end{align}
Given a finite path $\mypath_\x^k=\x_0 \xrightarrow{\pu_0} \x_1 
\xrightarrow{\pu_1} \ldots \xrightarrow{\pu_{k-1}} \x_k$ of System 
\eqref{eq:system}, we define by
$$ J(\mypath_\x^k)= J(\x_0) \xrightarrow{\pu_0} J(\x_1) \xrightarrow{\pu_1}  
\ldots \xrightarrow{\pu_{k-1}} J(\x_k)$$
the corresponding finite path of the MDP abstraction. Consequently, a strategy 
$\policy^*$ for $\mathcal{M}$ is refined to a switched strategy 
$\policy_x^*$ for System \eqref{eq:system} such that
%\begin{align}
%\label{eq:refine_strat}
%    \policy_x^*(\mypath_\x^k) := \policy^*(J(\mypath_\x^k)).
%\end{align}
%to finite paths $\pathmdp^k=\qimdp_0 \xrightarrow{u_0} \qimdp_1 \xrightarrow{u_1}  \ldots \xrightarrow{u_{k-1}} \qimdp_k$ of $\mathcal{M}$. With a small abuse of notation, we say
%\begin{align*}
%    \mypath^k = J(\mypath_\x^k) = J(x(0))J(x(1))\dots J(x(k)) = \qmdp(0)\qmdp(1)\dots \qmdp(k).
%\end{align*}
%The refinement of strategy $\policy^*$ to System \eqref{eq:system} is given by:
\begin{align}
\label{eq:refine_strat}
    \policy_x^*(\mypath_\x^k) := \policy^*(J(\mypath_\x^k)).
\end{align}
The following theorem, which is a direct consequence of Theorem 
\ref{th:RVI_LP} and Theorem 2 in \cite{jackson2021formal}, ensures that the 
guarantees obtained for the robust MDP abstraction also hold for System~\eqref{eq:system}.
\begin{thm}[Correctness]
\label{th:correctness}
%Consider System \eqref{eq:system} and the reachability probability 
%\eqref{eq:reachability_original_system}. Furthermore, consider the robust MDP 
%abstraction $\mathcal{M}$ of the former obtained as in Section 
%\ref{sec:robust_MDP_abstraction}, and the strategy $\policy^*$ and bounds 
%$\underline p^K$ and $\overline p^K$ obtained for $\mathcal{M}$ as explained 
%in 
%Section \ref{sec:synthesis}. Consider also strategy $\policy_x^*$ obtained by 
%refining $\policy^*$ to System \eqref{eq:system} as in 
%\eqref{eq:refine_strat}. 
%
Let $\mathcal{M}$ be a robust MDP abstraction of System~\eqref{eq:system}, $\optpolicy\in \Sigma$ be an optimal strategy for $\mathcal{M}$, and $\optpolicy_\x$ be the corresponding switching strategy. Then, for every
%he refined strategy $\policy_x^*$ for system \eqref{eq:system} of the optimal 
%strategy $\optpolicy$ of $\M$ guarantees that the reachability probability 
%\eqref{eq:reachability_original_system} is bounded for every
%
%Then for any 
$q\in Q$, $x\in \qmdp$, and $p_v\in\mathcal{P}_v$, it holds that
%by
\begin{align*}
    \overline p^K(\qmdp) \geq P_{\rm{reach}}( X, X_{\rm tgt},K \mid x, 
    \optpolicy_{\x},\distribution_v) \geq\underline p^K(\qmdp),%\\
 %   P_{\rm{reach}}( X, X_{\text{tgt}},K \mid x, \optpolicy_{\x},\distribution_v)\leq \overline p^K(\qmdp).
\end{align*}
where $\underline p^K$ and $\overline p^K$ are defined in \eqref{eq:def_bounds_reachability_robust_MDPs_lower} and \eqref{eq:def_bounds_reachability_robust_MDPs_upper}.
\end{thm}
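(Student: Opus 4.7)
The plan is to show that the reach-avoid probability of System~\eqref{eq:system} under $\optpolicy_\x$ and any admissible noise distribution $\distribution_v\in\mathcal{P}_v$ coincides with the reach-avoid probability on $\mathcal{M}$ under $\optpolicy$ for a suitably chosen adversary in $\Xi$. Once this correspondence is established, the claim will follow immediately from the definitions \eqref{eq:def_bounds_reachability_robust_MDPs_lower}-\eqref{eq:def_bounds_reachability_robust_MDPs_upper} of $\underline p^K$ and $\overline p^K$ as the infimum and supremum of such reach-avoid probabilities over all adversaries.

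The core ingredient is Proposition~\ref{prop:consistency}, which guarantees that for any $q\in Q_{\rm safe}$, $a\in A$, $x\in q$, and $\distribution_v\in\mathcal{P}_v$, the one-step transition distribution $\gamma_{x,a}(\cdot)=T^a_{\distribution_v}(\cdot\mid x)$ lies in $\Gamma_{q,a}$. Based on this, I would construct an adversary $\xi_{x_0,\distribution_v}\in\Xi$ as follows: for each finite abstract path $\pathmdp^k=q_0\xrightarrow{a_0}\dots\xrightarrow{a_{k-1}}q_k$ and each action $a\in A(q_k)$, I would assign to $\xi_{x_0,\distribution_v}(\pathmdp^k,a)$ a distribution in $\Gamma_{q_k,a}$ that reproduces the true conditional distribution of $J(\x_{k+1})$ given the continuous past. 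Since the true continuous transition depends on the specific state in $q_k$ (not merely on $q_k$), this adversary must be history-dependent; this is allowed by Definition~\ref{def:adversary}. More concretely, conditioning on the event $\{J(\x_i)=q_i,\,\u_i=a_i\text{ for }i\le k\}$ induces a conditional distribution of $\x_k$ supported on $q_k$, and one averages $\gamma_{x,a}$ over this conditional to obtain an element of the convex set $\Gamma_{q_k,a}$, which is the required adversary choice.

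With $\xi_{x_0,\distribution_v}$ and $\optpolicy_\x$ thus aligned with $\optpolicy$ via the refinement rule \eqref{eq:refine_strat}, I would prove by induction on $k\in[0:K]$ that the pushforward under $J$ of the path distribution $P^{x_0,\optpolicy_\x}_{\distribution_v}$ on $(\reals^n)^{\mathbb N}$ coincides with the path distribution $P^{q_0,\optpolicy}_{\xi_{x_0,\distribution_v}}$ on $\mathcal{M}$. The base case uses $J(x_0)=q_0$, while the inductive step follows from the Ionescu--Tulcea construction together with the matching of one-step transitions. Because $X=\bigcup_{q\in Q_{\rm safe}} q$ and $X_{\rm tgt}=\bigcup_{q\in Q_{\rm tgt}} q$ are exact unions of partition cells, the events $\{\mypathX(k')\in X\text{ for }k'<k,\,\mypathX(k)\in X_{\rm tgt}\}$ and $\{\pathmdp(k')\in Q_{\rm safe}\text{ for }k'<k,\,\pathmdp(k)\in Q_{\rm tgt}\}$ coincide under $J$. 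Hence
\begin{equation*}
P_{\rm reach}(X,X_{\rm tgt},K\mid x,\optpolicy_\x,\distribution_v)
=P_{\rm reach}(Q_{\rm safe},Q_{\rm tgt},K\mid q,\optpolicy,\xi_{x_0,\distribution_v}),
\end{equation*}
which lies between $\underline p^K(q)$ and $\overline p^K(q)$ by definition. For the infinite-horizon case $K=\infty$, I would pass to the limit via the monotone convergence theorem on the increasing sequence of events parametrized by $k$, invoking Theorem~\ref{thm:robust_dynamic_programming_infinite_horizon_lower_bound} to justify that the finite-horizon lower bounds converge to $\underline p^\infty$.

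The main obstacle is the second step: constructing $\xi_{x_0,\distribution_v}$ as a measurable, history-dependent adversary and verifying that the pushforward equality holds rigorously. The measure-theoretic bookkeeping is delicate because the continuous process retains information that the abstraction discards, but the rectangular structure of $\Gamma$ and the convexity of $\Gamma_{q,a}$ ensure that averaging the continuous one-step kernels over the conditional distribution on $q$ produces an element of $\Gamma_{q,a}$, which is precisely what the adversary requires. This argument is the robust-MDP analogue of Theorem~2 in \cite{jackson2021formal} for IMDPs, and transfers with only notational changes once Proposition~\ref{prop:consistency} is in hand.
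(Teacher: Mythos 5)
Your proposal is correct and follows essentially the route the paper intends: the paper gives no standalone proof of Theorem~\ref{th:correctness}, stating only that it is a direct consequence of Theorem~\ref{th:RVI_LP} and Theorem~2 of \cite{jackson2021formal}, and your argument is precisely the reconstruction of that citation --- build a history-dependent adversary from the true transition kernel, use Proposition~\ref{prop:consistency} together with convexity of $\Gamma_{q,a}$ to place the averaged one-step distribution in the ambiguity set, match the pushforward path measures under $J$, and sandwich the resulting reach-avoid probability between the infimum and supremum over adversaries. The one point worth making explicit is that the convexity (indeed, closed polytopic structure) of $\Gamma_{q,a}$ that your barycenter step relies on is exactly what Theorem~\ref{th:RVI_LP} supplies via the linear description \eqref{eq:feasible set_robust_MDP_cost}, which is why the paper cites that theorem here.
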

%
%\begin{proof}
%The proof of Theorem \ref{th:correctness} follows the same reasoning as the proof of Theorem 2 in \cite{jackson2021formal}. The only difference is that we need to consider, for every $q\in Q_{\rm{safe}}$, the following choice of the adversary:
%\begin{align}
%\label{eq:choice_of_adversary}
%    \gamma^*(q') :=T_{\distribution_v}^{\optpolicy}(q'\mid x),
%\end{align}
%for all $q'\in Q$, with $x\in q$. Notice that $\gamma^*\in \Gamma_{q,\optpolicy}$ due to Proposition \ref{prop:consistency}.
%\end{proof}
Theorem \ref{th:correctness} guarantees that in order to solve Problem 
\ref{prob:Syntesis} we can synthesize an optimal strategy $\optpolicy$ for a 
robust MDP abstraction of System \eqref{eq:system} and then simply check if 
$\underline p^K(J(x))$ is greater than the given threshold $p_{th}$.
\section{Experimental Results}
\label{sec:case-studies}

In this section, we evaluate our framework on several benchmarks. In Sections \ref{sec:unicycle2d}, \ref{sec:Jacksonnl} and \ref{sec:LTLf}, we consider systems with data-driven ambiguity sets built from $M$ i.i.d. samples $\noise^1,\ldots,\noise^M$ of the noise distribution $\distribution_v^{\rm{true}}$. 
%\LL{Give a reference in here on how we can build the ambiguity set?}. 
Using these samples, we construct an ambiguity ball, whose center is the empirical distribution
\begin{align}
    \widehat p_v = \frac{1}{M}\sum_{i=1}^{M}\delta_{\noise^i},
\end{align}
of the data in~\cite{mohajerin2018data}, where $\delta_{\noise^i}$ denotes the Dirac distribution that assigns unit mass to  $\noise^i$.
As a metric we consider the 2-Wasserstein distance, which is often used in the literature \cite{AH-IY:21}. %, which penalizes more distributions that have considerable mass far from the samples. This choice is appropriate for our case studies, since in most of them the nominal distribution is constructed from samples of $\noise$ and, therefore, we expect that the chance of $\noise$ taking values far away from this set is highly unlikely.
The radius $\varepsilon$ of the ambiguity set is a tuning parameter to hedge against the uncertainty coming from the imperfect knowledge of $\distribution_v^{\rm{true}}$.
%
%We always use the $2$-Wasserstein distance for the two following reasons: first, it penalizes more the transport of mass from the samples, allowing us to use bigger values of $\varepsilon$. Secondly, this means that the distance over the state space $\mathbb{R}^n$ is be the euclidean norm, which allows us to obtain the bounds in \eqref{eqn:transition_probability_bounds1} in an efficient way.
%The true distributions of the disturbances are Gaussian mixtures with two components. 
%We present results for different values $M$ of the sample size.
Then, in Section \ref{sec:switched}, we depart from the data-driven setting and consider the case where the nominal distribution is Gaussian and $\varepsilon$ quantifies the degree of robustness against distributional shifts.

%show how our approach yields strategies that are robust against distributional shifts. In this case the nominal distribution need not be empirical, and represents a nominal model for the distribution of $\noise$, whereas $\varepsilon$ quantifies the degree of robustness against distributional shifts we want to enforce.

To synthesize the respective strategies,  we run the dynamic programming algorithm  for  required time horizon $K\in \mathbb{N}\cup\{\infty\}$ and use the following ``average error" to quantify the performance 
%\LL{Ibon, put your maximum horizon you consider in here. We cannot talk about stopping criteria in here, becasue we are in a finite time setting, so let's just report your max time horizon in your experiments (if for some you stop early it is still fine, as you converged already, so it is fine to stop)}
%
%
%algorithm until we achieve a prescribed error tolerance.
%\LL{Mmm, here we have finite time specifications, so maybe better not to say that we stop when they get close as we did not prove convergence? Can't we just report the time horizon we use?}
%\DB{You are right, no I am also confused! Ibon, are you computing stationary strategies?}
%\LL{I think Ibon is just running for a long time horizon, such that they have basically converged. I would just report the max time we consider and remove the stopping criteria as they are not part of this paper :)}
%Based on this tolerance level $\rm{tol}_{\rm{RVI}}$ 
%the algorithm is terminated when the following stopping criteria is fulfilled:
%\begin{equation*}
%\begin{split}
%    \sup_{q\in Q}(\underline p^{k+1}(q) - \underline p^k(q)) &\leq \rm{tol}_{\rm{RVI}}\\
%    \sup_{q\in Q}(\overline p^{k+1}(q) - \overline p^k(q)) &\leq \rm{tol}_{\rm{RVI}}.
%\end{split}
%\end{equation*}
%Here $\underline p^k$ and $\overline p^k$ denote the lower and upper probability bounds at the $k$-th step of the value iteration. For our case studies, we set $\rm{tol}_{\rm{RVI}} = 0.02$.
\begin{align}
    e_{\rm{avg}} := \frac{1}{N}\sum_{q\in Q}(\overline p^{K}(q) - \underline p^{K}(q)),
\end{align}
where $\overline p^{K}(q)$ and $ \underline p^{K}(q))$ are the upper and lower bounds of probability of satisfaction, respectively, as defined in \eqref{eq:robust_value_iteration_lower_bound} and \eqref{eq:robust_value_iteration_upper_bound}.
%which allows us to assess the conservatism of the solution.\footnote{Here the conservatism of the solution has two sources: the distributional ambiguity and the accuracy of the abstraction.} On the other hand, in the setting of unbounded reachability, we have that $K_{\rm{lower}} = K_{\rm{upper}} =\infty$.
A summary of the numerical results for all the case studies are shown in Table~\ref{tab:results}. The main highlight in Table \ref{tab:results} is the superiority of our dual algorithm of Section~\ref{sec:DedicatedAlgorithmValueITeration} for strategy synthesis when compared to using the standard linear programming routine \emph{Linprog} (two orders of magnitude speed-up).
% : we observe an improvement of two orders of magnitude in computation time. 

For all the case studies, the theoretical bounds on the reachability probabilities are validated empirically via $1000$ Monte Carlo simulations starting from $1000$ random initial conditions. 
% We note that the synthesized strategies are very robust, achieving an average (across all random random initial states) empirical reachability probability of $\approx 1 $. 
To demonstrate that the theoretical bounds obtained with our approach are sound even in extreme cases, in Section~\ref{sec:switched}  (Experiment $\#12$), we consider remarkably large noise 
% we are able to obtain a non-trivial average empirical reachability probability, 
% that is still within the bounds.
and show that the Monte Carlo probability values are still within the theoretical bounds.

All results are obtained by running a Matlab implementation of our framework on a single thread of an Intel Core i7 3.6GHz CPU with 32GB of RAM
% \LL{IF you want to give a link to the code. Here is a good place for a footnote about it}.

\subsection{Linear System}
\label{sec:unicycle2d}

%\LL{Something that is not clear to me is the following part: "he  unknown distribution $P_\noise^{\rm{true}}$ of the noise is considered to be a Gaussian mixture with two components, centered at $[-0.01 ,0]$ and $[0.01 ,0]$, respectively, and with the same covariance matrix ${\rm diag}(2.5\times 10^{-5},2.5\times 10^{-5})$. As a result, the centers of both components are separated by a distance close to the size of the state discretization. In all results for this case study, we use a set of $M = 10$ samples to build the center of the ambiguity set." In particualr, in here we just care to know the ambiguity set and $\epsilon $ we change it by end, so can't we make this part simpler? In particular, can't we simply say that the ambiguity set is centered in an empirical distribution computed from  10 samples from a Gaussian mixture with two components, centered at $[-0.01 ,0]$ and $[0.01 ,0]$, respectively, and with the same covariance matrix ${\rm diag}(2.5\times 10^{-5},2.5\times 10^{-5})$.?}
%\LL{Same holds for the non-linear example}
%\DB{Changed it:)}
%\IG{Definitely!}
%The linear system we use is a simplified model of the kinematic unicycle system. 
We consider a discrete-time version of the unicycle model in \cite{tabuada2008approximate}, which is obtained using a first order 
Euler discretization with a time step $\Delta t = 1$. We fix the velocity of the vehicle to the constant value $1$ and consider its orientation angle $u$ as the control input leading to the following switched system:
\begin{align}
\label{eq:unicycle_model}
    x_{k+1} &= x_k +  \Delta t\left(\begin{bmatrix}
    \cos{(u_k)}\\
    \sin{(u_k)}
    \end{bmatrix} + \noise_{k}\right),
\end{align}
where the state of the system is the vehicle position $x_k\in\mathbb{R}^2$ and its control input $u_k$ takes values in $U = \{0,\frac{2\pi}{8},\dots,7\frac{2\pi}{8}\}$.

% \LL{We do not say the specification we are solving}

% The results of all the case studies are summarized in Table~\ref{tab:results_big}.
% \ML{what are the takeaway points?}
% \IG{Robust MDP better than just IMDP. bigger epsilon, worse results (more conservative, and also takes sliightly more time), dual approach better.}
%
\begin{table}
\centering
\caption{Summary of the results obtained for all case studies. The label ``robust MDP" 
denotes our proposed approach, while the label ``IMDP" refers to the 
alternative approach pointed out in Remark \ref{rem:model:choice}.}
\label{tab:results}
\scalebox{0.9}{
\begin{tabular}{ l c c c c l l l l }
\toprule
\multirow{2}{*}{\#} & Abstraction & \multirow{2}{*}{Spec.} & \multirow{2}{*}{$|Q|$} & \multirow{2}{*}{$\varepsilon$} & \multirow{2}{*}{$e_{\rm{avg}}$}  & \multirow{2}{*}{DP Alg.} & Abst. & Synth.\\
 & Class & & & &  &  & Time & Time\\
\midrule
$1$ & Robust MDP & Reach-avoid & $1601$ & $5\times10^{-3}$ & $0.05$ & Linprog & $6$ sec & $10.1$ h\\
$2$ & Robust MDP & ($K=40$) & $1601$ & $5\times10^{-3}$ & $0.05$ & Dual & $6$ sec &  $76$ sec\\
\midrule
$3$ & Robust MDP & Reach-avoid & $1601$ & $10^{-2}$ & $0.18$ & Linprog & $6$ sec & $10.4$ h\\
$4$ & Robust MDP & ($K=40$) & $1601$ & $10^{-2}$ & $0.18$ & Dual  & $6$ sec &  $94$ sec\\
\midrule
$5$ & Robust MDP & Reach-avoid & $1601$ & $1.5\times 10^{-2}$ & $0.33$ & Linprog & $6$ sec & $11.4$ h\\
$6$ & Robust MDP & ($K=40$) & $1601$ & $1.5\times 10^{-2}$ & $0.33$ & Dual  & $6$ sec &  $93$ sec\\
\midrule
$7$ & IMDP & Reach-avoid & $1601$ & $5\times 10^{-3}$ & $0.83$ & Linprog & $6$ sec & $1.2$ h\\
$8$ & IMDP & ($K=40$) & $1601$ & $5\times 10^{-3}$ & $0.83$ & Alg. in \cite{givan2000bounded} & $6$ sec &  $3.9$ sec\\
\midrule
$9$ & Robust MDP & Reach-avoid & $1601$ & $5\times 10^{-2}$ & $0.25$ & Linprog & $12$ sec & $2.43$ h\\
$10$ & Robust MDP & ($K=15$) & $1601$ & $5\times 10^{-2}$ & $0.25$ & Dual  & $12$ sec &  $22$ sec\\
\midrule
$11$ & Robust MDP & Reach-avoid & $3601$ & $1.3\times 10^{-2}$ & $0.13$ & Dual  & $35$ sec & $19.5$ min\\
$12$ & Robust MDP & ($K=\infty$) & $3601$ & $1.3\times 10^{-3}$ & $0.47$ & Dual  & $64$ sec &  $42$ min\\
\midrule
$13$ & Robust MDP & $\text{LTL}_f$ & $3601$ & $5\times 10^{-3}$ & $0.01$ & Dual  & $6$ sec & $2.78$ min\\
\bottomrule
\end{tabular}
}
\end{table}

The considered safe, unsafe (obstacle), and target sets are shown in Figure~\ref{fig:strategy_unicycle_R_6}. We partition the $[0,1]^2$ rectangle into a grid of $N = 1600$ regions. 
The ambiguity set is centered at an empirical distribution of $M = 10$ samples, which are drawn from a Gaussian mixture with two components, centered at $[-0.01 ,0]$ and $[0.01 ,0]$, respectively, and with the same covariance matrix ${\rm diag}(2.5\times 10^{-5},2.5\times 10^{-5})$. As a result, the centers of both components are separated by a distance close to the size of the state discretization. 

We obtained results for multiple values of the radius $\varepsilon$, shown in Table \ref{tab:results} (Experiments $\#1-\#6$), and compare the performance of our dual approach to solve the inner problems in \eqref{eq:robust_value_iteration_lower_bound} to the one obtained by using Linprog.
%
\iffalse
\begin{table}
\centering
\scalebox{0.75}{
\begin{tabular}{l c l l l l l }
\toprule
Abstraction Class & $\varepsilon$ & $e_{\rm{avg}}$  & DP Algorithm & Synthesis Time \\
\midrule
Robust MDP & $5\times10^{-3}$ & $0.05$ & Linprog &  $10.1$ h (40 + 9 it) \\
 &  &  & Dual approach &  $76$ sec (40 + 9 it) \\
\hline
Robust MDP & $10^{-2}$ & $0.18$ & Linprog &   $10.4$ h (40 + 9 it) \\
 &  &  & Dual approach &   $94$ sec (40 + 9 it) \\
\hline
Robust MDP & $1.5\times 10^{-2}$ & $0.33$ & Linprog &  $11.4$ h (40 + 9 it) \\
 &  &  & Dual approach &  $93$ sec (40 + 9 it) \\
\hline
IMDP & $5\times 10^{-3}$ & $0.83$ & Linprog & $1.2$ h (10+10 it) \\
 &  &  & Approach in \cite{givan2000bounded} & $3.9$ sec (10+10 it) \\
\bottomrule
\end{tabular}
}
\vspace{2em}

\caption{Summary of the results obtained for system \eqref{eq:unicycle_model}. 
The number of times the value iteration algorithm was run is 
expressed as ($K_{\rm{lower}} + K_{\rm{upper}}$ iterations). The label ``robust MDP" 
denotes our proposed approach, while the label ``IMDP" refers to the 
alternative approach pointed out in Remark \ref{rem:model:choice}. \IG{Currently running benchmarks on a different computer... Also change grid to only one grid per table (not per row)}}
\label{tab:results}
\end{table}
\fi
%
%The computation of the abstraction took $6$ seconds for all cases, since they all rely on the same nominal IMDP abstraction $\widehat\I$.
We observe
% in Table \ref{tab:results} 
that the average error $e_{\rm{avg}}$ increases as the ambiguity set grows. This means that, as expected, bigger ambiguity leads to more conservative bounds.
%It can be observed as well that the time required to perform the strategy synthesis also grows
%Table \ref{tab:results} also highlights the superiority of our dual algorithm of Section~\ref{sec:DedicatedAlgorithmValueITeration} for strategy synthesis when compared to using the standard linear programming routine \emph{Linprog}: we observe an improvement of two orders of magnitude in computation time.
%
\begin{figure}[h!] 
  \begin{subfigure}[t]{0.33\textwidth}
    \centering
    \includegraphics[width=\textwidth]{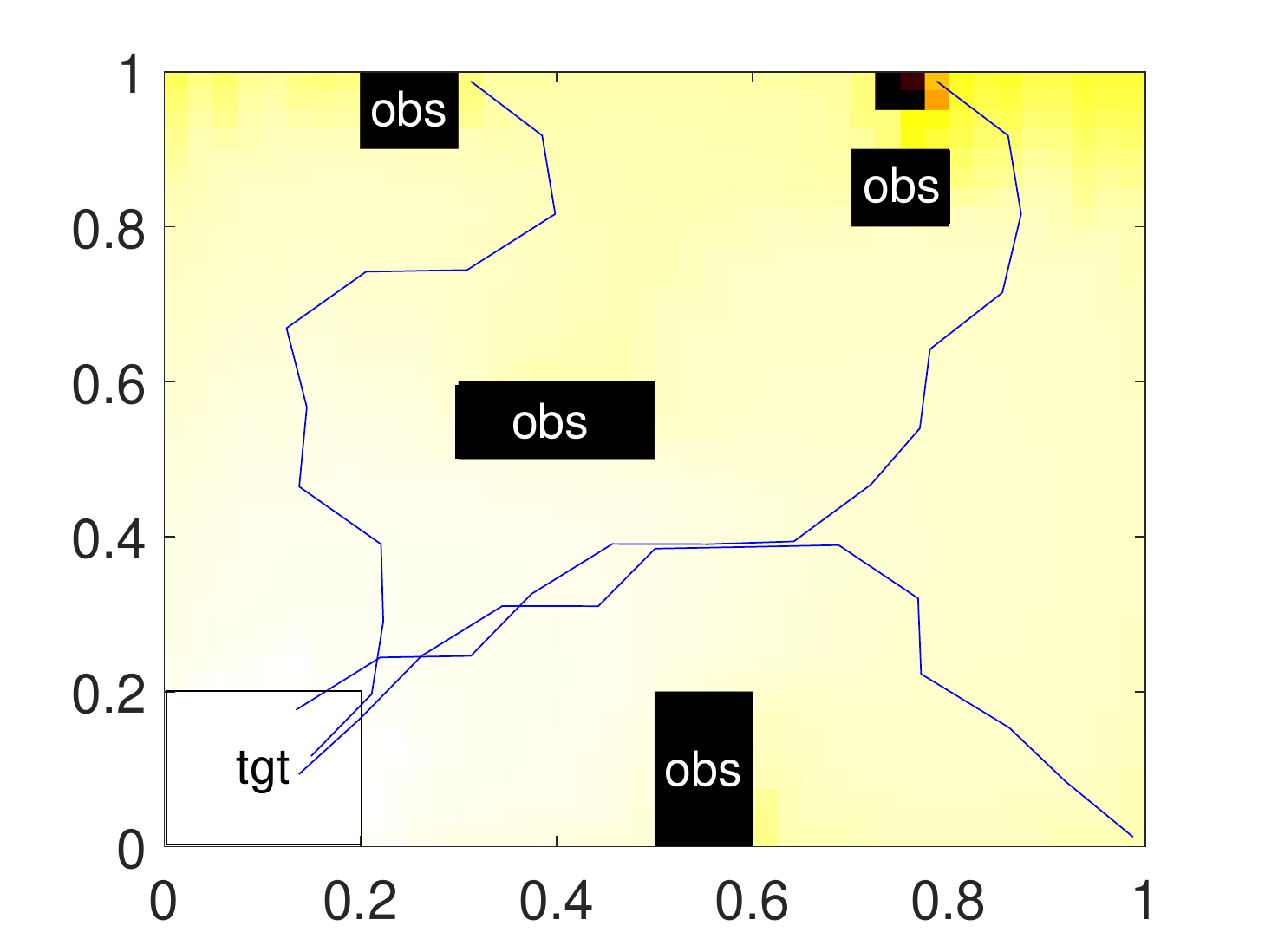}
    \caption{$\underline{p}^K$ for $\#2$}
\label{fig:P_reach_lower_unicycle_R_6}
  \end{subfigure}%% 
  \begin{subfigure}[t]{0.33\textwidth}
    \centering
    \includegraphics[width=\textwidth]{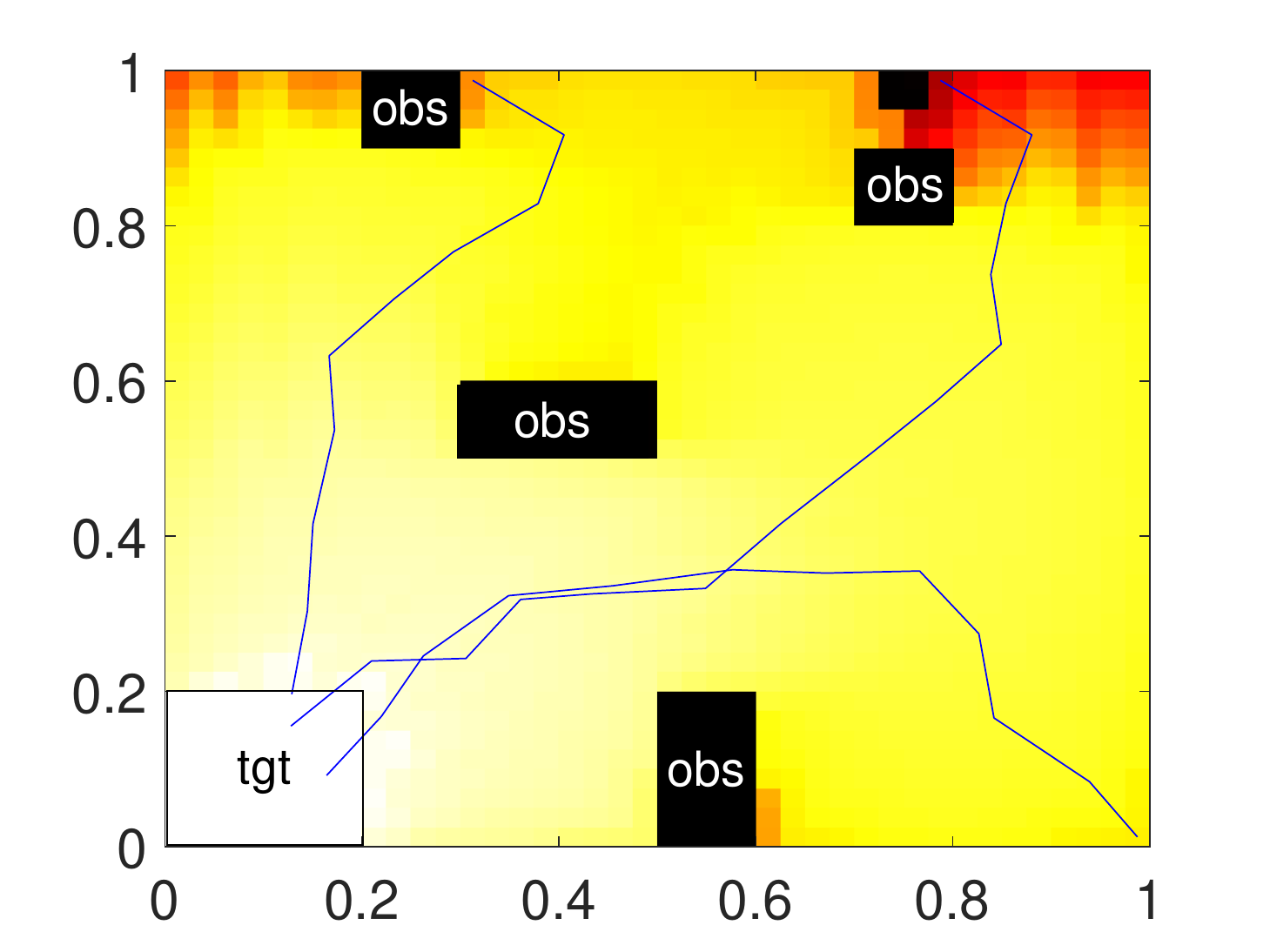}
    \caption{$\underline{p}^K$ for $\#3$}
\label{fig:P_reach_lower_unicycle_R_5}
  \end{subfigure} 
  \begin{subfigure}[t]{0.33\textwidth}
    \centering
    \includegraphics[width=\textwidth]{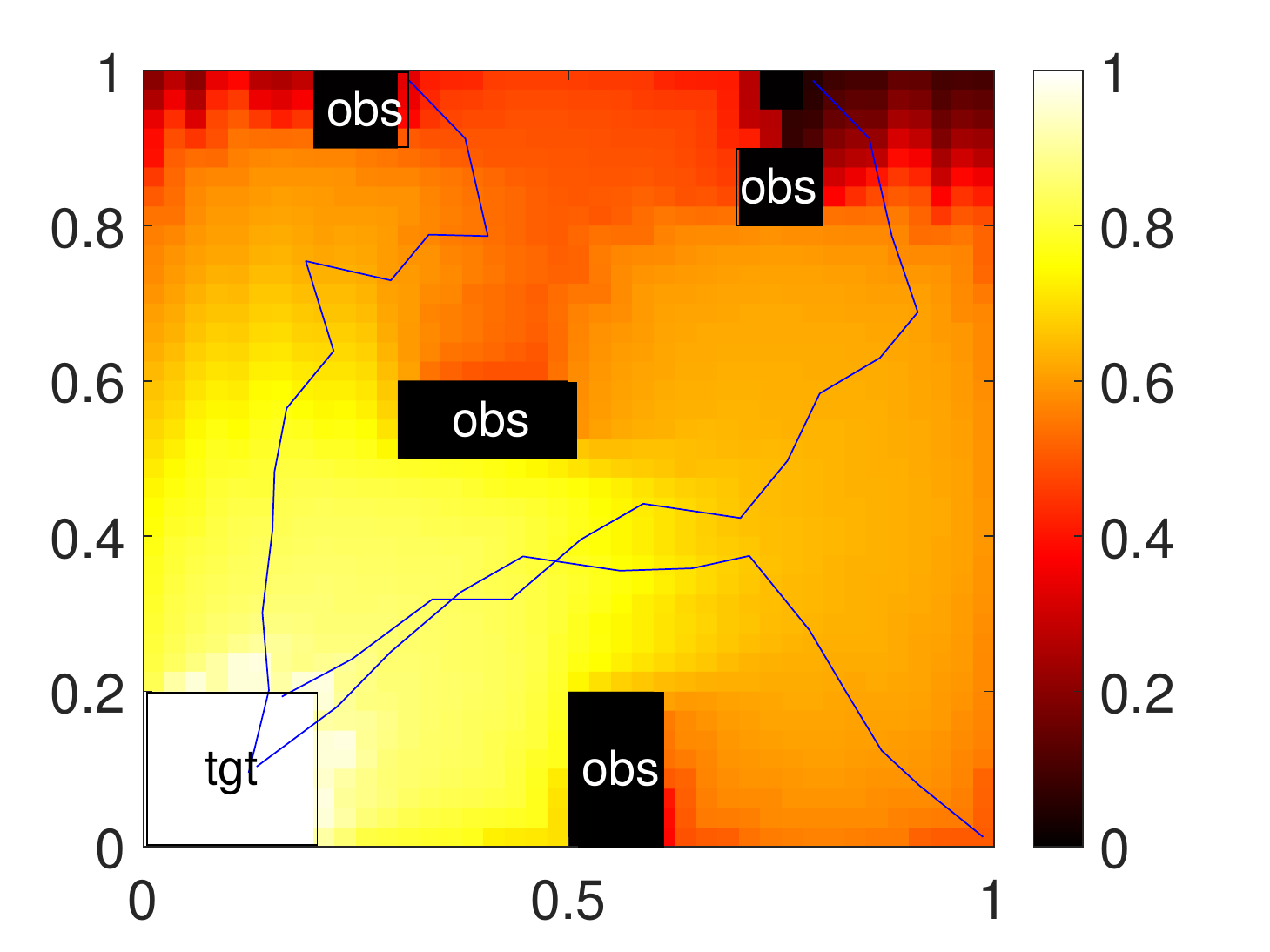}
    \caption{$\underline{p}^K$ for $\#5$}
\label{fig:P_reach_lower_unicycle_R_7}
  \end{subfigure}\\
  \centering
  \begin{subfigure}[t]{0.33\textwidth}
    \centering
    \includegraphics[width=\textwidth]{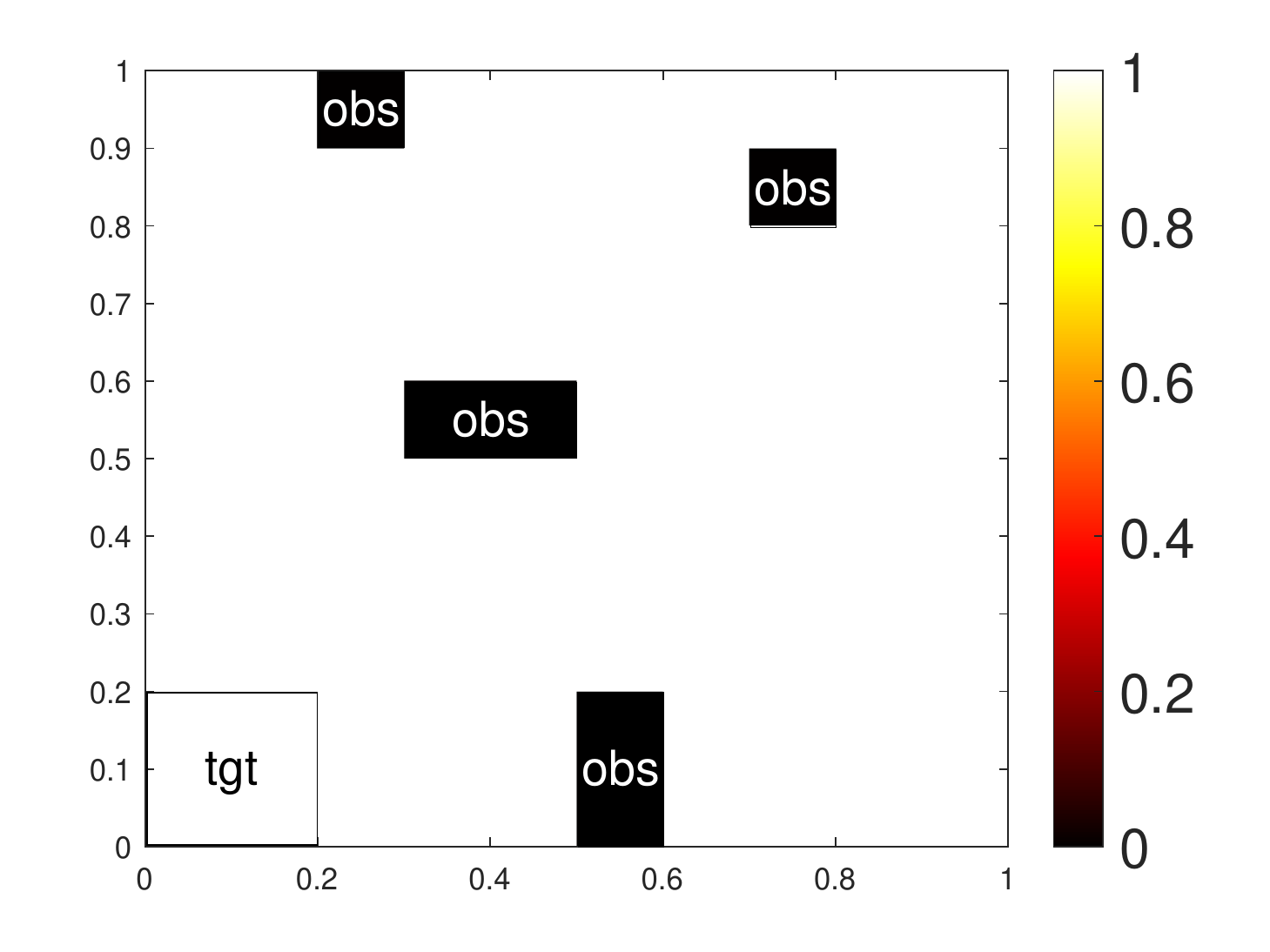}
    \caption{$\overline{p}^K$ for $\#2$}
\label{fig:P_reach_upper_unicycle2D}
  \end{subfigure}%%
  \hspace{0.05\textwidth}
  \begin{subfigure}[t]{0.33\textwidth}
    \centering
    \includegraphics[width=\textwidth]{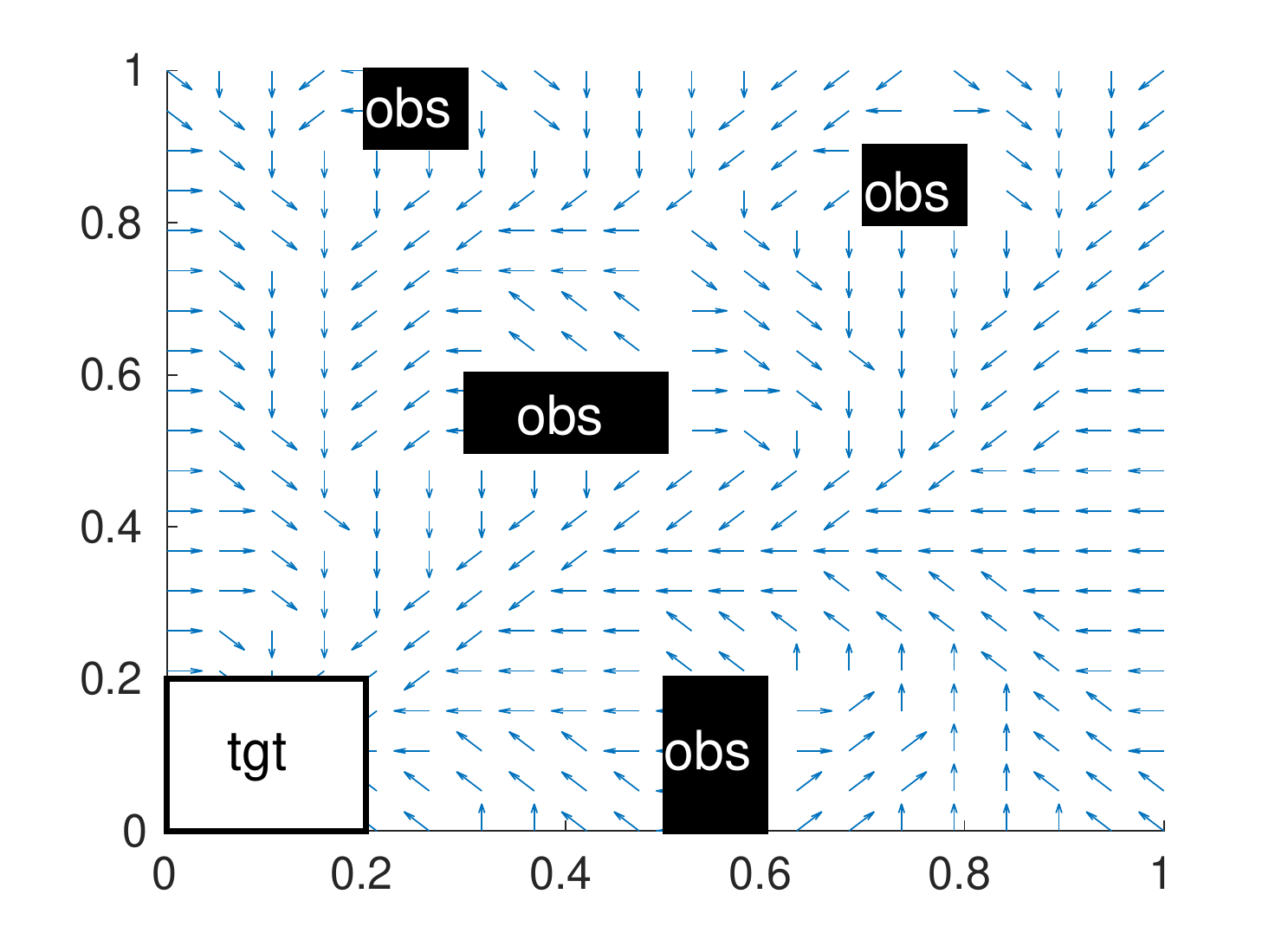}
    \caption{$\sigma^*$ for $\#2$}
\label{fig:strategy_unicycle_R_6}
\end{subfigure}%
  \caption{Results of Experiments $\#1-\#6$ in Table \ref{tab:results}. In~\ref{fig:P_reach_lower_unicycle_R_6}, \ref{fig:P_reach_lower_unicycle_R_5}, and \ref{fig:P_reach_lower_unicycle_R_7}, lower bound in the probability of reachability corresponding to Experiments $\#1$ (and $\#2$), $\#3$ (and $\#4$) and $\#5$ (and $\#6$), respectively. The plotted trajectories correspond to Monte Carlo simulations of System \eqref{eq:unicycle_model} taking samples from a distribution $\widetilde p_v\in\mathcal{P}_v$. In~\ref{fig:P_reach_upper_unicycle2D}, upper bound in the probability of reachability corresponding to Experiments $\#1-\#6$. In \ref{fig:strategy_unicycle_R_6}, optimal strategy of Experiments $\#1$ (and $\#2$).}
  \label{fig:P_reach_lower_unicycle_R}
\end{figure}
%
\iffalse

\begin{figure}[h!] 
  \begin{subfigure}[t]{0.48\textwidth}
    \centering
    \includegraphics[width=\textwidth]{unicycle2D_strategy-eps-converted-to.pdf}
    %\caption{Experiment $\#3$.}
\label{fig:P_reach_lower_unicycle_empirical}
  \end{subfigure}%
  \begin{subfigure}[t]{0.46\textwidth}
    \centering
    \includegraphics[width=\textwidth]{unicycle_R_6_reach_lower-eps-converted-to.pdf}
    %\caption{Experiment $\#1$.}
\label{fig:P_reach_lower_unicycle_R_6}
  \end{subfigure}%%
  
  \begin{subfigure}[t]{0.5\textwidth}
    \centering
    \includegraphics[width=\textwidth]{unicycle_R_5_reach_lower-eps-converted-to.pdf}
    %\caption{Experiment $\#2$.}
\label{fig:P_reach_lower_unicycle_R_5}
  \end{subfigure}%%
  \begin{subfigure}[t]{0.5\textwidth}
    \centering
    \includegraphics[width=\textwidth]{unicycle_R_7_reach_lower.pdf}
    %\caption{Experiment $\#3$.}
\label{fig:P_reach_lower_unicycle_R_7}
  \end{subfigure}%
  \caption{Results of experiments $\#1-\#4$ in Table \ref{tab:results}. Lower bound in the probability of reachability. The plotted trajectories correspond to Monte Carlo simulations of System \eqref{eq:unicycle_model} taking samples from a distribution $\widetilde p_v\in\mathcal{P}_v$.}
  \label{fig:P_reach_lower_unicycle_R}
\end{figure}
%
\fi
The bounds on the reachability probability for Experiments $\#1-\#6$ are shown in Figure~\ref{fig:P_reach_lower_unicycle_R}, together with the vector field of the system in close loop with the optimal strategy. Since the obtained upper bound in the reachability probability is the same across Experiments $\#1-\#6$ and the optimal strategy is practically the same, we only show these results in the case of Experiment $\#1$ (equivalently $\#2$).
%As expected, we observe that the larger the ambiguity set is, the more conservative the lower bound becomes due to the need for accounting for more distributional uncertainty. 
%Figure~\ref{fig:strategy_unicycle_R_6} shows the vector field of System~\eqref{eq:unicycle_model} 
% when the 
%under optimal strategy synthesized in experiment $\#1$.
% is applied. 
%The synthesized strategy is almost the same for experiments $\#1-\#6$. 
%
\iffalse
\begin{figure}[h!]
\centering
  % \begin{subfigure}[t]{0.4\textwidth}
    % \centering
    \includegraphics[width=.45\textwidth]{unicycle2D_strategy-eps-converted-to.pdf}
    % \caption{}
% \label{fig:strategy_unicycle_R_6}
  % \end{subfigure}%
  % \hspace{0.1\textwidth}
  %
  \caption{Vector field of System \eqref{eq:unicycle_model} in closed loop with the optimal strategy of experiment $\#1$ in Table \ref{tab:results}.}
  % \label{fig:comparison}
  \label{fig:strategy_unicycle_R_6}
\end{figure}
%
\begin{figure}[h!]
\centering
  \begin{subfigure}[t]{0.4\textwidth}
    \centering
    \includegraphics[width=\textwidth]{unicycle_R_6_reach_upper-eps-converted-to.pdf}
    \caption{Upper bound in reachability probability}
\label{fig:P_reach_upper_unicycle2D}
  \end{subfigure}%%
  \hspace{0.1\textwidth}
  \begin{subfigure}[t]{0.4\textwidth}
    \centering
    \includegraphics[width=\textwidth]{unicycle2D_strategy-eps-converted-to.pdf}
    \caption{Optimal strategy}
\label{fig:strategy_unicycle_R_6}
\end{subfigure}%
  \caption{Results of experiment $\#1$ (and $\#2$).}
  \label{fig:unicycle2D_strategy_P_reach_upper}
\end{figure}
%
\fi

To compare our approach with the one that relies on IMDP abstractions described in Remark~\ref{rem:model:choice}, we present the results obtained with the latter as Experiments $\#7$ and $\#8$ in Table \ref{tab:results}. 
% As it can be observed in the table, 
Observe that
the average error between the bounds in reachability corresponding to Experiments $\#7$ and $\#8$ is way higher than the one corresponding to Experiments $\#1$ and $\#2$ despite the fact that $\varepsilon$ is the same. This result highlights how our proposed approach, despite requiring a larger amount  of time to perform strategy synthesis, is able to provide non-trivial satisfaction guarantees, unlike the approach from Remark \ref{rem:model:choice}. Furthermore, our approach is able to synthesize a strategy that satisfies the reachability task, in contrast to the approach based on the IMDP abstraction. We compare the obtained lower bounds on the reachability probability for experiments $\#1$ (equivalently $\#2$) and $\#7$ (equivalently $\#8$) in Figure \ref{fig:comparison}.
\begin{figure}[h!]
\centering
  \begin{subfigure}[t]{0.4\textwidth}
    \centering
    \includegraphics[width=\textwidth]{unicycle_R_6_reach_lower-eps-converted-to.pdf}
    \caption{Robust MDP abstraction approach}
\label{fig:P_reach_lower_unicycle_R_6_comparison}
  \end{subfigure}%%
  \hspace{0.1\textwidth}
  \begin{subfigure}[t]{0.4\textwidth}
    \centering
    \includegraphics[width=\textwidth]{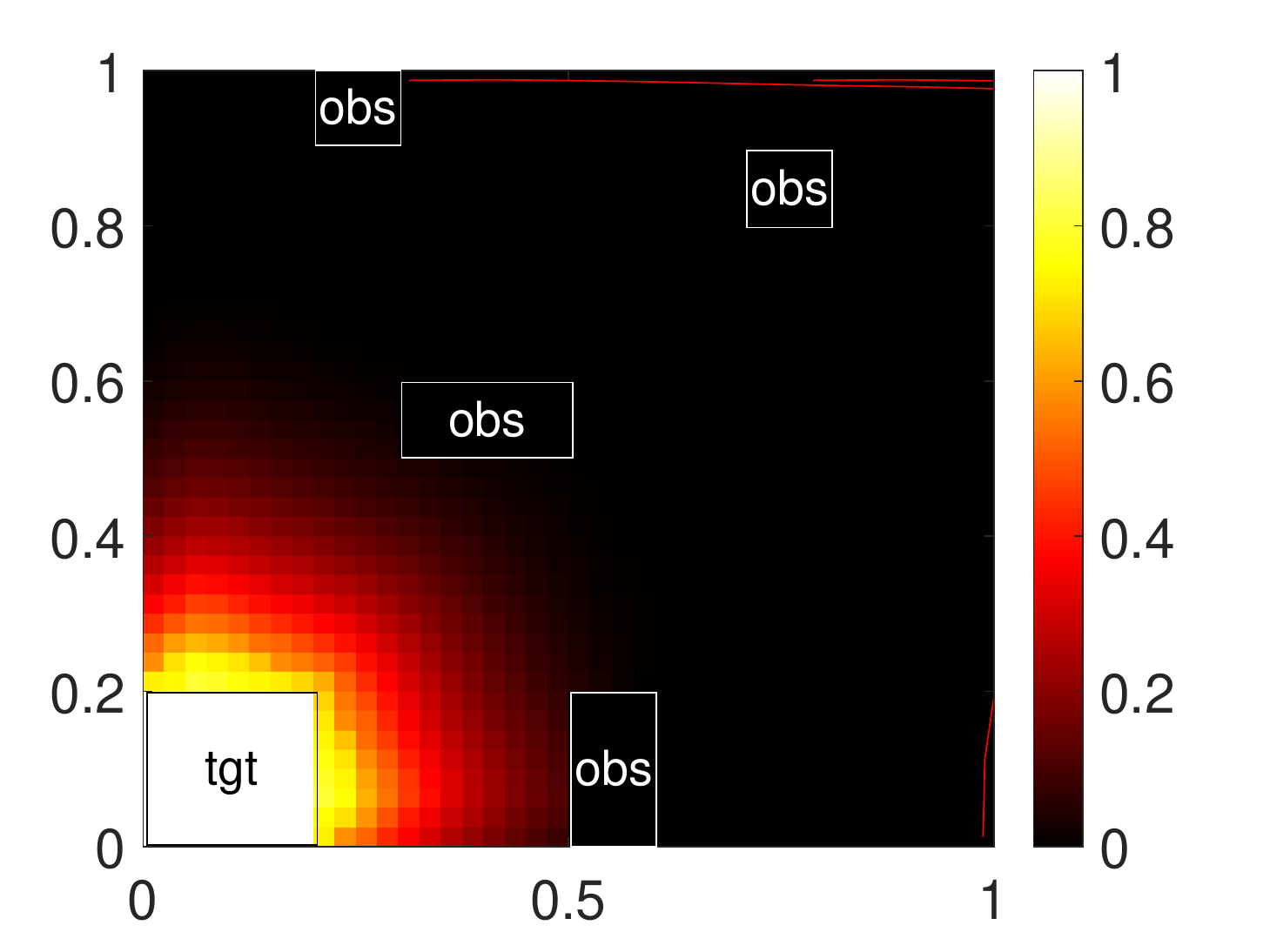}
    \caption{IMDP abstraction approach}
\label{fig:P_reach_lower_unicycle_26_comparison}
\end{subfigure}%
  \caption{Results of experiments $\#1$ and $\#7$ in Table \ref{tab:results}. Lower bound in the probability of reachability. The trajectories in both figures correspond to Monte Carlo simulations taking samples from a distribution $\widetilde p_v\in\mathcal{P}_v$. The ones that satisfy the specification are presented in blue, while the ones that do not are presented in red.}
  \label{fig:comparison}
\end{figure}

Note that Experiments $\#7$ and $\#8$ only differ in the algorithm used for strategy synthesis: whereas in Experiment $\#7$ Linprog was used, a dedicated algorithm was employed in $\#8$. The purpose of showing both results is to compare the computational improvement that stems from employing a dedicated synthesis algorithm for robust MDPs with this same benefit in the case of IMDPs.

\subsection{Nonlinear System with 4 Modes}
\label{sec:Jacksonnl}

For the second case study we consider the nonlinear system from  \cite{jackson2021strategy,adams2022formal} with dynamics:
\begin{equation}
\label{eq:jacksonnl}
    x_{k+1} = x_k + \widetilde f_{u_k}(x_k) + \noise_k.
\end{equation}
Denoting by $x^{(i)}$ the $i$-th component of the state, the map $ \widetilde f_{u_k}$ is given by
\begin{align}
\label{eq:jacksonnl_modes}
    \widetilde f_u(x) = \begin{cases}
    [0.5 + 0.2\sin(x^{(2)}), 0.4\cos(x^{(1)})]^T \qquad &\text{if}\:u = 1\\
    [-0.5 + 0.2\sin(x^{(2)}), 0.4\cos(x^{(1)})]^T \qquad &\text{if}\:u = 2\\
    [0.4\cos(x^{(2)}), 0.5 + 0.2\sin(x^{(1)})]^T \qquad &\text{if}\:u = 3\\
    [0.4\cos(x^{(2)}), -0.5 + 0.2\sin(x^{(1)})]^T \qquad &\text{if}\:u = 4.
    \end{cases}
\end{align}
The system has state $x_k\in\mathbb{R}^2$ and its control input $u_k$ switches between the discrete values 1, 2 , 3, and  4. The safe set is shown in Figure~\ref{fig:strategy_Jackson_R_4}.
We discretize the rectangle $[-2, 2]^2$ into a uniform grid, which results in abstraction with $N = 1601$ states. The ambiguity set is centered at an empirical distribution of $M = 20$ samples, which are again drawn from a Gaussian mixture with two components, centered at $[-0.05 ,0]$ and $[0.05 ,0]$, respectively, with covariance matrix ${\rm diag}(4\times 10^{-4},4\times 10^{-4})$. The centers of both components are separated by a distance close to the size of the state discretization. 
%The results were obtained for an ambiguity radius   $\varepsilon = 5\times 10^{-2}$, which yield $e_{\rm{avg}} = 0.25$. Furthermore, the abstraction time was around $20$ minutes, and the synthesis process took $2.43$ hours when using Linprog and $22$ seconds when using our dual algorithm in Section \ref{sec:DedicatedAlgorithmValueITeration}. $K_{\rm lower} = 15$ iterations were performed for the case of $\underline p^k$ and $K_{\rm upper} = 6$ times for the case of $\overline p^k$.
The obtained results are shown in Figure \ref{fig:Results_JacksonNL} and in Table~\ref{tab:results} in the row corresponding to Experiments $\#9$ and $\#10$. As it can be observed, our approach is able to synthesize a robust control strategy that yields high satisfaction probabilities in a big portion of the state space, despite the size of the obstacles and the system being nonlinear.
\begin{figure}[h!] 
  \begin{subfigure}[t]{0.33\textwidth}
    \centering
    \includegraphics[width=\textwidth]{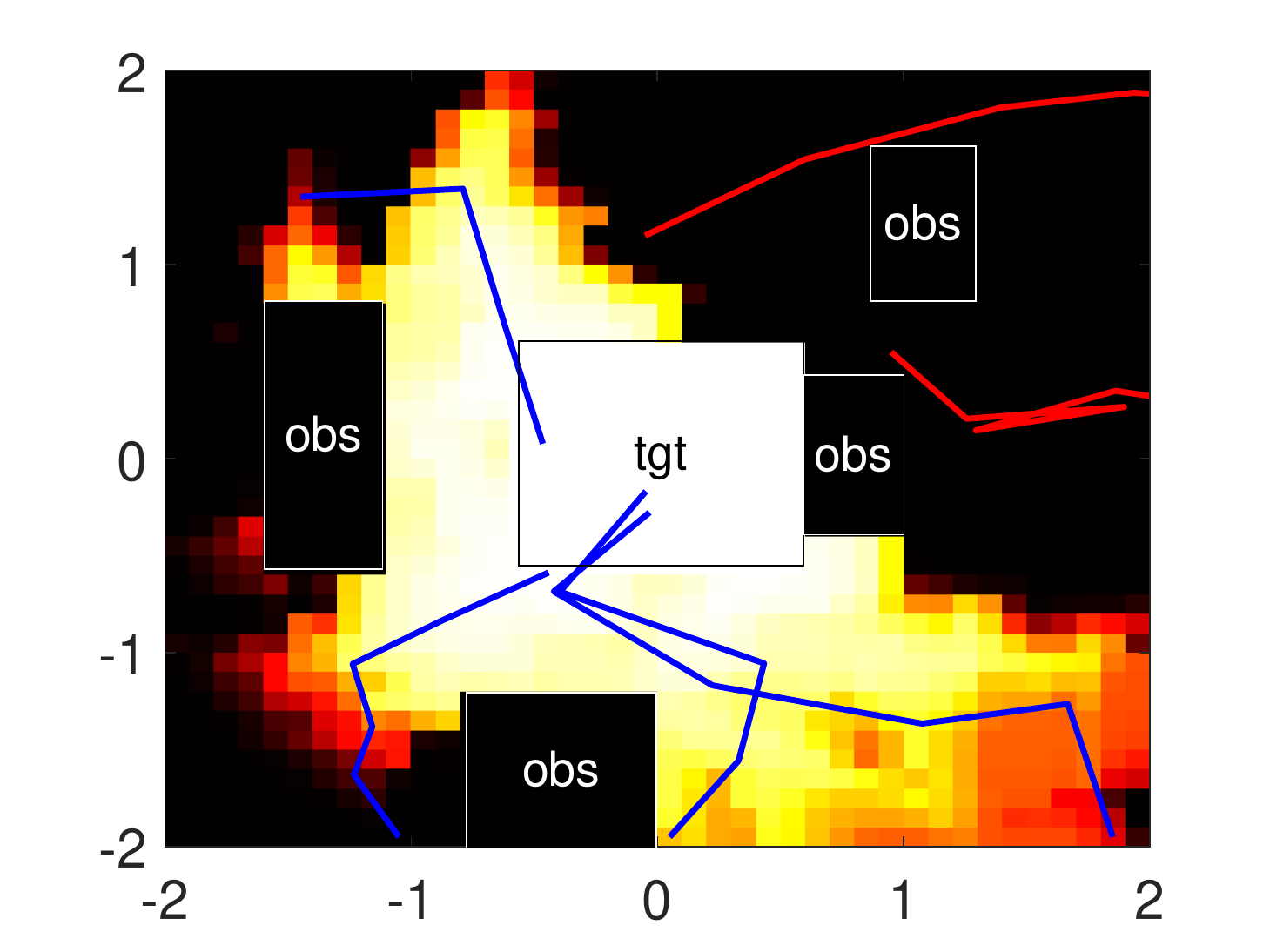}
    \caption{Lower bound on the  \\ \phantom{\quad\:\:} reachability 
    probability}
\label{fig:P_reach_lower_Jackson_R_4}
  \end{subfigure}%% 
  \begin{subfigure}[t]{0.33\textwidth}
    \centering
    \includegraphics[width=\textwidth]{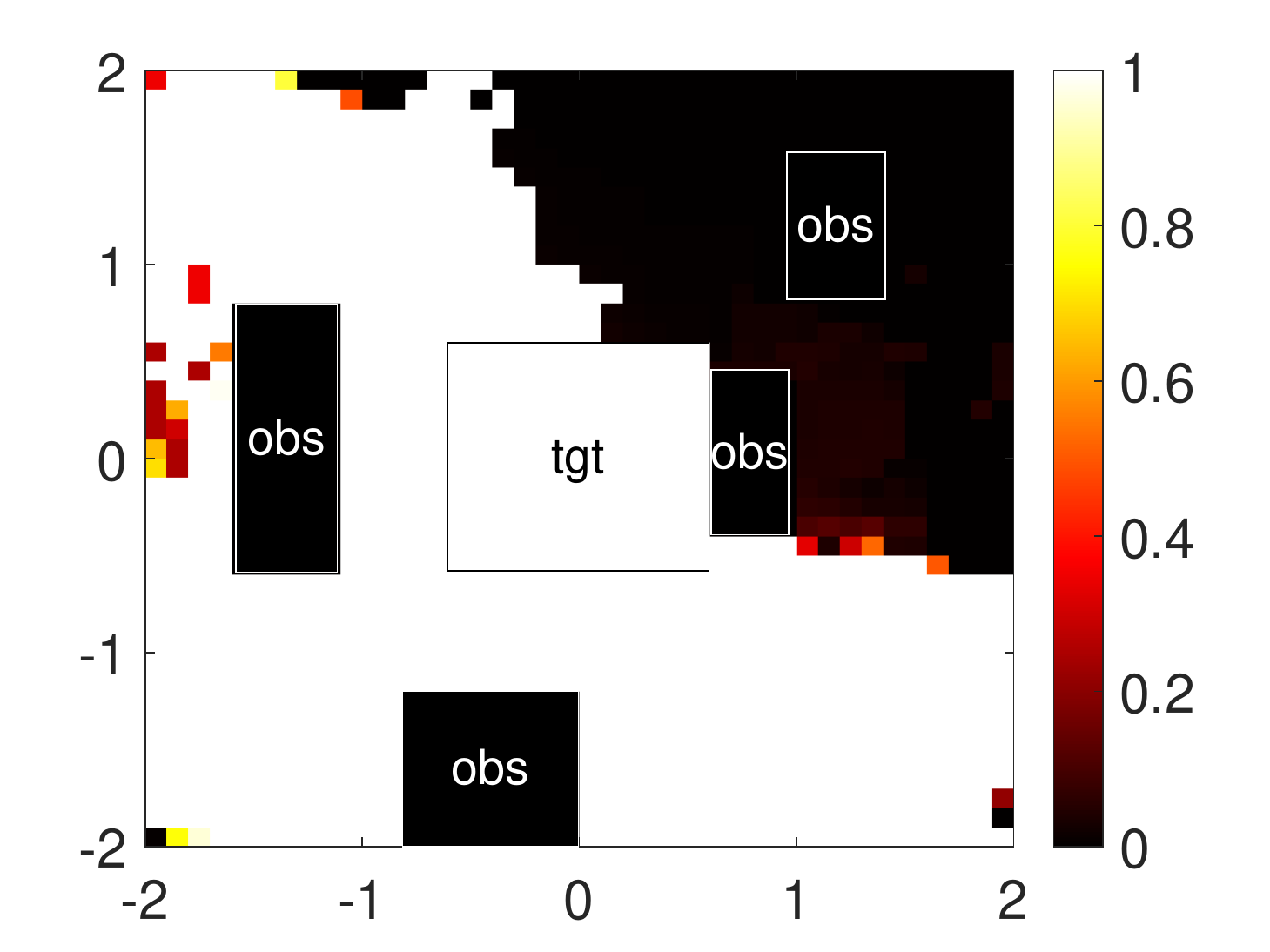}
    \caption{Upper bound on the  \\ \phantom{\quad\:\:} reachability  
    probability}
\label{fig:P_reach_upper_Jackson_R_4}
  \end{subfigure} 
  \begin{subfigure}[t]{0.33\textwidth}
    \centering
    \includegraphics[width=\textwidth]{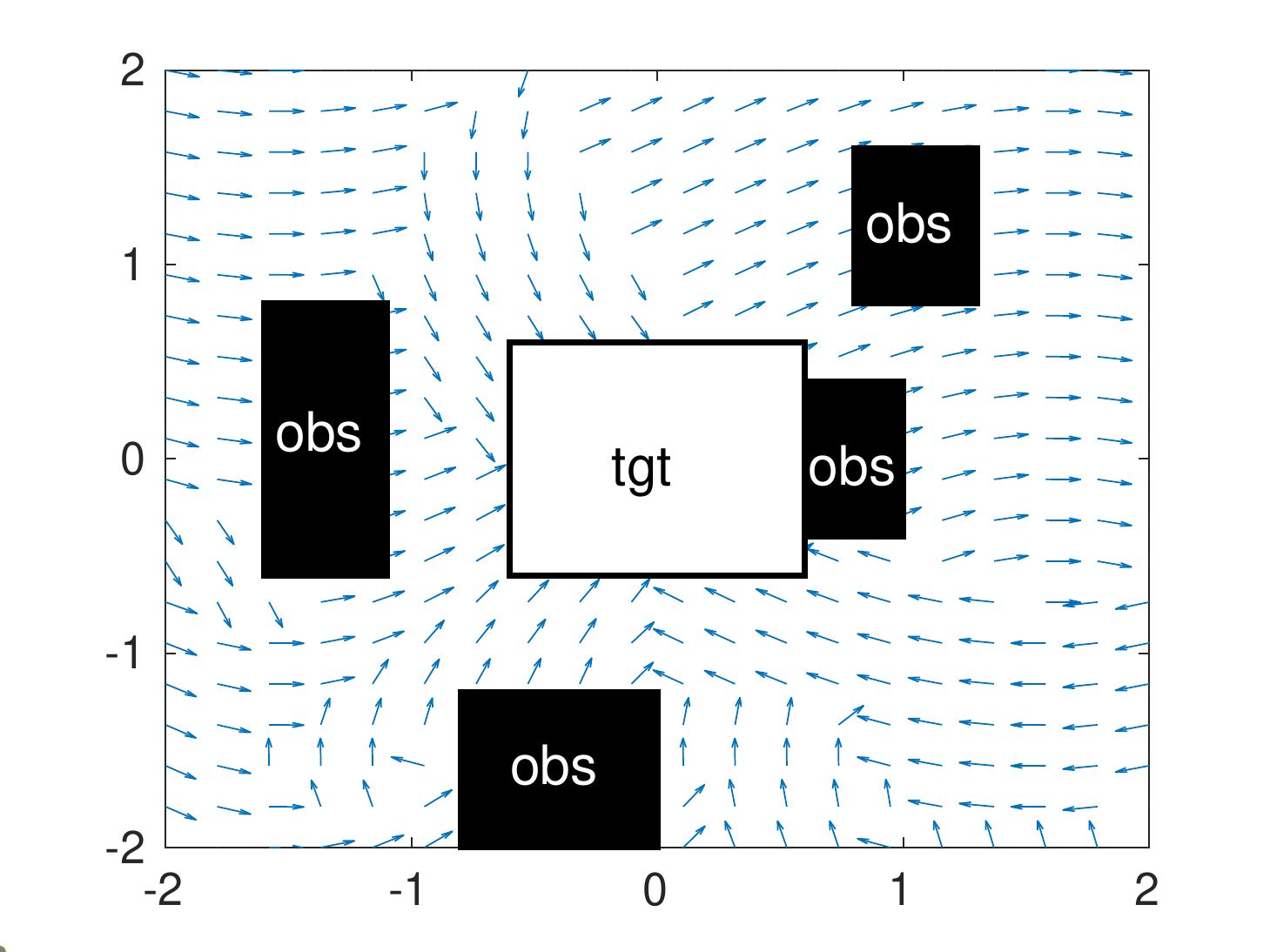}
    \caption{Optimal strategy}
\label{fig:strategy_Jackson_R_4}
  \end{subfigure}%
  \caption{Results of experiment $\#9$ (and $\#10$). The trajectories in Figure \ref{fig:P_reach_lower_Jackson_R_4} correspond to Monte Carlo simulations taking samples from a distribution $\widetilde p_v\in\mathcal{P}_v$. The ones that satisfy the specification are presented in blue, while the ones that do not are presented in red.}
  \label{fig:Results_JacksonNL}
\end{figure}

\subsection{Switched Linear Systems with 5 Modes}
\label{sec:switched}

The third case study is a system that switches between $5$ linear modes. Its dynamics are of the form
\begin{align}
    \label{eq:switched_linear2D}
    x_{k+1} = A_{u}x_k + \noise_k,
\end{align}
where $U = \{1,\cdots,5\}$ and
\begin{align*}
    A_1 &= \begin{pmatrix}
        0.79 & 0.035\\
        0 & 0.825
    \end{pmatrix},\:
    A_2 = \begin{pmatrix}
        0.79 & 0.175\\
        0 & 0.825
    \end{pmatrix},\:
    A_3 = \begin{pmatrix}
        0.79 & 0\\
        0.175 & 0.825
    \end{pmatrix},\\
    A_4 &= \begin{pmatrix}
        1 & 0.2\\
        -0.2 & 1
    \end{pmatrix},\:
    A_5 = \begin{pmatrix}
        1 & -0.2\\
        0.2 & 1
    \end{pmatrix}.
\end{align*}
The state space is defined in the same way as in the previous case studies, with the safe set as shown in Figure~\ref{fig:strategy_switched_linear2D}. The rectangle $[-2,2]^2$ is discretized via a uniform grid, resulting in $N = 3601$ states. Unlike in the previous examples, the motivation for this case study is no longer a data-driven setting, but one of robustness against distributional shifts: the nominal distribution $\widehat\distribution_v$ is known to be a Gaussian of zero mean and covariance ${\rm diag}(9\times 10^{-4},9\times 10^{-4})$, but we want to be robust against possible changes in this distribution. To this end, we pick an ambiguity set defined by $\varepsilon = 0.0127$, with $s=2$ and $d$ being the $2$-norm. This set contains, but is not limited to, Gaussian distributions with a covariance $1.7$ times bigger than the nominal one, or off-diagonal terms bigger than $1/2$ times the diagonal entries. Note that, while we truncate the $\distribution_v$ to $4$ standard deviations, the ambiguity set contains also unbounded distributions.

In this case, we obtained the abstraction by making use of the approach in \cite{adams2022formal}. The results correspond to Experiment $\#11$ in Table~\ref{tab:results}.
%Synthesizing a strategy for unbounded reachability took $2$ minutes with our dual algorithm of Section \ref{sec:DedicatedAlgorithmValueITeration}, which yielded $e_{\rm{avg}} = 0.127$. Additionally, extracting the optimal stationary policy took a couple of seconds, which is a negligible amount compared to the synthesis time.
The lower and upper bounds on the reachability probabilities are shown in Figures~\ref{fig:P_reach_lower1_switched_linear2D} and \ref{fig:P_reach_upper1_switched_linear2D}. Additionally, the vector field of the system in closed loop with the optimal strategy is shown in Figure \ref{fig:strategy_switched_linear2D}.
\begin{figure}[ht] 
  \begin{subfigure}[t]{0.33\textwidth}
    \centering
    \includegraphics[width=\textwidth]{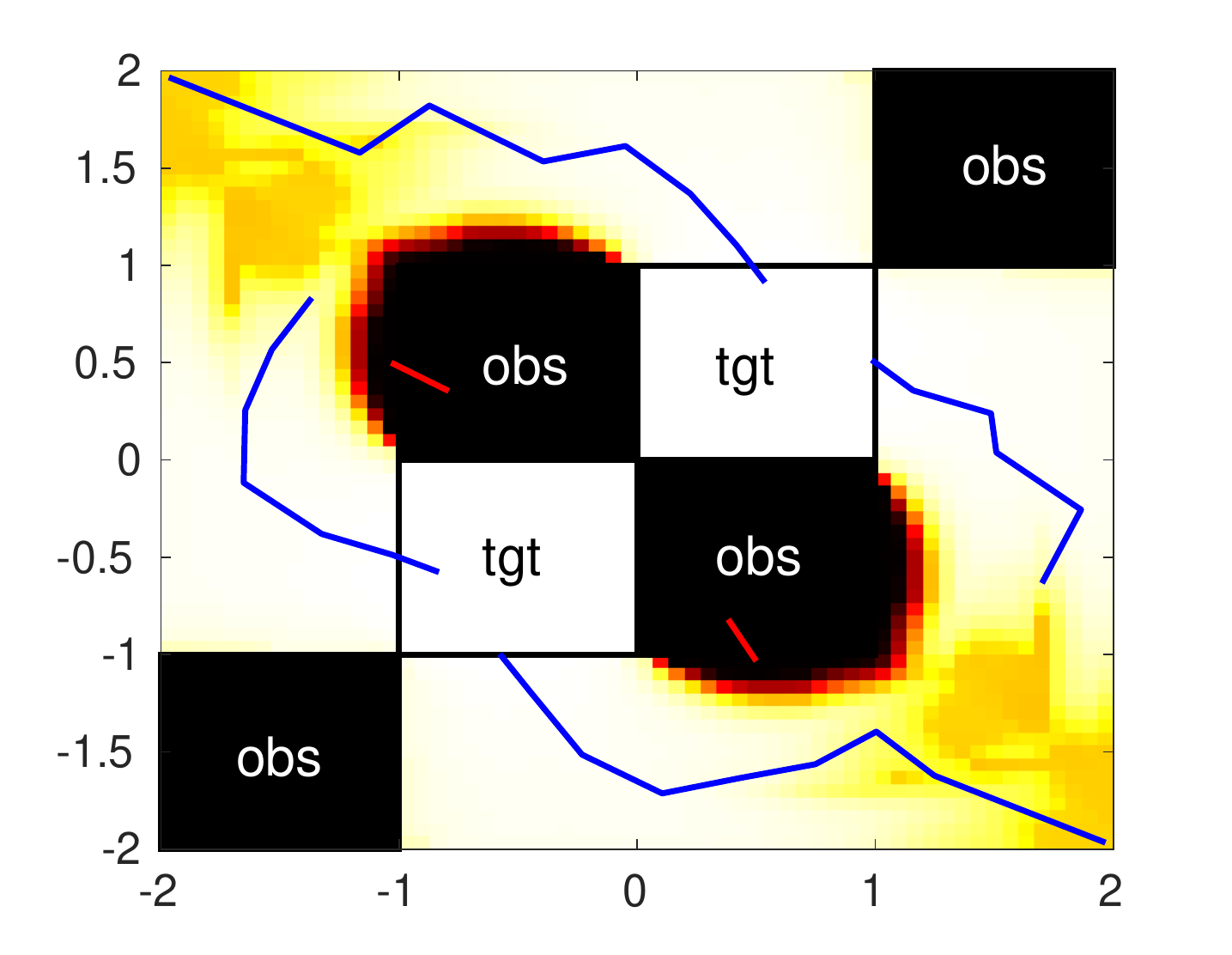}
    \caption{Lower bound on the  \\ \phantom{\quad\:\:} reachability 
    probability}
\label{fig:P_reach_lower1_switched_linear2D}
  \end{subfigure}%% 
  \begin{subfigure}[t]{0.35\textwidth}
    \centering
    \includegraphics[width=\textwidth]{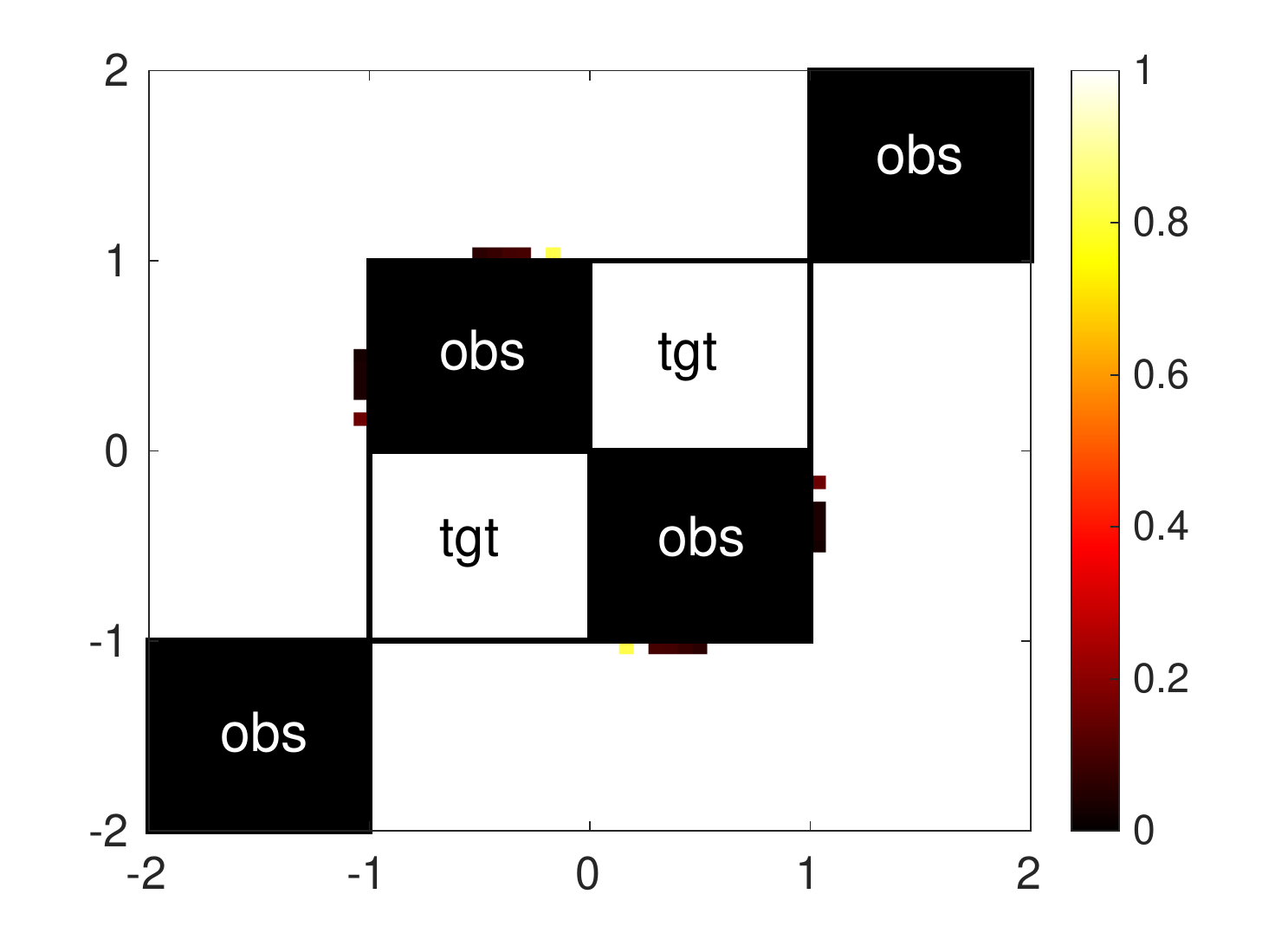}
    \caption{Upper bound on the  \\ \phantom{\quad\:\:} reachability  
    probability}
\label{fig:P_reach_upper1_switched_linear2D}
  \end{subfigure}%
  \begin{subfigure}[t]{0.35\textwidth}
    \centering
    \includegraphics[width=\textwidth]{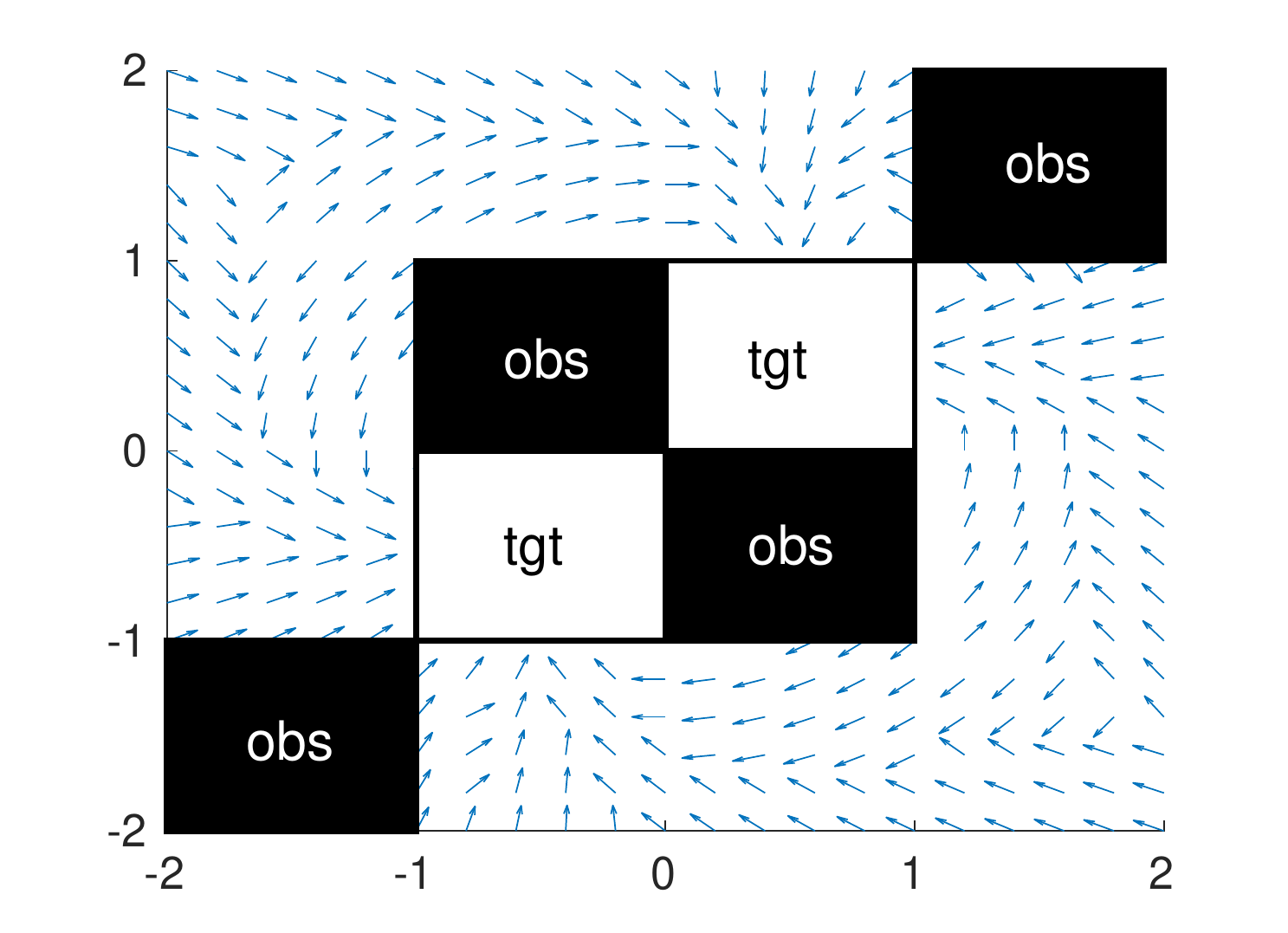}
    \caption{Optimal strategy}
\label{fig:strategy_switched_linear2D}
  \end{subfigure}%
  \caption{Results of Experiment $\#11$. The trajectories in Figure \ref{fig:P_reach_lower1_switched_linear2D} correspond to Monte Carlo simulations taking samples from a Gaussian distribution $\widetilde p_v\in\mathcal{P}_v$ at the boundary of the ambiguity set. The ones that satisfy the specification are presented in blue, while the ones that do not are presented in red.}
  \label{fig:Results_switched_linear2D}
\end{figure}
The empirical results obtained via Monte Carlo simulations are always found within the bounds in the probability of reachability. Furthermore, the synthesized strategies achieve an average (across $1000$ random initial conditions) empirical reachability probability of $\approx 1 $.

To show that the bounds in the probability of reachability are sound even when the empirical probability of reachability is smaller than $1$, we increase the variance of the noise. The new nominal noise distribution has covariance ${\rm diag}(0.04,0.04)$ truncated at $3$ standard deviations and $\varepsilon =1.3\times 10^{-3}$. The results correspond to Experiment $\#12$ in Table~\ref{tab:results}.
%The average error between the bounds in the reachability probability was $e_{\rm avg} = 0.465$. 
In this case, the average empirical probability of reachability was $0.802$, a non-trivial result, and the empirical reachability probabilities for each initial condition were always found within the theoretical bounds. Note that the synthesis time corresponding to this case study is the longest among the cases in which our dual approach was used. The cause is that the variance of the nominal noise distribution is large. Therefore its support, corresponding to $3$ standard deviations, is way bigger than in all the other experiments, which increases the computational complexity of the synthesis process. The average error $e_{\rm avg}$ is also noticeable bigger, due to the higher-variance noise. We plot the bounds in the probability of reachability together with the vector field of the closed-loop system in Figure \ref{fig:Results_switched_linear2D_2}.
\begin{figure}[ht] 
  \begin{subfigure}[t]{0.33\textwidth}
    \centering
    \includegraphics[width=\textwidth]{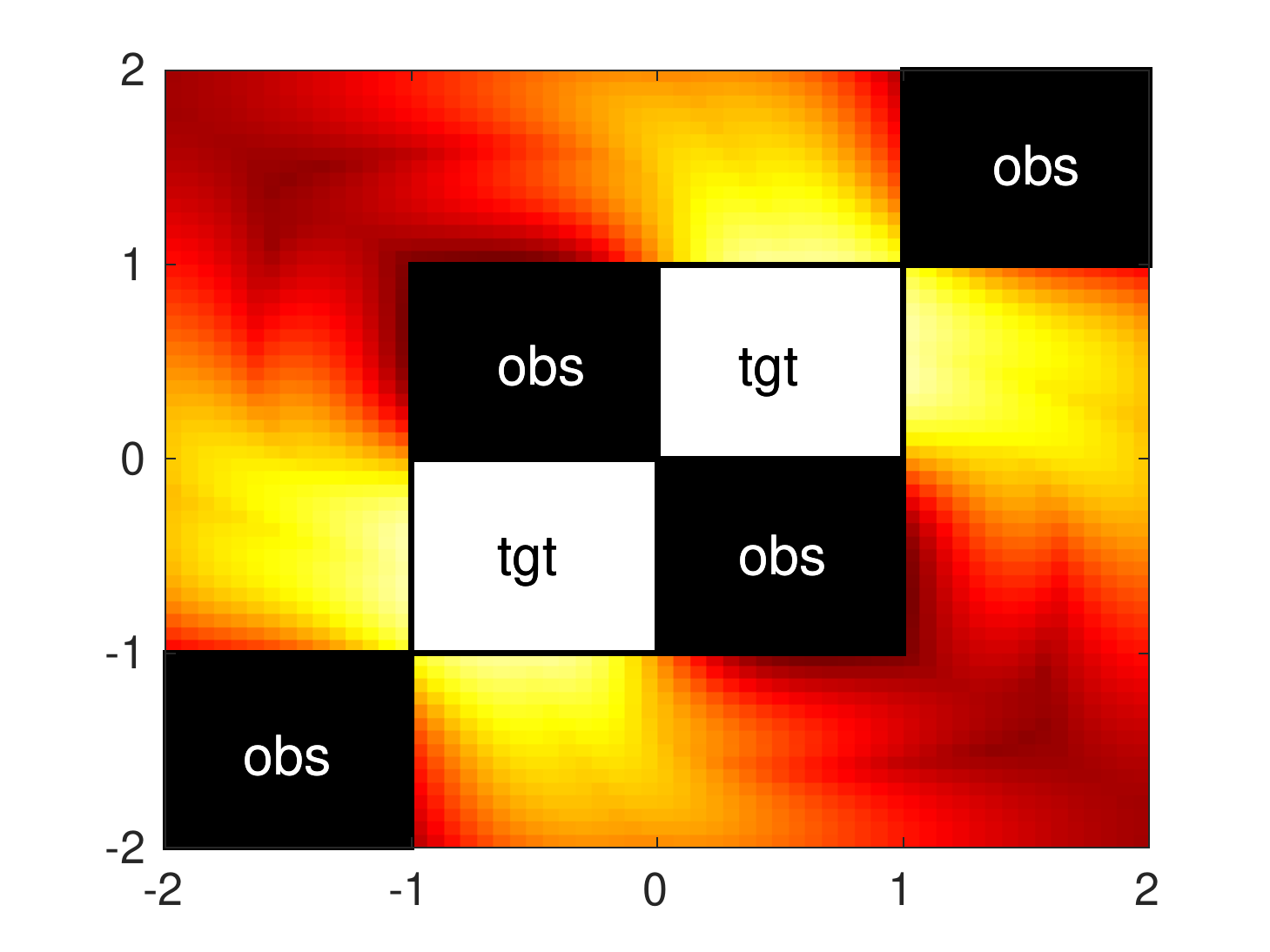}
    \caption{Lower bound on the  \\ \phantom{\quad\:\:} reachability 
    probability}
\label{fig:P_reach_lower2_switched_linear2D}
  \end{subfigure}%% 
  \begin{subfigure}[t]{0.35\textwidth}
    \centering
    \includegraphics[width=\textwidth]{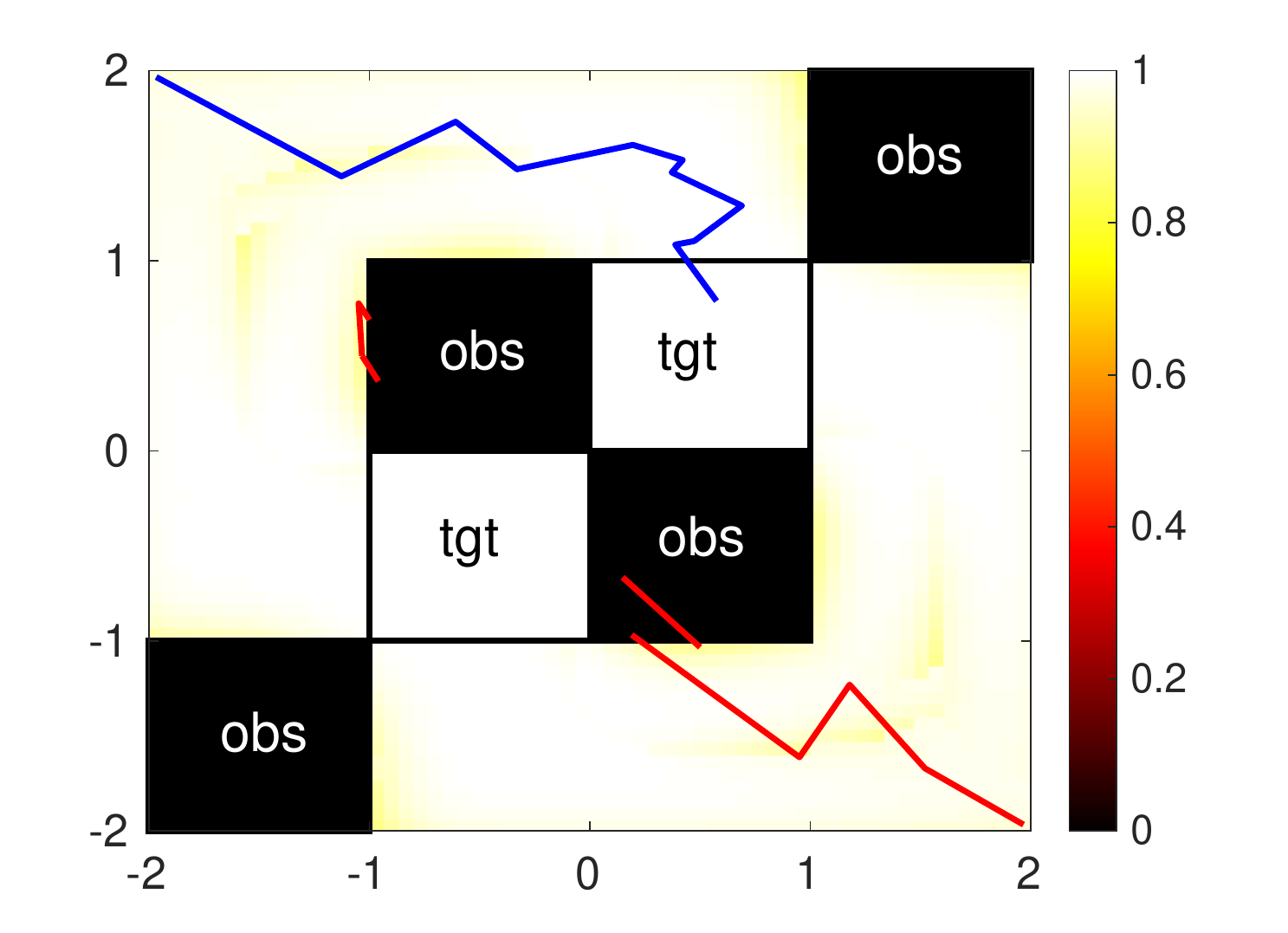}
    \caption{Upper bound on the  \\ \phantom{\quad\:\:} reachability  
    probability}
\label{fig:P_reach_upper2_switched_linear2D}
  \end{subfigure}%
  \begin{subfigure}[t]{0.35\textwidth}
    \centering
    \includegraphics[width=\textwidth]{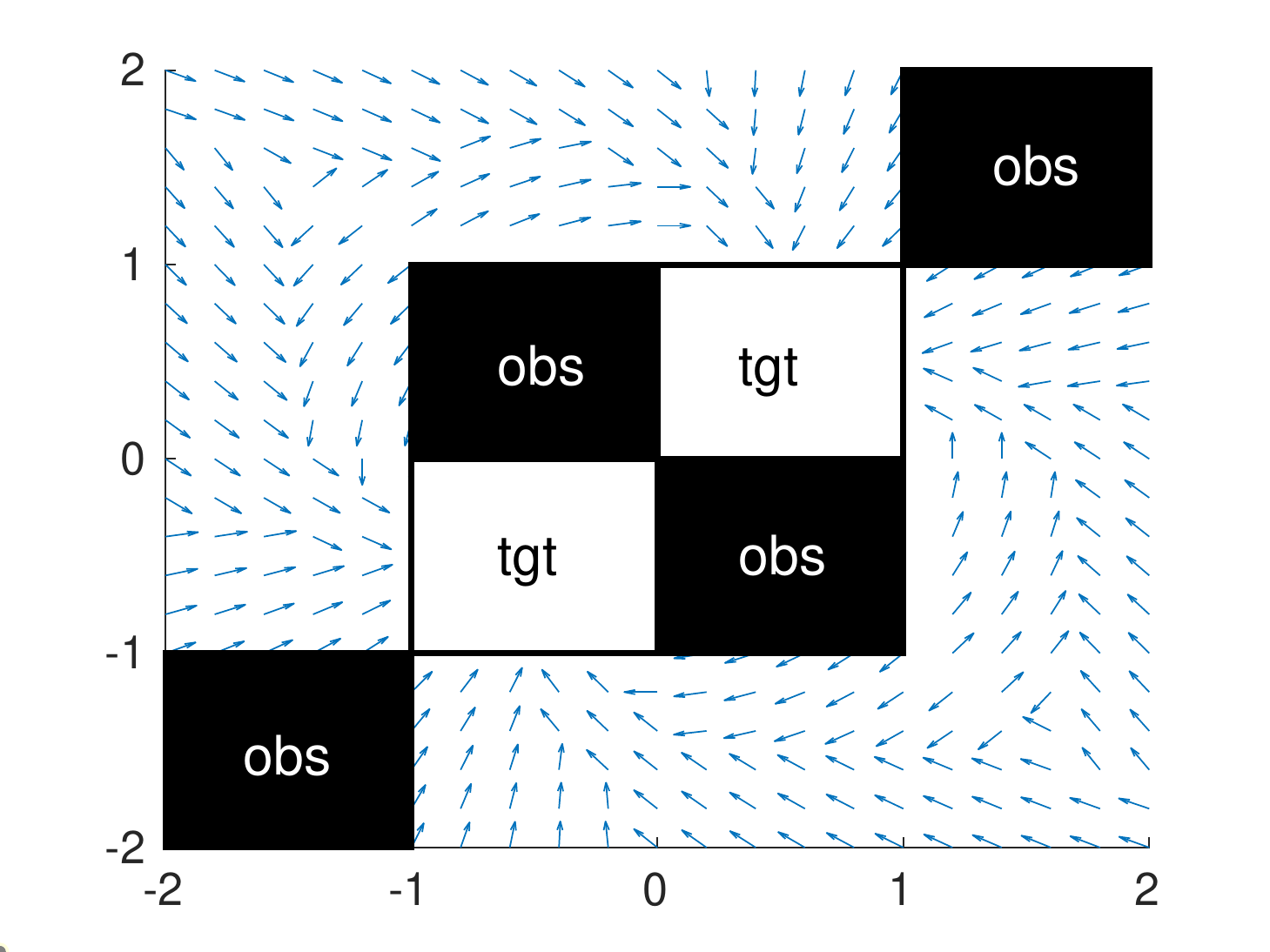}
    \caption{Optimal strategy}
\label{fig:strategy_switched_linear2D_2}
  \end{subfigure}%
  \caption{Results of Experiment $\#12$. The trajectories in Figure \ref{fig:P_reach_upper2_switched_linear2D} correspond to Monte Carlo simulations taking samples from a Gaussian distribution $\widetilde p_v\in\mathcal{P}_v$ at the boundary of the ambiguity set. The ones that satisfy the specification are presented in blue, while the ones that do not are presented in red.}
  \label{fig:Results_switched_linear2D_2}
\end{figure}

In both Experiments $\#11$ and $\#12$, the optimal stationary strategy was found via the approach proposed in Theorem~\ref{thm:optimal_strategy_unbounded_horizon}, which took only a few seconds.

\subsection{Complex Specifications}
\label{sec:LTLf}

In this section we showcase how our approach to solve unbounded reachability problems allows us to synthesize strategies that yield high probability of satisfying more complex specifications. Specifically, we consider the same setting as in case study Experiment $\#2$, together with the following $\text{LTL}_f$ specification, borrowed from \cite{vazquez2018learning}: the system must reach a charge station while remaining safe and, if the system goes through a region with water, then it should first dry in a carpet before charging.
%Obtaining the abstraction took $32$ seconds, the synthesis time was $19.5$ minutes with our dual approach and extracting the optimal stationary policy took a couple of seconds. Our approach achieved an average error of $e_{\text{avg}} = 0.055$.
The results are shown as Experiment $\#13$ in Table~\ref{tab:results}. When compared with Experiment $\#2$, the synthesis time is significantly higher. This happens due to the fact that the specification is more complex than just reachability. As a consequence, strategy synthesis requires solving a reachability problem on a bigger robust MDP, obtained by combining the abstraction with the \emph{automaton} that represents the specification (for more details see \cite{wolff2012robust}). The optimal stationary strategy was extracted via the approach proposed in Theorem~\ref{thm:optimal_strategy_unbounded_horizon} which, again, took only a few seconds.

Additionally, the bounds in the satisfaction probabilities are shown in Figures \ref{fig:unicycle2D_LTLf_bounds} and the vector field of the system in closed loop with the optimal strategy is plotted in \eqref{fig:unicycle2D_LTLf_strategy1}. Note that the strategy is composed by two stationary strategies, as can be observed in Figure~\ref{fig:unicycle2D_LTLf_strategy1}. The system follows the strategy ``go to charge'' when it is dry, and ``go to carpet'' when it is wet.
\begin{figure}[h!]
\centering
  \begin{subfigure}[t]{0.4\textwidth}
    \centering
    \includegraphics[width=\textwidth]{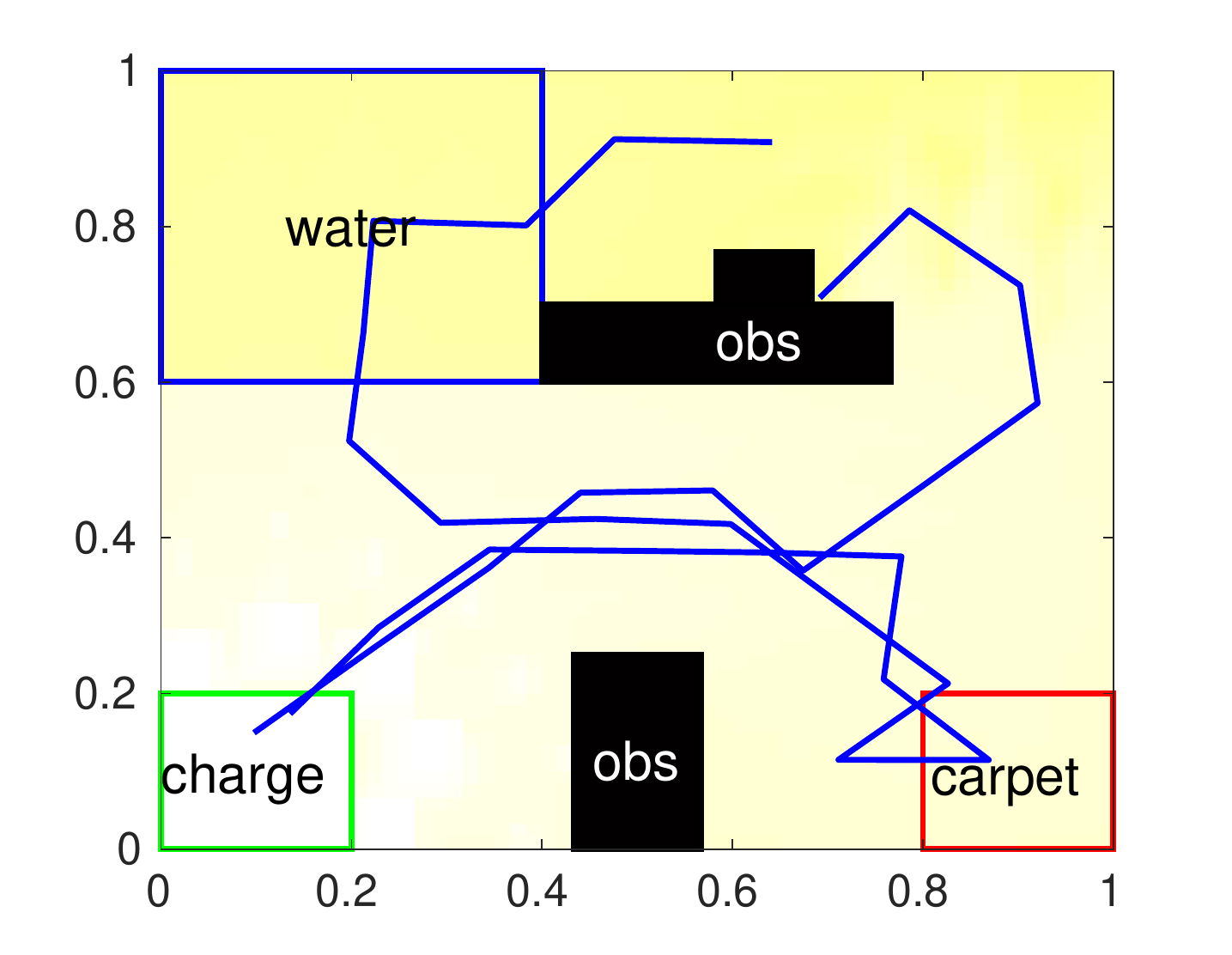}
    \caption{Lower bound on the reachability probability}
\label{fig:unicycle2D_LTLf_P_reach_lower1}
  \end{subfigure}%%
  \hspace{0.1\textwidth}
  \begin{subfigure}[t]{0.42\textwidth}
    \centering
    \includegraphics[width=\textwidth]{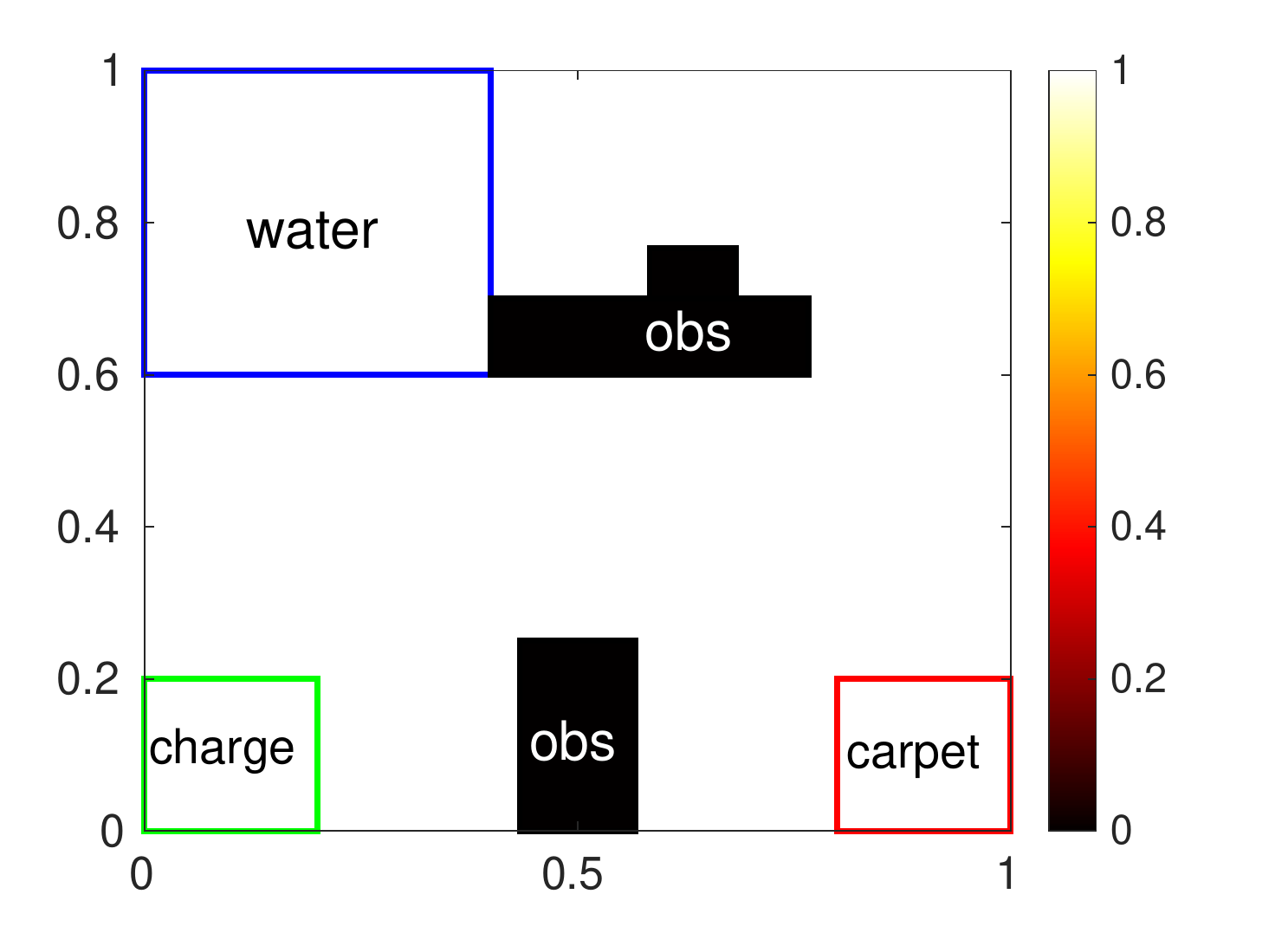}
    \caption{Upper bound on the reachability probability}
\label{fig:unicycle2D_LTLf_P_reach_upper1}
\end{subfigure}%
  \caption{Bounds in the probability of reachability for Experiment $\#13$. The trajectories in blue correspond to Monte Carlo simulations taking samples from a distribution $\widetilde p_v\in\mathcal{P}_v$.}
  \label{fig:unicycle2D_LTLf_bounds}
\end{figure}
\begin{figure}[h!]
\centering
  \begin{subfigure}[t]{0.4\textwidth}
    \centering
    \includegraphics[width=\textwidth]{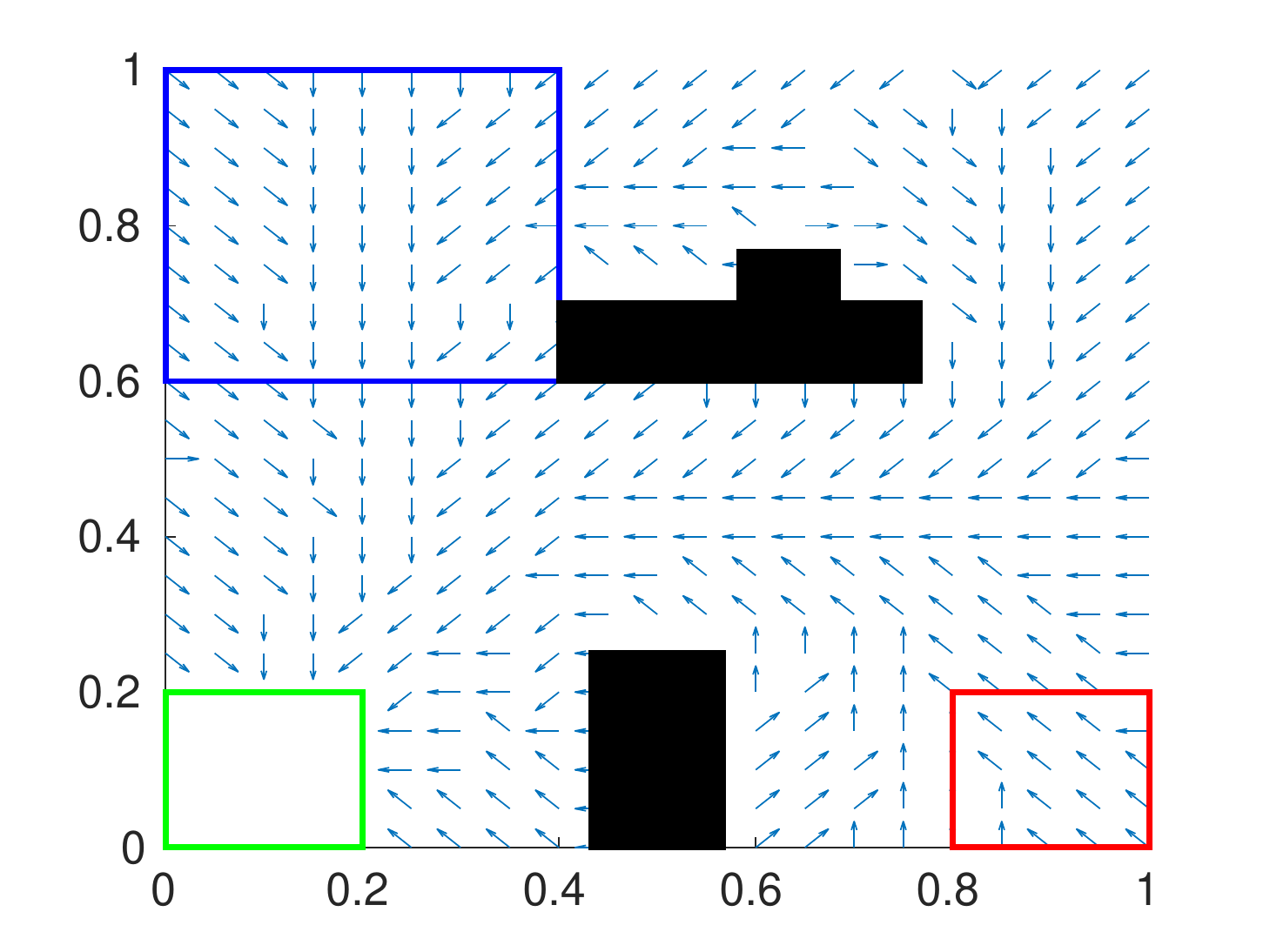}
    \caption{Mode ``go to charge''}
\label{fig:unicycle2D_LTLf_strategy_mode1_1}
  \end{subfigure}%%
  \hspace{0.1\textwidth}
  \begin{subfigure}[t]{0.4\textwidth}
    \centering
    \includegraphics[width=\textwidth]{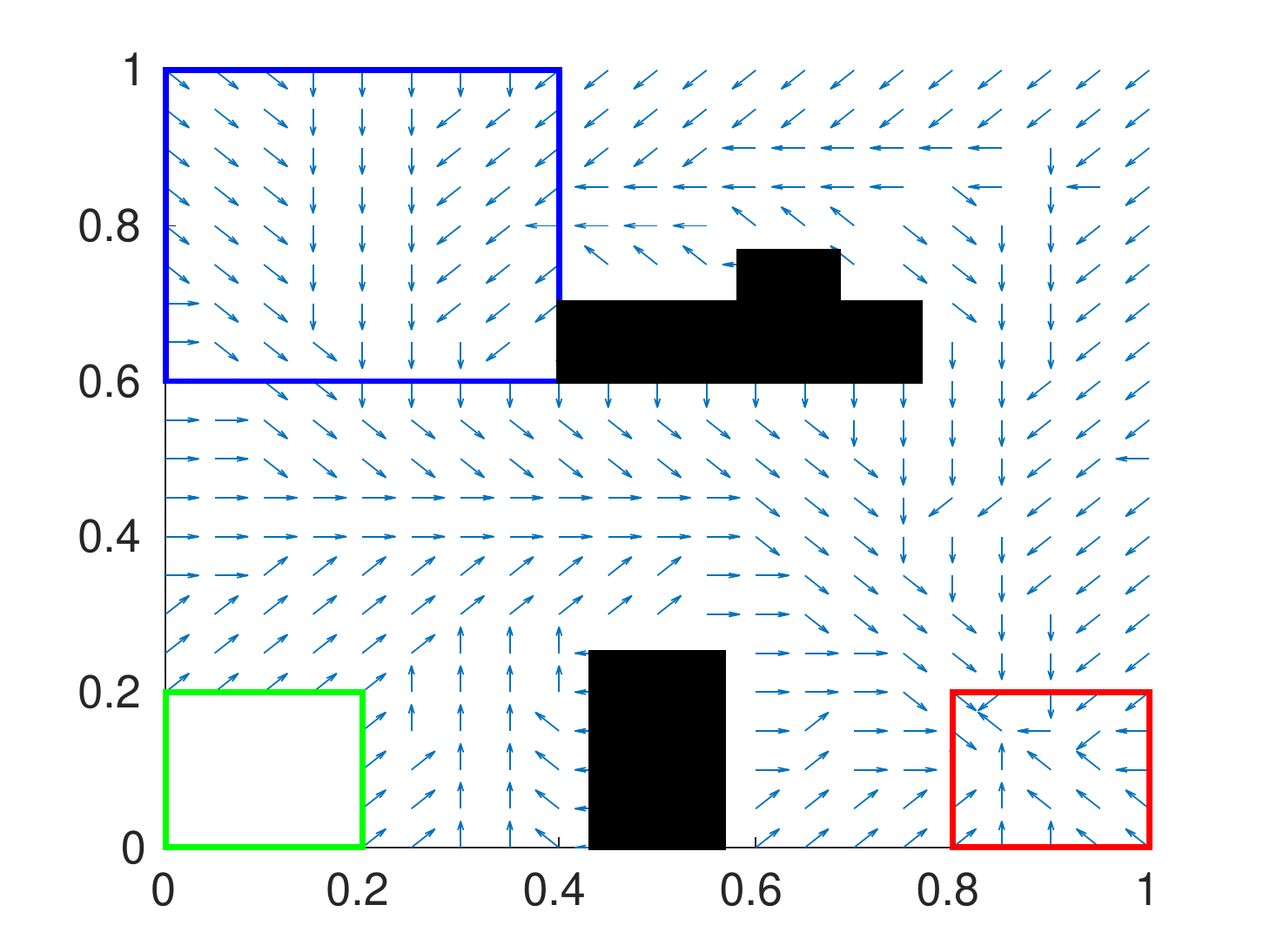}
    \caption{Mode ``go to carpet''}
\label{fig:unicycle2D_LTLf_strategy_mode3_1}
\end{subfigure}%
\caption{Optimal strategy for the case of Experiment $\#13$.}
  \label{fig:unicycle2D_LTLf_strategy1}
\end{figure}
\section{Conclusion}

%\DB{Need to add it!}
%\IG{Added a little bit}

In this paper we presented a framework for the formal control of switched 
stochastic systems with additive, random disturbances whose probability 
distribution belongs to a Wasserstein ambiguity set. To this end, we derived a 
robust MDP abstraction of the original system and proposed an algorithm, 
termed  robust dynamic programming, to synthesize robust strategies that maximize 
the probability of satisfying a (finite or unbounded time horizon) reach-avoid specification. The obtained results 
demonstrate the effectiveness of our approach in systems with both linear and 
nonlinear dynamics, and even under complex $\text{LTL}_f$ specifications. Our results also show the superiority of our abstraction approach with respect to leveraging directly IMDP abstractions, and the computational advantage of our synthesis algorithm with respect to using off-the-shelf linear programming solvers.

\appendix
\section{Proof of Proposition \ref{prop:consistency}}
\label{app:proof_consistency}

To prove the proposition we will use two technical lemmas that link couplings and optimal transport discrepancies in the continuous and abstract space.
\begin{lemma}[Induced Coupling on the Discrete Space]
\label{lemma:coupling_equivalence}
Consider the coupling $\pi\in\mathcal{P}(\reals^n\times\reals^n)$ with marginals $p$ and $p'$, a finite (measurable) partition $\Qimdp$ of $\reals^n$, and the induced distributions $\gamma,\gamma'\in \mathcal D(Q)$ with $\gamma(q):=p(q)$ and $\gamma'(q):=p(q)$ for all $q\in Q$. Then $\nu\in\mathcal{P}(\Qmdp\times\Qmdp)$, defined by 
\begin{align} \label{nu:dfn}
    \nu(q,q') := \int_{q\times q'}d\pi(x,y)
\end{align}
is a coupling between $\gamma$ and $\gamma$. 
\end{lemma}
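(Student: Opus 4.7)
The plan is to verify directly that the two marginals of $\nu$ coincide with $\gamma$ and $\gamma'$, which by definition is what it means for $\nu$ to be a coupling. The main ingredients are (i) the fact that the collection of cells $\{q'\}_{q'\in Q}$ is a finite measurable partition of $\mathbb{R}^n$, so sums of integrals over these cells reassemble an integral over $\mathbb{R}^n$, and (ii) that the marginals of $\pi$ are $p$ and $p'$ by assumption.

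Concretely, for the first marginal I would fix $q\in Q$ and compute
\begin{align*}
\sum_{q'\in Q}\nu(q,q') = \sum_{q'\in Q}\int_{q\times q'} d\pi(x,y) = \int_{q\times \mathbb R^n} d\pi(x,y) = p(q) = \gamma(q),
\end{align*}
where the second equality uses countable (here, finite) additivity of $\pi$ over the partition $\{q\times q'\}_{q'\in Q}$ of $q\times\mathbb R^n$, and the third equality uses that $\pi$ has first marginal $p$. The argument for the second marginal is symmetric: fixing $q'\in Q$,
\begin{align*}
\sum_{q\in Q}\nu(q,q') = \int_{\mathbb R^n\times q'} d\pi(x,y) = p'(q') = \gamma'(q').
\end{align*}

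Finally, one should check that $\nu$ is indeed a probability distribution on $Q\times Q$: nonnegativity is immediate from nonnegativity of $\pi$, and $\sum_{q,q'\in Q}\nu(q,q')=\pi(\mathbb R^n\times\mathbb R^n)=1$ follows from the same partition argument. There is no serious obstacle here; the only point requiring a moment of care is the measurability of the cells $q\in Q$ (which is hypothesized when calling $Q$ a measurable partition), ensuring that the sets $q\times q'$ are product-measurable and hence that the integrals defining $\nu(q,q')$ make sense. The lemma is then established.
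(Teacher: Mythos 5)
Your proof is correct and follows essentially the same route as the paper's: both compute the marginals of $\nu$ by summing over the cells of the partition and invoking additivity of $\pi$ together with the marginal property $\int_{q\times\reals^n}d\pi=p(q)$. Your additional checks (total mass one, measurability of the product cells) are harmless elaborations of the same argument.
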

\begin{proof} 
The proof follows directly from the fact that 
\begin{align*}
\sum_{q'\in Q}\nu(q,q')= \sum_{q'\in Q}\int_{q\times q'}d\pi(x,y)
=\int_{q\times\reals^n}d\pi(x,y)=\gamma(q),
\end{align*}
and analogously for the other marginal. 
\end{proof}
Next, given two distributions on the continuous space $\reals^n$ we establish bounds on the optimal transport discrepancy $\mathcal T_c$ of their induced distributions on $Q$, based on their $s$-Wasserstein distance in the continuous space.
\begin{lemma}[Induced optimal transport discrepancy]
\label{lemma:relation_wasserstein_and_discrepancy}  
Let $p,p'\in\mathcal{D}_s(\reals^n)$ and consider the induced distributions $\gamma,\gamma'\in\mathcal{D}(Q)$ with $\gamma(\qmdp):=p(q)$ and $\gamma'(\qmdp):=p(q)$ for all $q\in \Qmdp$. Then for any $s\ge 1$ and $\varepsilon \ge 0$ it holds that 
\begin{align*}
    \mathcal{W}_s(p,p')\leq \varepsilon \Rightarrow \mathcal{T}_c(\gamma,\gamma')\leq \varepsilon^s,
\end{align*}
where $c$ is given in \eqref{eq:cost}.
\end{lemma}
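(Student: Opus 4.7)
The plan is to construct an explicit coupling on $Q\times Q$ from an optimal coupling between $p$ and $p'$ on $\reals^n\times\reals^n$, and show that its transport cost with respect to $c$ is already bounded by $\varepsilon^s$. The bound on $\mathcal{T}_c(\gamma,\gamma')$ will then follow immediately from the fact that the infimum over all couplings in $\Pi(\gamma,\gamma')$ is upper bounded by the cost of any particular coupling.

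First, since the cost $(x,y)\mapsto \|x-y\|^s$ is continuous, by the result from \cite{CV:03} cited just after \eqref{eq:optimal_transport_discrepancy} there exists an optimal coupling $\pi\in\Pi(p,p')$ attaining
\[
\int_{\reals^n\times\reals^n}\|x-y\|^s\, d\pi(x,y) \;=\; \mathcal{W}_s(p,p')^s \;\le\; \varepsilon^s.
\]
Define $\nu\in\mathcal{D}(Q\times Q)$ by \eqref{nu:dfn}. By Lemma~\ref{lemma:coupling_equivalence}, $\nu$ is a coupling between $\gamma$ and $\gamma'$, so $\nu\in\Pi(\gamma,\gamma')$.

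Next, I would evaluate the transport cost of $\nu$ with respect to $c$. Using the definition of $c$ in \eqref{eq:cost} as an infimum, for any $x\in q$ and $y\in q'$ we have $c(q,q')\le \|x-y\|^s$. Therefore
\[
\sum_{q,q'\in Q}\nu(q,q')\,c(q,q') \;=\; \sum_{q,q'\in Q}\int_{q\times q'} c(q,q')\,d\pi(x,y) \;\le\; \sum_{q,q'\in Q}\int_{q\times q'} \|x-y\|^s\, d\pi(x,y).
\]
Since $Q$ is a finite partition of $\reals^n$, the sets $\{q\times q'\}_{q,q'\in Q}$ form a partition of $\reals^n\times\reals^n$, and thus the right-hand side collapses to $\int_{\reals^n\times\reals^n}\|x-y\|^s\, d\pi(x,y)\le \varepsilon^s$.

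Finally, by the definition of $\mathcal{T}_c$ in \eqref{eq:optimal_transport_discrepancy} applied to the discrete space $Q$, taking the infimum over $\Pi(\gamma,\gamma')$ can only decrease the cost, so
\[
\mathcal{T}_c(\gamma,\gamma') \;\le\; \sum_{q,q'\in Q}\nu(q,q')\,c(q,q') \;\le\; \varepsilon^s,
\]
which yields the claim. There is no real obstacle here: the proof is a direct computation, and the only subtlety is observing that the infimum defining $c(q,q')$ gives a pointwise bound $c(q,q')\le \|x-y\|^s$ on each product cell $q\times q'$, which is precisely what makes the discretized transport cost cheaper than the continuous one.
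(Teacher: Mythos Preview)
Your proof is correct and follows essentially the same approach as the paper: take an optimal coupling $\pi$ for $\mathcal{W}_s(p,p')$, push it forward to the coupling $\nu$ on $Q\times Q$ via Lemma~\ref{lemma:coupling_equivalence}, and use the pointwise bound $c(q,q')\le\|x-y\|^s$ on each cell $q\times q'$ to compare costs. The paper phrases the pointwise bound as $\|x-y\|^s\ge c(J(x),J(y))$ using the quantization map $J$, but the computation is identical.
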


\begin{proof}
Consider the map $J$ in \eqref{J:dfn} and note that due to \eqref{eq:cost},
\begin{align} \label{cost:inequality} 
\|x - y\|^s\geq c(J(x),J(y))
\end{align}
for all $x,y\in\mathbb{R}^n$. Let $\pi$ be an optimal coupling for the $s$-Wasserstein distance $\mathcal{W}_s(p,p')$ and $\nu$ be the induced coupling on $Q$ given by \eqref{nu:dfn}. Then we get from \eqref{eq:optimal_transport_discrepancy}, \eqref{J:dfn}, and  \eqref{cost:inequality}  that   
\begin{align*}
\mathcal{W}_s(p,p')^s & = \int_{\mathbb{R}^n\times\mathbb{R}^n}\|x - y\|^sd\pi(x,y) \ge \int_{\mathbb{R}^n\times\mathbb{R}^n}c(J(x),J(y))d\pi(x,y) \\
& = \sum_{\qmdp,\qmdp'\in\Qmdp}c(q,q')\int_{\qmdp\times\qmdp'}d\pi(x,y) = \sum_{\qmdp,\qmdp'\in\Qmdp}c(q,q')\nu(q,q') \geq \mathcal{T}_c(\gamma,\gamma'),
\end{align*}
which implies the result. The last inequality follows from \eqref{eq:optimal_transport_discrepancy} and   Lemma~\ref{lemma:coupling_equivalence}, which asserts that $\nu$ is a coupling between $\gamma$ and $\gamma'$. The proof is complete.
\end{proof}
\noindent The intuition behind Lemma \ref{lemma:relation_wasserstein_and_discrepancy} is the following: if the $s$-Wasserstein distance between two distributions in $\mathbb{R}^n$ is at most $\varepsilon$, then the optimal transport discrepancy (based on $c$) between their induced distributions on $\Qmdp$ is not more than $\varepsilon^s$.

\begin{proof}[Proof of Proposition~\ref{prop:consistency}] 
Let $q$, $a$, $x$, and $p_v$ as given in the statement and define  
\begin{align*}
\widehat\gamma_{x,a}(q'):=T_{\widehat p_v}^a(q'\mid x)
\end{align*}
for all $q'\in Q$. Then it follows from 
\eqref{eqn:transition_probability_bounds1} and \eqref{eq:set_transition_probabilities_nominal_IMDP} that 
\begin{align}\label{gamma:in:hat:Gamma}
\widehat\gamma_{x,a}\in\widehat\Gamma_{q,a}.
\end{align}
Next, we get from \eqref{transition:kernel} and Assumption~\ref{ass:1} that $T_{p_v}^a(\cdot\mid x)$ and $T_{\widehat p_v}^a(\cdot\mid x)$ are distributions in $\mathcal{D}_s(\reals^n)$ and that 
\begin{align*}
\mathcal W_s\Big(T_{p_v}^a(\cdot\mid x),T_{\widehat p_v}^a(\cdot\mid x)\Big)\le\varepsilon.   
\end{align*}
Since the induced distributions of $T_{p_v}^a(\cdot\mid x)$ and $T_{\widehat p_v}^a(\cdot\mid x)$ on $Q$ are $\gamma_{x,a}$ and $\widehat\gamma_{x,a}$, respectively, it follows from Lemma~\ref{lemma:relation_wasserstein_and_discrepancy} that  $\mathcal{T}_c(\gamma_{x,a},\widehat\gamma_{x,a})\leq \varepsilon^s\equiv\epsilon$, namely, $\gamma_{x,a}\in \mathcal{T}_c^\epsilon(\widehat\gamma_{x,a})$. Thus, we deduce from  \eqref{eq:set_transition_probabilities_robust_MDP} and \eqref{gamma:in:hat:Gamma} that $\gamma_{x,a}\in\Gamma_{q,a}$ and conclude the proof.
\end{proof}
\section{Proofs of Section~\ref{sec:unbounded_horizon}}
\label{app:proofs_section_unbounded_horizon}

We first introduce certain mappings and preliminary results that will used for the proofs of this section. Consider the discounted robust MDP with rewards $\widetilde{\mathcal{M}} := (Q, A, \widetilde{\Gamma}, r, \beta)$ as an extension of $\mathcal{M}$, where
\begin{align*}
    r(q) := \begin{cases}
        1\quad &\text{if}\: q\in Q_{\rm{tgt}}\\
        0 \quad &\text{otherwise}
    \end{cases}
\end{align*}
is the reward function, $\beta\in(0,1]$ is the discount factor and $\widetilde\Gamma$ is given by
\begin{align*}
    \widetilde\Gamma_{q,a} := \begin{cases}
        \{\delta_{q_u}\} \quad &\text{if} \: q\in Q_{\rm{tgt}}\\
        \Gamma_{q,a} \quad &\text{otherwise}
    \end{cases}
\end{align*}
for all $q\in Q$, $a\in A$. With a small abuse of notation we let $\Sigma$ and $\Xi$ denote, respectively, the sets of strategies and adversaries of $\widetilde{\mathcal M}$. We define the \emph{value function} of a strategy $\sigma\in\Sigma$ and adversary $\xi\in\Xi$ as the \emph{expected total reward} \cite{puterman2014markov}
\begin{align*}
    V_{\sigma,\xi,\beta}(q) := \mathbb{E}_{P_\xi^{q,\policy}}\Big[\sum_{k=0}^\infty \beta^k r(\omega(k))\Big]
\end{align*}
of $\widetilde{\mathcal{M}}$. Note that
$V_{\sigma,\xi,1}(q) = P_{\rm{reach}}( Q_{\rm safe},Q_{\rm{tgt}},\infty \mid q, \policy,\xi)$,
which is the desired reachability probability of $\mathcal{M}$ under $\sigma$ and $\xi$. By \eqref{eq:optimal_strategy_robust_MDPs} and \eqref{eq:def_bounds_reachability_robust_MDPs_lower}, this implies 
\begin{align} \label{value:function:vs:worst:case:reachability}
\sup_{\sigma\in\Sigma}\inf_{\xi\in\Xi}V_{\sigma,\xi,1}(q) = \underline p^\infty(q)
\end{align} 
for all $q\in Q$. We next establish that $V_{\sigma,\xi,\beta}$ is continuous at $\beta = 1$, which generalizes the corresponding well-known result for MDPs~\cite[Lemma 7.18]{puterman2014markov}.
\begin{lemma}
\label{lemma:continuity_beta}
    Given a strategy $\sigma\in\Sigma$ and an adversary $\xi\in\Xi$ of $\widetilde{\mathcal{M}}$, the value function $\beta\mapsto V_{\sigma,\xi,\beta}$ is continuous on $(0,1]$.
\end{lemma}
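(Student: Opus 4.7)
The plan is to exploit the specific structure of $\widetilde{\mathcal M}$: because $\widetilde\Gamma_{q,a}=\{\delta_{q_u}\}$ for every $q\in Q_{\rm tgt}$ and because $q_u$ is absorbing (it is absorbing in $\mathcal M$ by \eqref{eq:transition_probability_bounds3} and this carries over to $\widetilde{\mathcal M}$), any path $\omega$ of $\widetilde{\mathcal M}$ visits $Q_{\rm tgt}$ at most once. Consequently, for every realization of $\omega$,
\begin{align*}
\sum_{k=0}^{\infty} r(\omega(k))=\sum_{k=0}^{\infty}\indicator_{Q_{\rm tgt}}(\omega(k))\le 1.
\end{align*}

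Next, I would apply Tonelli's theorem (all summands are non-negative) to interchange the expectation with the series, obtaining
\begin{align*}
V_{\sigma,\xi,\beta}(q)=\sum_{k=0}^{\infty}\beta^{k}p_{k}, \qquad p_{k}:=P_{\xi}^{q,\sigma}[\omega(k)\in Q_{\rm tgt}]\ge 0.
\end{align*}
By the at-most-one-visit observation, $\sum_{k=0}^{\infty}p_{k}=\mathbb E_{P_\xi^{q,\sigma}}[\sum_{k=0}^\infty\indicator_{Q_{\rm tgt}}(\omega(k))]\le 1<\infty$, so $\{p_k\}$ is summable.

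Continuity then follows from dominated convergence for series. Given any sequence $\beta_{n}\to\beta$ in $(0,1]$, for each $k$ we have $\beta_{n}^{k}p_{k}\to\beta^{k}p_{k}$ and the pointwise bound $\beta_{n}^{k}p_{k}\le p_{k}$, with $\sum_{k}p_{k}<\infty$ as the dominating summable sequence. Hence
\begin{align*}
V_{\sigma,\xi,\beta_{n}}(q)=\sum_{k=0}^{\infty}\beta_{n}^{k}p_{k}\;\longrightarrow\;\sum_{k=0}^{\infty}\beta^{k}p_{k}=V_{\sigma,\xi,\beta}(q),
\end{align*}
which proves continuity on all of $(0,1]$, including the delicate endpoint $\beta=1$.

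The main potential obstacle is the continuity at $\beta=1$: without the at-most-one-visit property, the natural dominating function $\beta^{k}$ blows up as $\beta\uparrow 1$, and one would only recover continuity on compact subintervals of $(0,1)$. The reward/absorbing structure of $\widetilde{\mathcal M}$ is precisely what provides the $k$-summable dominator $p_{k}$ needed to extend continuity up to $1$, so this step is where the argument really rests. Everything else (Tonelli, dominated convergence) is routine once that bound is in place.
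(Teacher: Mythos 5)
Your proof is correct and follows essentially the same route as the paper: both decompose $V_{\sigma,\xi,\beta}(q)$ as $\sum_{k}\beta^{k}\,\mathbb{E}_{P_{\xi}^{q,\sigma}}[r(\omega(k))]$ and derive continuity on $(0,1]$ from the summability of the coefficients, the paper via the Weierstrass M-test and you via dominated convergence for series, which are interchangeable here. The one point where you go further is in explicitly justifying $\sum_{k}p_{k}\le 1$ through the at-most-one-visit structure of $\widetilde{\mathcal{M}}$, whereas the paper simply identifies this sum with the (finite) reachability probability; your version makes the crucial step more transparent.
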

\begin{proof}
For each $q\in Q$, $V_{\sigma,\xi,\beta}(q)$ is the infinite sum of the functions
\begin{align*}
    g_k(\beta) := \beta^k\mathbb{E}_{P_\xi^{q,\policy}}[r(\omega(k))], \quad k\in\naturals_0,
\end{align*}
which are uniformly bounded on $\beta\in(0,1]$ by $\lambda_k := \mathbb{E}_{P_\xi^{q,\policy}}[r(\omega(k))]$, $k\in\naturals_0$, respectively. Since $\sum_{k=0}^\infty \lambda_k = V_{\sigma,\xi,1}(q)$, which is finite, as it is the probability of reaching $Q_{\rm tgt}$ from $q$, $\sum_{k=0}^\infty g_k$ converges uniformly to $V_{\sigma,\xi, \beta}(q)$ by the Weierstrass M-test. Thus, since each $g_k$ is continuous with respect to $\beta$ on $(0,1]$, the same holds also for $V_{\sigma,\xi,\beta}$.
\end{proof}

We identify the space of value functions $V:Q\to\mathbb R_{\ge 0}$ with $\mathcal{V} := \mathbb{R}_{\ge 0}^{|Q|}$ and write $V \ge (\le,>,<) V'$ for $V,V'\in\mathcal{V}$ to indicate that the respective inequality holds component-wise. The limits of sequences $\{V^k\}$ in $\mathcal{V}$ are also interpreted component-wise.
Let $\Xi_s$ be the set of stationary adversaries of $\widetilde{\mathcal{M}}$. Namely, for every $q\in Q$ and $a\in A$, an adversary $\xi\in\Xi_s$ chooses a distribution $\gamma^{\xi}_{q,a} \in \widetilde\Gamma_{q,a}$. Given $\sigma\in\Sigma_s$, $\xi\in\Xi_s$, and $\beta\in(0,1]$, we define the Bellman operator $T_{\sigma,\xi,\beta}:\mathcal{V} \longrightarrow \mathcal{V}$ of the auxiliary MDP $\widetilde{\mathcal{M}}$ by
%\DB{Ibon, can you pls check if the mod in red is correct (the widetilde)?} \IG{Good catch!}
%
\begin{align*}
    T_{\sigma,\xi,\beta}[V](q) := r(q) +\beta\sum_{q'\in Q}\gamma_{q,\sigma(q)}^{\xi}(q')V(q') %\quad &\text{otherwise}
\end{align*}
for any $V\in\mathcal{V}$, $q\in Q$. We also define, given $\sigma\in\Sigma_s$, $T_{\sigma, \beta}:\mathcal{V} \longrightarrow \mathcal{V}$ with
\begin{align*}
    T_{\sigma,\beta}[V](q) := r(q) + \min\limits_{\gamma\in\widetilde{\Gamma}_{q,\sigma(q)}}\beta\sum_{q'\in Q}\gamma(q')V(q')
\end{align*}
for any $V\in\mathcal{V}$, $q\in Q$. Finally, we define $T_{\beta}:\mathcal{V} \longrightarrow \mathcal{V}$ with
\begin{align*}
    T_\beta[V](q) := r(q) + \max\limits_{a\in A}\min\limits_{\gamma\in\widetilde{\Gamma}_{q,a}}\beta\sum_{q'\in Q}\gamma(q')V(q')
\end{align*}
for any $V\in\mathcal{V}$, $q\in Q$.

%\blue{Note that these operators can equivalently be obtained through the usual Bellman equation 
%
%\begin{align*}
 %   T_{\sigma,\xi,\beta}[V](q) =  \sum_{q'\in Q}\gamma_q^{\sigma,\xi}(q')(r(q)+\beta V(q')) %\quad &\text{otherwise}
%\end{align*}
%
%for $\widetilde{\mathcal{M}}$ (and analogously for $T$)}. 
When $\beta = 1$, we simply write $T := T_1$, $T_{\sigma} : = T_{\sigma,1}$ and $T_{\sigma,\xi} := T_{\sigma,\xi,1}$, respectively. Note that the sequence obtained by iterating on $T$ with initial condition $0$ is equivalent to performing robust dynamic programming as in Theorem~\eqref{thm:robust_dynamic_programming_infinite_horizon_lower_bound}. We can now proceed with the proofs of the results from  Section~\ref{sec:unbounded_horizon}.

\begin{proof}[Proof of Theorem~\ref{thm:robust_dynamic_programming_infinite_horizon_lower_bound}]
Consider the sequence of value functions $\{V^k\}_{k\in \mathbb{N}_0}$, defined recursively by $V^{k+1} = T[V^k]$ and $V^0 = 0$. 
%Then by Proposition~\ref{prop:robust_value_iteration} and the definition of $T$, we have that $V^k=\underline p^k$ for all $k$ except $0$, where $\underline p^k$ is given in the statement of the theorem. 
First, we prove that this sequence converges to a fixed point of $T$ in $\mathcal{V}$. It is rather straightforward to show that the operator $T$ is monotone, namely, that $T[V]\ge T[V']$ holds for all $V,V'\in\mathcal{V}$ with $V\ge V'$, which implies that $V^{k+1} \ge V^k$ for all $k\in\mathbb{N}_0$. One can also readily check that $\{V^k\}_{k\in \mathbb{N}_0}$ is bounded. Therefore, since $Q$ is finite, the sequence converges uniformly to $V^\infty \in\mathcal V$. 
%\DB{I'd avoid this sup, bc it can become ambiguous:) It is not used somewhere, or is it?}\IG{Fixed!}.
In addition, by following the exact same arguments as in \cite[Proof of Theorem 3.2a]{iyengar2005robust}, which establishes that discounted Bellman operators of robust MDPs are contractions, we deduce that $T$ has Lipschitz modulus one and is therefore continuous. Thus, it follows that
\begin{align*}
    V^\infty := \lim_{k\to\infty} V^{k+1} =  \lim_{k\to\infty}T[V^k] = T[\lim_{k\to\infty}V^k] = T[V^\infty],
\end{align*}
namely, $V^\infty$ is a fixed point of $T$. In order to show minimality of $V^\infty$, let $V^*\in \mathcal{V}$ be an arbitrary fixed point of $T$. By monotonicity of $T$ we get that $T[0] \le T[V^*]$, and thus $V^1 \le V^*$. By induction, we obtain that $V^k \le V^*$ for all $k\in\mathbb{N}_0$, so $V^\infty \le V^*$. Since $V^*$ was chosen arbitrarily, $V^\infty$ must be the least fixed point of $T$ in $\mathcal{V}$.

Next, we follow the same strategy as in the proof of \cite[Theorem 3]{blackwell1967positive} to show that $V^\infty=\underline p^\infty$. 
To this end, note that $T_\beta[V] \le T[V]$ for each $V\in\mathcal{V}$ and $\beta\in(0,1)$. Denoting by $T_\beta^k$ the $k$-fold composition of $T_\beta$ with itself, we obtain by induction  that $T_\beta^k[0] \le T^k[0]$ for all $k\in\mathbb{N}_0$. Thus
\begin{align}
    \label{eq:aux1_app_B}
    \lim_{k\to\infty} T_\beta^k[0] \le \lim_{k\to\infty} T^k[0] = V^\infty.
\end{align}
From the theory of robust MDPs \cite[Theoren 3.2a]{iyengar2005robust}, the operator $T_{\beta}$ is a contraction mapping on $\mathcal{V}$ for all $\beta\in(0,1)$. This implies that $T^{\beta}$ has a  unique fixed point $V_\beta^{\infty}$, which by  \cite[Theoren 3.2b]{iyengar2005robust} satisfies \begin{align} \label{fixed:point}
V_\beta^{\infty}(q)= \sup_{\sigma\in\Sigma}\inf_{\xi\in\Xi}V_{\sigma,\xi,\beta}(q)
\end{align}
for all $q\in Q$, and $\{T_\beta^k[0]\}_{k\in\mathbb{N}_0}$ necessarily converges to it. Thus, we deduce from \eqref{eq:aux1_app_B} that $\sup_{\sigma\in\Sigma}\inf_{\xi\in\Xi}V_{\sigma,\xi,\beta}(q) \le V^\infty(q)$ for all $q\in Q$ and $\beta\in(0,1)$. 
%and $\lim_{\beta\to 1}\sup_{\sigma\in\Sigma}\inf_{\xi\in\Xi}V_{\sigma,\xi,\beta}(q) \le V^\infty(q)$ for all $q\in Q$. 
From this fact and Lemma~\ref{lemma:continuity_beta} we get  
\begin{align*}
    \lim_{\beta\to 1}\sup_{\sigma\in\Sigma}\inf_{\xi\in\Xi}V_{\sigma,\xi,\beta}(q) = \sup_{\sigma\in\Sigma}\inf_{\xi\in\Xi} \lim_{\beta\to 1} V_{\sigma,\xi,\beta}(q) = \sup_{\sigma\in\Sigma}\inf_{\xi\in\Xi}V_{\sigma,\xi}(q)\le V^\infty(q)
\end{align*}
for all $q\in Q$. Hence, from the last inequality and \eqref{value:function:vs:worst:case:reachability} we obtain  $\underline p^\infty\le V^\infty$. 

To also prove that $\underline p^\infty \ge V^\infty$, note that $V_{\sigma,\xi,\beta} \le V_{\sigma,\xi}$ for all $\sigma\in\Sigma$, $\xi\in \Xi$, and $\beta\in(0,1)$. Thus, we get again from \eqref{value:function:vs:worst:case:reachability}  that  $\sup_{\sigma\in\Sigma}\inf_{\xi\in\Xi}V_{\sigma,\xi,\beta}(q) \le \underline p^\infty(q)$ for all $q\in Q$ and $\beta\in(0,1)$. Combining this with \eqref{fixed:point} and the fact that each $T_\beta$ is also monotone, which implies that the sequence $\{T_\beta^k[0]\}_{k\in\mathbb{N}_0}$ is upper bounded by its fixed point $V_\beta^\infty$, we get that 
%
%\begin{align*}
$T_\beta^k[0](q) 
    %\le sup_{\sigma\in\Sigma}\inf_{\xi\in\Xi}V_{\sigma,\xi,\beta(q) 
    \le \underline p^\infty(q)
$ %\end{align*}
for all $q\in Q$, $k\in\mathbb{N}_0$, and $\beta\in(0,1]$. Thus, by continuity of each  $T_\beta^k$ with respect to  $\beta$, which follows directly from continuity of $T_\beta$, we get 
\begin{align*}
V^k=T^{k}[0] = \lim_{\beta\to 1}T_\beta^k[0] \le \underline p^\infty    
\end{align*}
for all $k\in\mathbb{N}_0$, and taking the limit as $k\to\infty$ we obtain $V^\infty \le \underline p^\infty$. Therefore, $V^\infty = \underline p^\infty$, which concludes the proof. 
\end{proof}

%\subsection{Proof of Proposition~\ref{prop:sufficient_conditions_optimal_strategy_unbounded_horizon}}

\begin{proof}[Proof of Proposition~\ref{prop:existence_stationary optimal_strategy_unbounded_horizon}]

%First we prove that there exists a stationary strategy that satisfies condition~\eqref{set:A:star}. 
Consider first the discounted case with discount factor $\beta\in(0,1)$ and denote, for all $q\in Q$, $\sigma\in\Sigma$, $V_{\sigma,\beta}(q) := \inf_{\xi\in\Xi} V_{\sigma,\xi,\beta}(q)$ and $V_{\beta}(q) := \sup_{\sigma'\in\Sigma} V_{\sigma',\beta}(q)$.
%\DB{Isn't the latter $V_\beta^{\infty}$? ---see previous proof---}.
By \cite[Theorem 4]{nilim2005robust}, for every discount factor $\beta \in (0,1)$, there exists a stationary strategy $\sigma'\in\Sigma_s$ that is optimal, i.e., $V_{\sigma',\beta} = V_{\beta}$. Let $\{\beta_m\}_{m\in\naturals_0}$ be a non-decreasing sequence that converges to $1$. As in the proof of \cite[Theorem 7.1.9]{puterman2014markov}, since $\Sigma_s$ is finite, there exist a strategy $\sigma'\in\Sigma_s$ and a subsequence $\{\beta_{m_i}\}_{i\in\naturals_0}$ such that
\begin{align*}
    V_{\sigma',\beta_{m_i}} = V_{\beta_{m_i}}, \:\: \forall i \in\naturals_0.
\end{align*}
Since both $\beta\mapsto V_{\sigma',\beta}$ and $\beta\mapsto V_{\beta}$ are continuous with respect to $\beta\in (0,1]$  by Lemma~\ref{lemma:continuity_beta}, it follows that for every $\sigma\in\Sigma$ and $q\in Q$,
\begin{align*}
    V_{\sigma}(q) &= \lim_{\beta\to 1} V_{\sigma,\beta}(q) = \lim_{i\to \infty} V_{\sigma,\beta_{m_i}}(q)\\
    &\le \lim_{i\to \infty} \max_{\sigma\in\Sigma} V_{\sigma,\beta_{m_i}}(q) = \lim_{i\to \infty} V_{\sigma',\beta_{m_i}}(q) = V_{\sigma'}(q).
\end{align*}
Thus, 
%previous expression states that, 
%as in the undiscounted case, 
$\sigma'\in\Sigma_s$ is our desired optimal strategy 
for the undiscounted case, i.e., $V_{\sigma'} = \underline p^\infty$.
\end{proof}

\begin{proof}[Proof of Proposition~\ref{prop:sufficient_conditions_optimal_strategy_unbounded_horizon}]

First, we show that any $\sigma'\in\Sigma_s$ that is optimal satisfies the condition \eqref{set:A:star}. Denote $V_{\sigma'}(q) := \inf_{\xi\in\Xi} V_{\sigma',\xi}(q)$ for all $q\in Q$. From optimality of $\sigma'$ we get that $V_{\sigma'} = \underline p^\infty$.
Additionally, since $\underline p^\infty$ and $V_{\sigma'}$ are respective fixed points of $T$ and $T_{\sigma'}$, it follows that
\begin{align*}
    \underline p^\infty(q) = T_{\sigma'}[\underline p^\infty](q) \le \max_{\sigma\in\Sigma_s} T_{\sigma}[\underline p^\infty](q) = T[\underline p^\infty](q) = \underline p^\infty(q),
\end{align*}
%
\iffalse
\begin{align*}
    V = \min_{\xi\in\Xi_s} T_{\sigma',\xi}[V] \le \max_{\sigma\in\Sigma_s}\min_{\xi\in\Xi_s} T_{\sigma,\xi}[V] = V
\end{align*}
\fi
%
for all $q\in Q$, where the $\max$ should be interpreted in a component-wise fashion. This implies that
\begin{align*}
    \sigma'(q) \in \arg\max_{a\in A}\min_{\gamma\in\widetilde{\Gamma}_{q,a}}\sum_{q'\in Q}\gamma(q')\underline p^\infty(q')
\end{align*}
for all $q\in Q$. Therefore, we conclude that \eqref{set:A:star} is a necessary condition for optimality of stationary strategies.

Next, we prove that conditions \eqref{set:A:star} and \eqref{proper:strategy} are sufficient for optimality of stationary strategies. Let $\widetilde{\mathcal{M}}_{\sigma^*,\xi}$ be the time-varying Markov chain obtained by fixing the strategy and adversary of $\widetilde{\mathcal{M}}$ to $\sigma^*$ and some $\xi\in\Xi$, respectively. Denote also by  $\widetilde M_\xi^k \in \reals^{|Q|\times|Q|}$ the transition matrix of $\widetilde{\mathcal{M}}_{\sigma^*,\xi}$ at time step $k$ so that the $i$-th row of $\widetilde M_\xi^k$ is a probability distribution from $\widetilde\Gamma_{q_i,\sigma^*(q_i)}$. %for all $i\in[|Q|]$.
Note that by the definition of the Bellman operators $T$ and $T_{\sigma,\xi}$ at the beginning of this section, and the fact that $\underline p^\infty$ is a fixed point of $T$ we have 
\begin{align*} 
 T[\underline p^\infty]= \max_{\sigma\in\Sigma_s}\min_{\xi\in \Xi_s} T_{\sigma,\xi}[\underline p^\infty]
\end{align*}
where the $\max$ and $\min$ are interpreted in a component-wise fashion. Thus, since $\sigma^*$ satisfies condition \eqref{set:A:star},
\begin{align*}
    \underline p^\infty = T[\underline p^\infty] = \max_{\sigma\in\Sigma_s}\min_{\xi\in\Xi_s} T_{\sigma,\xi}[\underline p^\infty] = \min_{\xi\in\Xi_s} T_{\sigma^*,\xi}[\underline p^\infty] \le r + \widetilde M_\xi^1 \underline p^\infty. 
\end{align*}
By applying inductively the same argument, we get 
\begin{align*}% \label{eq:proof_aux}   
 \underline p^\infty \le r +  \bigg( \sum_{k = 1}^{K-1} \widetilde M_\xi^{k} \bigg) r + \bigg( \prod_{k = 1}^K \widetilde M_\xi^k\bigg)\underline p^\infty  \quad \forall K\in\naturals, \xi\in\Xi.
\end{align*}
Since the $i$-th entry in the last term of this bound %\eqref{eq:proof_aux} 
equals 
\begin{align*}
    \sum_{q'\in Q}P_\xi^{q_i,\policy^*}[\pathmdp(K) \in q']\underline p^\infty(q') = \sum_{q'\in Q_{\rm reach}}P_\xi^{q_i,\policy^*}[\pathmdp(K) \in q']\underline p^\infty(q'),
\end{align*}
%
%for all $i\in[|Q|]$. 
the last term converges to $0$ as $K\to\infty$ by the definition of $Q_{\rm reach}$ and condition \eqref{proper:strategy}. Furthermore, the sum of the first two terms converges to $V_{\sigma^*, \xi}$. As a result, $\underline p^\infty(q) \le \inf_{\xi\in\Xi} V_{\sigma^*, \xi}(q)$ for all $q\in Q$.
%Recalling that for any strategy $\sigma$ and adversary $\xi\in\Xi$, $V_{\sigma, \xi}(q)$ is the probability of eventually reaching the target from $q$, we have by the definition of $\underline p^\infty$ that $\underline p^\infty(q) \ge \inf_{\xi\in\Xi}V_{\sigma, \xi}(q)$. 
On the other hand, we have from \eqref{value:function:vs:worst:case:reachability} that 
$\underline p^\infty(q) \ge \inf_{\xi\in\Xi} V_{\sigma^*, \xi}(q)$ and we conclude that $\underline p^\infty(q) = \inf_{\xi\in\Xi} V_{\sigma^*, \xi}(q)$, %for all $q\in Q$, 
namely, that $\sigma^*$ is optimal. 
\end{proof}

\begin{proof}[Proof of Theorem~\ref{thm:optimal_strategy_unbounded_horizon}]

We begin by proving that $\sigma^*$ is well-defined. This is the case if $\sigma^*$ is defined for all states, which is equivalent to having $Q^{m_{\max}} = Q_{\rm reach}$. Let $\sigma'\in\Sigma_s$ be a stationary optimal strategy, namely, that  $V_{\sigma'} = \underline p^\infty$ holds. Therefore, the sequence $\{V^k\}_{k\in\naturals_0}$ obtained via robust dynamic programming as in \eqref{eq:robust_value_iteration_lower_bound} with fixed strategy $\sigma'$ converges monotonically to $\underline p^\infty$. Now we start an induction argument by assuming that for $k\in\naturals_0$, $V^{k}(q) > 0$ implies $q\in Q^{k}$, for $q\in Q$, which holds for $k = 0$. If the previous condition holds at iteration $k$, then for all $q\in Q\setminus Q^k$ we obtain that
\begin{align*}
    V^{k+1}(q) &= \min_{\gamma\in\widetilde{\Gamma}_{q,\sigma'(q)}}\sum_{q'\in Q}\gamma(q')V^k(q') = \min_{\gamma\in\widetilde{\Gamma}_{q,\sigma'(q)}}\sum_{q'\in Q^{k}}\gamma(q')V^k(q')\\
    &\le \max_{a\in A^*(q)}\min_{\gamma\in\widetilde{\Gamma}_{q,a}}\sum_{q'\in Q^k}\gamma(q')V^k(q') \le \max_{a\in A^*(q)}\min_{\gamma\in\widetilde{\Gamma}_{q,a}}\sum_{q'\in Q^k}\gamma(q'),
\end{align*}
where the first inequality follows from the last statement of Proposition~\ref{prop:sufficient_conditions_optimal_strategy_unbounded_horizon}.
This result, together with monotonicity of $\{V^k\}_{k\in\naturals_0}$, guarantees that at iteration $k+1$, $V^{k+1}(q) > 0$ implies $q\in Q^{k+1}$ for all $q\in Q$. Thus the induction argument holds and we get that $\lim_{k\to\infty}Q^k = Q^{m_{\max}} = Q_{\rm reach}$.

Next, to prove optimality of $\sigma^*$, it suffices to show that $\sigma^*$ satisfies the two conditions of Proposition~\ref{prop:sufficient_conditions_optimal_strategy_unbounded_horizon}. Condition~\eqref{set:A:star} is satisfied directly by the construction of $\sigma^*$.
%\IG{note that, as a consequence of Lemma~\ref{lemma:probability_reach_unchanged}, $Q^{m_{\max}} = Q_{\rm reach}$ (ADD REF) which, together with the fact that $A^*(q)=A(q)$ for all $q\in Q_{\rm tgt}\cup (Q\setminus Q_{\rm reach})$ makes $\sigma^*$ well-defined for all states.
In the following, we show that $\sigma^*$ also satisfies condition~\eqref{proper:strategy}. The proof follows closely that of \cite[Theorem 10.25]{baier2008principles}. Let $\widetilde{\mathcal{M}}_{\sigma^*,\xi}$ denote the time-varying Markov Chain corresponding to our strategy $\sigma^*$ and an arbitrary adversary $\xi\in \Xi$. To show that the paths starting at any initial condition eventually exit $Q_{\rm{reach}}$ and remain outside forever with probability one, we equivalently show that its complement, namely, that $Q_{\rm{reach}}$ is visited infinitely often, has probability zero. Note that by definition of $\sigma^*$ via backward reachability, all states $q \in Q_{\rm{reach}}$ have nonzero probability of eventually reaching $Q_{\rm{tgt}}$ and thereafter $q_u$ under $\sigma^*$ and $\xi$. This implies that, for each $q \in Q_{\rm{reach}}$, there exists a path fragment that reaches $q_u$ and which has positive probability since it is a finite fragment. Let $p > 0$ be a uniform lower bound on this probability across all states $q \in Q_{\rm{reach}}$. 
%Denote the previous event by $E$.
%Then, similarly to the proof of \cite[Theorem 10.25]{baier2008principles}, 
%let $E_k$ denote the event ``$Q\setminus Q_{\rm{sink}}$ is visited at least $k$ times, but never $\{q_u\}$'', for all $k\in\naturals$.
Then, the probability of the event ``$Q_{\rm{reach}}$ is visited at least $k$ times, but the corresponding path fragment that reaches $q_u$ is never taken'' 
%$E_k$
is upper bounded by $(1-p)^k$. 
%Noting that $E_1 \supseteq E_2 \supseteq \dots \supseteq E$ and
Taking the limit as $k\to\infty$, we obtain that the probability that $Q_{\rm{reach}}$ is visited infinitely often is upper bounded by $\lim_{k\to\infty}(1-p)^k = 0$, meaning that all paths eventually reach $Q\setminus Q_{\rm{reach}}$ and remain there forever with probability one.
Since the choice of the adversary was arbitrary, we conclude that condition~\ref{proper:strategy} is also fulfilled, which completes the proof.
\end{proof}
\section{Proof of Theorem~\ref{thm:dual_problem}}
\label{app:proof_dual}

To facilitate the readability of the proof, we use the shorthand notation $p_i\equiv\underline p^k(q_i)$, $\underline P_j \equiv\underline P(q,a,q_j)$, $\overline P_j \equiv\overline P(q,a,q_j)$, $\overline{\mathcal N}\equiv\overline{\mathcal N}^{q,a}$ and $\overline{\mathcal N}_W\equiv\overline{\mathcal N}_W^{q,a}$.

\begin{proof}[Proof of Theorem~\ref{thm:dual_problem}]
Concavity of $G$ follows by standard arguments from convex analysis  \cite{boyd2004convex}. In particular, the functions $h_j$ are concave as the pointwise minimum of affine functions and $\min\{\underline P_j(\lambda + h_j(\mu)),\overline P_j(\lambda + h_j(\mu))\}$ is in turn concave as the minimum of concave functions. Thus, $G$ is also concave as the sum of concave functions.

To obtain~\eqref{eq:dual_hscc}, we first eliminate $\gamma$ and $\widehat \gamma$ from \eqref{eq:LP_cost}-\eqref{eq:feasible set_robust_MDP_cost} and get the equivalent problem  
\begin{equation*}
\begin{aligned}
\min_{\pi_{ij}} \quad \sum_{i \in \overline{\mathcal N}_W,j \in\overline{\mathcal N}}\pi_{ij} & p_i\\
\textrm{s.t.} \quad \sum_{i \in \overline{\mathcal N}_W}\pi_{ij} & \ge \underline P_j  & \forall j\in\overline{\mathcal N} \\
 \sum_{i \in \overline{\mathcal N}_W}- \pi_{ij} & \ge - \overline P_j & \forall j\in\overline{\mathcal N} \\
  \sum_{i \in \overline{\mathcal N}_W,j\in\overline{\mathcal N}}-c(q_i,q_j)\pi_{ij} & \ge - \epsilon \\
  \sum_{i \in \overline{\mathcal N}_W,j\in\overline{\mathcal N}}\pi_{ij} & = 1 \\
 \pi_{ij} & \ge 0 & \forall i\in \overline{\mathcal N}_W,j\in\overline{\mathcal N}
\end{aligned}
\end{equation*}
By denoting $\alpha:=(\alpha_1,\ldots,\alpha_N)$, $\beta:=(\beta_1,\ldots,\beta_N)$, $\mu$, and $\lambda$ the corresponding dual variables of all but the last set of constraints, we obtain as in \cite[Page 142]{bertsimas1997introduction} the dual linear problem 
\begin{align*}
\max_{\alpha,\beta,\mu\ge 0,\lambda} & \sum_{j \in\overline{\mathcal N}}  \big(\alpha_j\underline P_j-\beta_j\overline P_j\big)-\mu\epsilon+\lambda \\
\textrm{s.t.} \quad & \alpha_j-\beta_j-\mu c(q_i,q_j)+\lambda \le p_i \qquad  \forall i\in \overline{\mathcal N}_W,j\in\overline{\mathcal N}.
\end{align*}
Since the feasible set of the primal problem is nonempty and compact, strong duality holds (cf. \cite[Theorem 4.4]{bertsimas1997introduction}). The constraints of the dual problem are equivalently written 
\begin{align*}
     - \alpha_j + \beta_j + \min_{i\in \overline{\mathcal N}_W} \{p_i + \mu c(q_i,q_j)\} - \lambda \ge 0 \qquad  \forall j\in\overline{\mathcal N} 
\end{align*}
and since $\beta_j\ge 0$, they can be cast in the form 
\begin{align*}
     \beta_j \ge \max\{0, \alpha_j - \min_{i\in \overline{\mathcal N}_W} \{p_i + \mu c(q_i,q_j)\}+ \lambda\} \qquad  \forall j\in\overline{\mathcal N}.
\end{align*}
Taking into account that the objective function of the dual problem is decreasing in each $\beta_j$, its optimal value is the same as that of the problem   
\begin{align}
    & \max_{\alpha,\mu\ge 0,\lambda}  \sum_{j \in\overline{\mathcal N}} \big(\alpha_j\underline P_j - \max\{0, \alpha_j- \min_{i\in\overline{\mathcal N}_W} \{p_i + \mu c(q_i,q_j)\}+ \lambda\}\overline P_j \big)-\mu\epsilon+\lambda \nonumber \\
    & \qquad = \max_{\alpha,\mu\ge 0,\lambda}\sum_{j \in\overline{\mathcal N}}\big(\alpha_j\underline P_j + \min\{0, -\alpha_j+h_j(\mu) - \lambda\}\overline P_j \big)-\mu\epsilon+\lambda \nonumber \\
    & \qquad = \max_{\alpha,\mu\ge 0,\lambda}\sum_{j \in\overline{\mathcal N}}\min\{\alpha_j\underline P_j, \alpha_j(\underline P_j-\overline P_j) + \overline P_j(h_j(\mu) - \lambda)\}-\mu\epsilon+\lambda, \label{eq:eliminate_alpha} 
\end{align}
where we used \eqref{eq:dual_hscc1} in the first equality.
Next, since each $\overline P_j>0$, the corresponding function  
\begin{align*}
\alpha_j\mapsto \varphi_j(\alpha_j):=\min\{\alpha_j\underline P_j, \alpha_j(\underline P_j-\overline P_j) + \overline P_j(h_j(\mu)- \lambda)\}
\end{align*}
consists of two affine branches that intersect at $h_j(\mu)- \lambda$. In particular, over the left segment $\alpha_j\mapsto \alpha_j\underline P_j$ where $\alpha_j\le h_j(\mu)-\lambda$, the slope is non-negative, and it is non-positive over the right segment $\alpha_j\mapsto \alpha_j(\underline P_j-\overline P_j) + \overline P_j(h_j(\mu)- \lambda)$ where $\alpha_j\ge h_j(\mu)-\lambda$. Thus, the maximum of the function $\varphi_j$ is attained at $h_j(\mu)-\lambda$ when $h_j(\mu)-\lambda\ge 0$ and equals $(h_j(\mu)-\lambda)\overline P_j$, and at $0$ when $h_j(\mu)-\lambda<0$ where it is equal to $(h_j(\mu)-\lambda)\underline P_j$. Namely, we have
\begin{align*}
\max_{\alpha_j\ge 0}\varphi_j(\alpha_j)&=\begin{cases}
(h_j(\mu)-\lambda)\overline P_j &{\rm if}\;h_j(\mu)-\lambda\ge 0 \\
(h_j(\mu)-\lambda)\underline P_j & {\rm if}\;h_j(\mu)-\lambda<0
\end{cases}
\\
&=\min\{(h_j(\mu)-\lambda)\overline P_j,(h_j(\mu)-\lambda)\underline P_j\},
\end{align*}
which by virtue of \eqref{eq:eliminate_alpha} implies the dual reformulation~\eqref{eq:dual_hscc}.

To prove the last assertion of the theorem, fix an arbitrary value of $\mu\ge 0$ and consider the function $\lambda\mapsto G_\mu(\lambda)\equiv G(\lambda,\mu)$ with $G$ as given in \eqref{eq:dual_hscc}. Notice that each $\min$ term in the expression of $G_\mu$ is a piecewise affine function of $\lambda$ with two segments and a breakpoint at $h_j(\mu)$ (assuming without loss of generality that $\underline P_j \neq \overline P_j$). Since $G_\mu$ is the sum of piecewise affine functions, it is also piecewise affine and its breakpoints are included in the breakpoints of its components, i.e., in the set $\{h_j(\mu)\}_{j\in\overline{\mathcal N}}$. Taking further into account that $G_\mu$ is upper bounded by the optimal value of problem \eqref{eq:dual_hscc}-\eqref{eq:dual_hscc:fnc}, which is finite by strong duality, its maximum is necessarily attained at some of its breakpoints. This concludes the proof.
\end{proof}

\bibliographystyle{IEEEtran}\bibliography{ref2}

% Generated by IEEEtran.bst, version: 1.14 (2015/08/26)
\begin{thebibliography}{10}
\providecommand{\url}[1]{#1}
\csname url@samestyle\endcsname
\providecommand{\newblock}{\relax}
\providecommand{\bibinfo}[2]{#2}
\providecommand{\BIBentrySTDinterwordspacing}{\spaceskip=0pt\relax}
\providecommand{\BIBentryALTinterwordstretchfactor}{4}
\providecommand{\BIBentryALTinterwordspacing}{\spaceskip=\fontdimen2\font plus
\BIBentryALTinterwordstretchfactor\fontdimen3\font minus
  \fontdimen4\font\relax}
\providecommand{\BIBforeignlanguage}[2]{{%
\expandafter\ifx\csname l@#1\endcsname\relax
\typeout{** WARNING: IEEEtran.bst: No hyphenation pattern has been}%
\typeout{** loaded for the language `#1'. Using the pattern for}%
\typeout{** the default language instead.}%
\else
\language=\csname l@#1\endcsname
\fi
#2}}
\providecommand{\BIBdecl}{\relax}
\BIBdecl

\bibitem{yin2010hybrid}
G.~Yin and C.~Zhu, \emph{Hybrid switching diffusions: properties and
  applications}.\hskip 1em plus 0.5em minus 0.4em\relax Springer New York,
  2010, vol.~63.

\bibitem{luna2015asymptotically}
R.~Luna, M.~Lahijanian, M.~Moll, and L.~E. Kavraki, ``Asymptotically optimal
  stochastic motion planning with temporal goals,'' in \emph{Algorithmic
  Foundations of Robotics XI}.\hskip 1em plus 0.5em minus 0.4em\relax Springer,
  2015, pp. 335--352.

\bibitem{cauchi2019efficiency}
N.~Cauchi, L.~Laurenti, M.~Lahijanian, A.~Abate, M.~Kwiatkowska, and
  L.~Cardelli, ``Efficiency through uncertainty: Scalable formal synthesis for
  stochastic hybrid systems,'' in \emph{Proceedings of the 2019 22nd ACM
  International Conference on Hybrid Systems: Computation and Control}.\hskip
  1em plus 0.5em minus 0.4em\relax Montreal, QC, Canada: ACM, Apr. 2019, pp.
  240--251.

\bibitem{rahimian2019distributionally}
H.~Rahimian and S.~Mehrotra, ``Distributionally robust optimization: A
  review,'' \emph{arXiv preprint arXiv:1908.05659}, 2019.

\bibitem{mohajerin2018data}
P.~Mohajerin~Esfahani and D.~Kuhn, ``Data-driven distributionally robust
  optimization using the {W}asserstein metric: Performance guarantees and
  tractable reformulations,'' \emph{Mathematical Programming}, vol. 171, no.~1,
  pp. 115--166, 2018.

\bibitem{gao2022distributionally}
R.~Gao and A.~Kleywegt, ``Distributionally robust stochastic optimization with
  wasserstein distance,'' \emph{Mathematics of Operations Research}, 2022.

\bibitem{kupferman:FMSD:2001}
O.~Kupferman and M.~Y. Vardi, ``Model checking of safety properties,''
  \emph{Formal Methods in System Design}, vol.~19, pp. 291--314, 2001.

\bibitem{de2013linear}
G.~De~Giacomo and M.~Y. Vardi, ``Linear temporal logic and linear dynamic logic
  on finite traces,'' in \emph{IJCAI'13 Proceedings of the Twenty-Third
  international joint conference on Artificial Intelligence}.\hskip 1em plus
  0.5em minus 0.4em\relax Association for Computing Machinery, 2013, pp.
  854--860.

\bibitem{baier2003model}
C.~Baier, B.~Haverkort, H.~Hermanns, and J.-P. Katoen, ``Model-checking
  algorithms for continuous-time {Markov} chains,'' \emph{Software Engineering,
  IEEE Transactions on}, vol.~29, no.~6, pp. 524--541, 2003.

\bibitem{nilim2005robust}
A.~Nilim and L.~El~Ghaoui, ``Robust control of markov decision processes with
  uncertain transition matrices,'' \emph{Operations Research}, vol.~53, no.~5,
  pp. 780--798, 2005.

\bibitem{iyengar2005robust}
G.~N. Iyengar, ``Robust dynamic programming,'' \emph{Mathematics of Operations
  Research}, vol.~30, no.~2, pp. 257--280, 2005.

\bibitem{givan2000bounded}
R.~Givan, S.~Leach, and T.~Dean, ``Bounded-parameter {Markov} decision
  processes,'' \emph{Artificial Intelligence}, vol. 122, no. 1-2, pp. 71--109,
  2000.

\bibitem{lahijanian2015formal}
M.~Lahijanian, S.~B. Andersson, and C.~Belta, ``Formal verification and
  synthesis for discrete-time stochastic systems,'' \emph{IEEE Transactions on
  Automatic Control}, vol.~60, no.~8, pp. 2031--2045, 2015.

\bibitem{gracia2023distributionally}
I.~Gracia, D.~Boskos, L.~Laurenti, and M.~Mazo~Jr, ``Distributionally robust
  strategy synthesis for switched stochastic systems,'' in \emph{Proceedings of
  the 26th ACM International Conference on Hybrid Systems: Computation and
  Control}, 2023, pp. 1--10.

\bibitem{santoyo2021barrier}
C.~Santoyo, M.~Dutreix, and S.~Coogan, ``A barrier function approach to
  finite-time stochastic system verification and control,'' \emph{Automatica},
  vol. 125, p. 109439, 2021.

\bibitem{Lahijanian:CDC:2012}
M.~Lahijanian, S.~B. Andersson, and C.~Belta, ``Approximate {Markov}ian
  abstractions for linear stochastic systems,'' in \emph{IEEE Conference on
  Decision and Control (CDC),}.\hskip 1em plus 0.5em minus 0.4em\relax IEEE,
  2012, pp. 5966--5971.

\bibitem{lavaei2022dissipativity}
A.~Lavaei and M.~Zamani, ``From dissipativity theory to compositional synthesis
  of large-scale stochastic switched systems,'' \emph{IEEE Transactions on
  Automatic Control}, 2022.

\bibitem{dutreix2022abstraction}
M.~Dutreix, J.~Huh, and S.~Coogan, ``Abstraction-based synthesis for stochastic
  systems with omega-regular objectives,'' \emph{Nonlinear Analysis: Hybrid
  Systems}, vol.~45, p. 101204, 2022.

\bibitem{wolff2012robust}
E.~M. Wolff, U.~Topcu, and R.~M. Murray, ``Robust control of uncertain markov
  decision processes with temporal logic specifications,'' in \emph{2012 IEEE
  51st IEEE Conference on Decision and Control (CDC)}.\hskip 1em plus 0.5em
  minus 0.4em\relax IEEE, 2012, pp. 3372--3379.

\bibitem{skovbekk2023formal}
J.~Skovbekk, L.~Laurenti, E.~Frew, and M.~Lahijanian, ``Formal abstraction of
  general stochastic systems via noise partitioning,'' \emph{IEEE Control
  Systems Letters}, 2023.

\bibitem{laurenti2020formal}
L.~Laurenti, M.~Lahijanian, A.~Abate, L.~Cardelli, and M.~Kwiatkowska, ``Formal
  and efficient synthesis for continuous-time linear stochastic hybrid
  processes,'' \emph{IEEE Transactions on Automatic Control}, vol.~66, no.~1,
  pp. 17--32, 2020.

\bibitem{jackson2020safety}
J.~Jackson, L.~Laurenti, E.~Frew, and M.~Lahijanian, ``Safety verification of
  unknown dynamical systems via gaussian process regression,'' \emph{IEEE CDC},
  2020.

\bibitem{jackson2021formal}
------, ``Formal verification of unknown dynamical systems via gaussian process
  regression,'' \emph{arXiv preprint arXiv:2201.00655}, 2021.

\bibitem{koutsoukos2006computational}
X.~Koutsoukos and D.~Riley, ``Computational methods for reachability analysis
  of stochastic hybrid systems,'' in \emph{International Workshop on Hybrid
  Systems: Computation and Control}.\hskip 1em plus 0.5em minus 0.4em\relax
  Springer, 2006, pp. 377--391.

\bibitem{adams2022formal}
S.~Adams, M.~Lahijanian, and L.~Laurenti, ``Formal control synthesis for
  stochastic neural network dynamic models,'' \emph{IEEE Control Systems
  Letters}, 2022.

\bibitem{jackson2021strategy}
J.~Jackson, L.~Laurenti, E.~Frew, and M.~Lahijanian, ``Strategy synthesis for
  partially-known switched stochastic systems,'' in \emph{Proceedings of the
  24th International Conference on Hybrid Systems: Computation and Control},
  2021, pp. 1--11.

\bibitem{reed2023promises}
R.~Reed, L.~Laurenti, and M.~Lahijanian, ``Promises of deep kernel learning for
  control synthesis,'' \emph{IEEE Control Systems Letters}, 2023.

\bibitem{badings2022sampling}
T.~S. Badings, A.~Abate, N.~Jansen, D.~Parker, H.~A. Poonawala, and
  M.~Stoelinga, ``Sampling-based robust control of autonomous systems with
  non-gaussian noise,'' in \emph{Proceedings of the AAAI Conference on
  Artificial Intelligence}, 2022, pp. 9669--9678.

\bibitem{mathiesen2023inner}
F.~B. Mathiesen, L.~Romao, S.~C. Calvert, A.~Abate, and L.~Laurenti, ``Inner
  approximations of stochastic programs for data-driven stochastic barrier
  function design,'' \emph{arXiv preprint arXiv:2304.04505}, 2023.

\bibitem{AS:17}
A.~Shapiro, ``Distributionally robust stochastic programming,'' \emph{SIAM
  Journal on Optimization}, vol.~27, no.~4, pp. 2258--2275, 2017.

\bibitem{ED-YY:10}
E.~Delage and Y.~Ye, ``Distributionally robust optimization under moment
  uncertainty with application to data-driven problems,'' \emph{Operations
  research}, vol.~58, no.~3, pp. 595--612, 2010.

\bibitem{IP:07}
I.~Popescu, ``Robust mean-covariance solutions for stochastic optimization,''
  \emph{Operations Research}, vol.~55, no.~1, pp. 98--112, 2007.

\bibitem{GCC-LEG:06}
G.~C. Calafiore and L.~E. Ghaoui, ``On distributionally robust
  chance-constrained linear programs,'' \emph{Journal of Optimization Theory
  and Applications}, vol. 130, no.~1, pp. 1--22, 2006.

\bibitem{JB-KM:19}
J.~Blanchet and K.~Murthy, ``Quantifying distributional model risk via optimal
  transport,'' \emph{Mathematics of Operations Research}, vol.~44, no.~2, pp.
  565--600, 2019.

\bibitem{JB-KM-FZ:22}
J.~Blanchet, K.~Murthy, and F.~Zhang, ``Optimal transport-based
  distributionally robust optimization: Structural properties and iterative
  schemes,'' \emph{Mathematics of Operations Research}, vol.~47, no.~2, pp.
  1500--1529, 2022.

\bibitem{FS:15}
F.~Santambrogio, ``Optimal transport for applied mathematicians,''
  \emph{Birk{\"a}user, NY}, vol.~55, no. 58-63, p.~94, 2015.

\bibitem{fournier2015rate}
N.~Fournier and A.~Guillin, ``On the rate of convergence in wasserstein
  distance of the empirical measure,'' \emph{Probability Theory and Related
  Fields}, vol. 162, no.~3, pp. 707--738, 2015.

\bibitem{boskos2023high}
D.~Boskos, J.~Cort{\'e}s, and S.~Mart{\'\i}nez, ``High-confidence data-driven
  ambiguity sets for time-varying linear systems,'' \emph{IEEE Transactions on
  Automatic Control}, 2023.

\bibitem{AH-IY:21}
A.~Hakobyan and I.~Yang, ``Wasserstein distributionally robust motion control
  for collision avoidance using conditional value-at-risk,'' \emph{IEEE
  Transactions on Robotics}, vol.~38, no.~2, pp. 939--957, 2021.

\bibitem{xu2010distributionally}
H.~Xu and S.~Mannor, ``Distributionally robust markov decision processes,''
  \emph{Advances in Neural Information Processing Systems}, vol.~23, 2010.

\bibitem{yang2017convex}
I.~Yang, ``A convex optimization approach to distributionally robust markov
  decision processes with wasserstein distance,'' \emph{IEEE control systems
  letters}, vol.~1, no.~1, pp. 164--169, 2017.

\bibitem{JGC-CK:21}
J.~G. Clement and C.~Kroer, ``First-order methods for {W}asserstein
  distributionally robust mdp,'' in \emph{International Conference on Machine
  Learning}.\hskip 1em plus 0.5em minus 0.4em\relax PMLR, 2021, pp. 2010--2019.

\bibitem{puggelli2013polynomial}
A.~Puggelli, W.~Li, A.~L. Sangiovanni-Vincentelli, and S.~A. Seshia,
  ``Polynomial-time verification of pctl properties of mdps with convex
  uncertainties,'' in \emph{International Conference on Computer Aided
  Verification}.\hskip 1em plus 0.5em minus 0.4em\relax Springer, 2013, pp.
  527--542.

\bibitem{weininger2019satisfiability}
M.~Weininger, T.~Meggendorfer, and J.~K{\v{r}}et{\'\i}nsk{\`y},
  ``Satisfiability bounds for $\omega$-regular properties in bounded-parameter
  markov decision processes,'' in \emph{2019 IEEE 58th Conference on Decision
  and Control (CDC)}.\hskip 1em plus 0.5em minus 0.4em\relax IEEE, 2019, pp.
  2284--2291.

\bibitem{CV:03}
C.~Villani, \emph{Topics in optimal transportation}.\hskip 1em plus 0.5em minus
  0.4em\relax American Mathematical Society, 2003, vol.~58.

\bibitem{bertsekas2004stochastic}
D.~Bertsekas and S.~E. Shreve, \emph{Stochastic optimal control: the
  discrete-time case}.\hskip 1em plus 0.5em minus 0.4em\relax Athena
  Scientific, 1996, vol.~5.

\bibitem{wiesemann2013robust}
W.~Wiesemann, D.~Kuhn, and B.~Rustem, ``Robust markov decision processes,''
  \emph{Mathematics of Operations Research}, vol.~38, no.~1, pp. 153--183,
  2013.

\bibitem{dutreix2020specification}
M.~Dutreix and S.~Coogan, ``Specification-guided verification and abstraction
  refinement of mixed monotone stochastic systems,'' \emph{IEEE Transactions on
  Automatic Control}, vol.~66, no.~7, pp. 2975--2990, 2020.

\bibitem{el2005robust}
L.~El~Ghaoui and A.~Nilim, ``Robust solutions to markov decision problems with
  uncertain transition matrices,'' \emph{Operations Research}, vol.~53, no.~5,
  pp. 780--798, 2005.

\bibitem{puterman2014markov}
M.~L. Puterman, \emph{Markov decision processes: discrete stochastic dynamic
  programming}.\hskip 1em plus 0.5em minus 0.4em\relax John Wiley \& Sons,
  2014.

\bibitem{baier2008principles}
C.~Baier, J.-P. Katoen \emph{et~al.}, \emph{Principles of model
  checking}.\hskip 1em plus 0.5em minus 0.4em\relax MIT press Cambridge, 2008.

\bibitem{ramani2022robust}
S.~Ramani and A.~Ghate, ``Robust markov decision processes with data-driven,
  distance-based ambiguity sets,'' \emph{SIAM Journal on Optimization},
  vol.~32, no.~2, pp. 989--1017, 2022.

\bibitem{tabuada2008approximate}
P.~Tabuada, ``An approximate simulation approach to symbolic control,''
  \emph{IEEE Transactions on Automatic Control}, vol.~53, no.~6, pp.
  1406--1418, 2008.

\bibitem{vazquez2018learning}
M.~Vazquez-Chanlatte, S.~Jha, A.~Tiwari, M.~K. Ho, and S.~Seshia, ``Learning
  task specifications from demonstrations,'' \emph{Advances in neural
  information processing systems}, vol.~31, 2018.

\bibitem{blackwell1967positive}
D.~Blackwell, ``Positive dynamic programming,'' in \emph{Proceedings of the 5th
  Berkeley symposium on Mathematical Statistics and Probability}, vol.~1.\hskip
  1em plus 0.5em minus 0.4em\relax University of California Press Berkeley,
  1967, pp. 415--418.

\bibitem{boyd2004convex}
S.~Boyd and L.~Vandenberghe, \emph{Convex optimization}.\hskip 1em plus 0.5em
  minus 0.4em\relax Cambridge university press, 2004.

\bibitem{bertsimas1997introduction}
D.~Bertsimas and J.~N. Tsitsiklis, \emph{Introduction to linear
  optimization}.\hskip 1em plus 0.5em minus 0.4em\relax Athena scientific,
  1997, vol.~6.

\end{thebibliography}

\end{document}